\newif\ifdouble
\newif\ifFullVersion
\theoremstyle{plain}
\newtheorem{theorem}{Theorem}
\newtheorem{lemma}{Lemma}
\newtheorem{prop}{Proposition}
\newtheorem{corollary}{Corollary}
\theoremstyle{definition}
\newtheorem{definition}{Definition}
\theoremstyle{remark}
\newcommand{\off}[1]{}
\newcommand{\cPNZ}{{\cal PNZ}}
\newcommand{\cNZ}{{\cal NZ}}
\newcommand{\rev}[1]{\textcolor[rgb]{0.00,0.00,0.00}{#1}}
\begin{document}
	\title{Serial Quantization for Sparse Time Sequences}
	\markboth{}{}
	\author{%
		\IEEEauthorblockN{Alejandro Cohen\IEEEauthorrefmark{1},
			Nir Shlezinger\IEEEauthorrefmark{2},
			Salman Salamatian\IEEEauthorrefmark{1},
			Yonina C. Eldar\IEEEauthorrefmark{2},
			and Muriel M\'edard\IEEEauthorrefmark{1}}\\
		\IEEEauthorblockA{\IEEEauthorrefmark{1}%
			Research Laboratory of Electronics, MIT, Cambridge, MA, USA,
			\{cohenale, salmansa, medard\}@mit.edu}\\
		\IEEEauthorblockA{\IEEEauthorrefmark{2}%
			Math and CS, Weizmann Institute of Science, Rehovot, Israel,
			\{nir.shlezinger, yonina.eldar\} @weizmann.ac.il}
		\thanks{
			Parts of this work were presented in the Allerton Annual Conference on Communications, Control, and Computing, July 2019.
			This project received funding from the Benoziyo Endowment Fund for the Advancement of Science, the	Estate of Olga Klein - Astrachan, the European Union’s Horizon 2020 research and innovation program under grant No. 646804-ERC-COG-BNYQ, and from the Israel Science Foundation under grant No. 0100101.
		}
		\off{\vspace{-10mm}}}
	\maketitle
	\pagestyle{plain}
	\thispagestyle{plain}
	\definecolor{NewColor}{rgb}{0,0,0.4}

\newcommand{\myVec}[1]{{\boldsymbol{#1}}}
\newcommand{\myMat}[1]{{\boldsymbol{#1}}}
\newcommand{\mySet}[1]{\mathcal{#1}}

\newcommand{\myDetVec}[1]{\myVec{\lowercase{#1}}}
\newcommand{\myRandVec}[1]{\myVec{\lowercase{#1}}}
\newcommand{\myDetMat}[1]{\myMat{\uppercase{#1}}}
\newcommand{\myRandMat}[1]{\myMat{\uppercase{#1}}}

\newcommand{\E}{\mathbb{E}}		 				
\newcommand{\myW}{{\myRandVec{W}}}			 		
\newcommand{\myY}{{\myRandVec{Y}}}			 		
\newcommand{\myZ}{{\myRandVec{Z}}}			 		
\newcommand{\myX}{{\myRandVec{X}}}			 		
\newcommand{\myV}{{\myRandVec{V}}}			 		
\newcommand{\myS}{{\myDetVec{s}}}			 		
\newcommand{\myI}{{\myDetMat{i}}}			 		
\newcommand{\myA}{{\myDetMat{a}}}
\newcommand{\myB}{{\myDetMat{b}}}
\newcommand{\myAT}{\tilde{\myA}}
\newcommand{\myBT}{\tilde{\myB}}			 					 		
\newcommand{\myAB}{\bar{\myA}}
\newcommand{\myYmat}{{\myRandMat{Y}}}			 	
\newcommand{\myYvec}{\underline{\myY}}			 	
\newcommand{\myQvec}{\underline{\myVec{q}}}			 	
\newcommand{\mySmat}{{\myMat{\Theta}}}			 	
\newcommand{\myWmat}{{\myRandMat{W}}}			 	
\newcommand{\myWvec}{\underline{\myW}}			 	
\newcommand{\Gmat}{{\myRandMat{G}}}			 		
\newcommand{\Gvec}{\underline{\myVec{g}}}			 		
\newcommand{\GmatRel}{{\myMat{g}}}			 		
\newcommand{\Hmat}{{\myRandMat{H}}}			 		
\newcommand{\Hvec}{\underline{\myVec{h}}}
\newcommand{\Dmat}{{\myDetMat{d}}}			 		
\newcommand{\Bmat}{{\myDetMat{f}}}			 		
\newcommand{\Phimat}{{\myMat{\Phi}}}			 		
\newcommand{\DmatRel}{\bar{\myDetMat{d}}}			 		
\newcommand{\BmatRel}{\bar{\myDetMat{f}}}			 		
\newcommand{\AggMat}{{\myRandMat{A}}}			 	
\newcommand{\Ymat}{\tilde{\myRandMat{Y}}}			
\newcommand{\Wmat}{\tilde{\myRandMat{W}}}			
\newcommand{\Smat}{{\myMat{\Theta}}}				
\newcommand{\myTheta}{\theta}
\newcommand{\SigW}{\sigma_W^2}						
\newcommand{\AntRatio}{\kappa}						
\newcommand{\Ncells}{n_c}							
\newcommand{\Nantennas}{\lenX}						
\newcommand{\Nusers}{\lenS}							
\newcommand{\NcellsSet}{\mySet{N}_c}				
\newcommand{\NusersSet}{\mySet{K}}				
\newcommand{\dcoeff}{d}								
\newcommand{\dcoeffRel}{d}								
\newcommand{\bcoeff}{f}								
\newcommand{\phicoeff}{\phi}								
\newcommand{\Tpilots}{\lenXtag}						
\newcommand{\TpilotsSet}{\mySet{L}}						
\newcommand{\Tdata}{\tau_d}							
\newcommand{\Tcoh}{\tau_c}							
\newcommand{\EstGmat}{\hat{\Gmat}}					
\newcommand{\ErrGmat}{\tilde{\Gmat}}				
\newcommand{\MPFunc}{\nu}								
\newcommand{\SemiCirc}{F}							
\newcommand{\UHmat}{ \myMat{M}}
\newcommand{\Sol}{s}
\newcommand{\Dist}{ \stackrel{d}{=}}
\newcommand{\AsConv}{\mathop{\longrightarrow}\limits^{\rm a.s.}}
\newcommand{\CDF}[1]{F_{#1}}
\newcommand{\Pdf}[1]{f_{ #1}}
\newcommand{\Psd}[1]{s_{#1}}
\newcommand{\Acorr}[1]{c_{#1}}
\newcommand{\PSD}[1]{\myMat{S}_{#1}}
\newcommand{\ACORR}[1]{\myMat{C}_{#1}}
\newcommand{\CorrMat}[1][ ]{\myMat{C}_{#1}}
\newcommand{\CovMat}[1]{\myMat{\Sigma}_{#1}}			
\newcommand{\CovMatExt}[1]{{\underline{\myMat{\Sigma}}}_{#1}}			
\newcommand{\maxDiag}{\sigma^2_{l}}
\newcommand{\bits}{b}
\newcommand{\SpaSize}{k}
\newcommand{\Rate}{R}
\newcommand{\Ratio}{r}
\newcommand{\AsymDist}{\mu} 
\newcommand{\lenX}{n}			 			
\newcommand{\lenZ}{P}			 			
\newcommand{\lenZT}{\tilde{\lenZ}}			 			
\newcommand{\lenZn}{m_p}
\newcommand{\lenZq}{m_q}
\newcommand{\lenSset}{\mySet{K}}			 			
\newcommand{\lenXset}{\mySet{N}}			 			
\newcommand{\lenT}{T}
\newcommand{\Quan}[2]{Q_{{#1}}^{{#2}}}
\newcommand{\LmmseMat}{\myMat{\Gamma}}
\newcommand{\LmmseMatT}{\tilde{\LmmseMat}}
\newcommand{\EmpSet}{\varnothing}
\newcommand{\DynRange}{\gamma}
\newcommand{\DynInt}[1][ ]{\Delta_{#1}}
\newcommand{\TilM}[1][ ]{\tilde{M}_{#1}}
\newcommand{\MyKappa}[1][]{\kappa_{#1}}
\newcommand{\Qnoise}{\myVec{e}}
\newcommand{\Wlevel}{\zeta}
\newcommand{\myEta}{\eta}
\newcommand{\DistG}{D_{\rm G}}
\newcommand{\MMSE}{^{\rm MMSE}}
\newcommand{\Opt}{^{\rm Opt}}
\newcommand{\op}{^{\rm o}}
\newcommand{\Ign}{^{\rm Ign}}
\newcommand{\ADC}{^{\rm HL}}
\newcommand{\sADC}{^{\rm sHL}}
\newcommand{\myObs}{\underline{\myObstag}}
\newcommand{\mySOI}{\underline{\mySOItag}}
\newcommand{\mySOIEst}{\underline{\mySOIEsttag}}
\newcommand{\myQ}{\myVec{q}}
\newcommand{\lenXtag}{L}
\newcommand{\lenS}{m}
\newcommand{\myObstag}{\myVec{y}}
\newcommand{\mySOItag}{\myVec{g}}
\newcommand{\mySOIEsttag}{\tilde{\mySOItag}}
\newcommand{\LmmseMattag}{{\LmmseMat}}
\newcommand{\eig}[1]{\lambda_{#1}}			
\newcommand{\eigT}[1]{\eig{#1}}
\newcommand{\CovYtag}{\CovMat{\myY_l}}
\newcommand{\myAtag}{\myA\op}
\newcommand{\myBtag}{\myB\op}
\newcommand{\Glevel}{\varphi}
\newcommand{\GlevelT}{\tilde{\varphi}}
\newcommand{\Plevel}{\Glevel}

\newcommand{\lenL}{l}			 			
\newcommand{\myBin}{\mySet{B}}
\newcommand{\myCodeword}{\myVec{c}}
\newcommand{\myCodewordT}{\tilde{\myVec{c}}}
\newcommand{\ScaQuant}{q}

\acrodef{bs}[BS]{base station}
\acrodef{mimo}[MIMO]{multiple-input multiple-output}
\acrodef{mac}[MAC]{multiple access channel}
\acrodef{dsp}[DSP]{digital signal processor}
\acrodef{ut}[UT]{user terminal}
\acrodef{cdf}[CDF]{cumulative distribution function}
\acrodef{pdf}[PDF]{probability density function}
\acrodef{ps}[PS]{pilot sequence}
\acrodef{se}[SE]{spectral efficiency}
\acrodef{mse}[MSE]{mean-squared error}
\acrodef{adc}[ADC]{analog-to-digital convertor}
\acrodef{dtft}[DTFT]{discrete-time Fourier transform}
\acrodef{dft}[DFT]{discrete Fourier transform}
\acrodef{nb}[NB]{narrowband}
\acrodef{dt}[DT]{discrete-time}
\acrodef{ct}[CT]{continuous-time}
\acrodef{evd}[EVD]{eigenvalue decomposition}
\acrodef{svd}[SVD]{singular valued decomposition}
\acrodef{soi}[SOI]{signal of interest}
\acrodef{awgn}[AWGN]{additive white Gaussian noise}
\acrodef{wss}[WSS]{wide-sense stationary}
\acrodef{mmse}[MMSE]{minimum \ac{mse}}
\acrodef{mi}[MI]{mutual information}
\acrodef{lmmse}[LMMSE]{linear MMSE}
\acrodef{map}[MAP]{maximum a-posteriori probability}
\acrodef{ml}[ML]{maximum likelihood}
\acrodef{isi}[ISI]{intersymbol interference}
\acrodef{snr}[SNR]{signal-to-noise ratio}
\acrodef{pc}[PC]{proper-complex}
\acrodef{cs}[CS]{compressed sensing}
\acrodef{psd}[PSD]{power spectral density}
\acrodef{ptp}[PtP]{point-to-point}
\acrodef{sinr}[SINR]{signal-to-interference-and-noise ratio}
\acrodef{pdf}[PDF]{probability density function}
\acrodef{rv}[RV]{random variable}
\acrodef{csi}[CSI]{channel state information}
\acrodef{sqrss}[SQuaTS]{serial quantization of sparse time sequences}
\acrodef{qiht}[QIHT]{quantized iterative hard thresholding}
\acrodef{fista}[FISTA]{fast iterative soft thresholding algorithm}  %

	\begin{abstract}
		Sparse signals are encountered in a broad range of applications. In order to process these signals using digital hardware, they must be first sampled and quantized using an \ac{adc}, which typically operates in a serial scalar manner.  In this work we propose a method for \ac{sqrss} inspired by group testing theory, which is designed to reliably and accurately quantize sparse signals acquired in a sequential manner using serial scalar \acp{adc}. Unlike previously proposed approaches which combine quantization and \ac{cs}, our  \ac{sqrss} scheme updates its representation on each incoming analog sample and does not require the complete signal to be observed and stored in analog prior to quantization. We characterize the asymptotic tradeoff between accuracy and quantization rate of \ac{sqrss} as well as its computational burden. We also propose a variation of \ac{sqrss}, which trades rate for computational efficiency.
		Next, we show how \ac{sqrss} can be naturally extended to distributed quantization scenarios, where a set of jointly sparse time sequences are acquired individually and processed jointly.
		Our numerical results demonstrate that \ac{sqrss} is capable of achieving substantially improved representation accuracy over previous \ac{cs}-based schemes without requiring the complete  set of analog signal samples to be observed prior to its quantization, making it an attractive approach for acquiring sparse time sequences.
	\end{abstract}

	\acresetall
	\section{Introduction}
	Quantization allows continuous-amplitude physical signals to be represented using discrete values and processed in digital hardware. Such continuous-to-discrete conversions  play an important role in digital signal processing systems \cite{gray1998quantization}. In theory, jointly mapping a set of samples via vector quantization yields the most accurate digital representation \cite[Ch. 10]{C10}. However, as such joint mappings are difficult to implement, quantization is most commonly carried out using \acp{adc}, which operate in a serial and scalar manner, namely, the analog signal is sampled and each incoming sample is sequentially mapped into a discrete representation using the same mapping \cite{kosonocky1999analog}. Since \acp{adc} operating at high frequencies are costly in terms of memory and power usage, it is often desirable to utilize low quantization rates, i.e., assign a limited number of bits per each input sample, inducing additional quantization error which degrades the  digital representation accuracy \cite[Ch. 23]{polyanskiy2014lecture}.
	
	The quantization error encountered under bit budget constraints can be mitigated by accounting for underlying structure or the system task. Such quantization systems are the focus of several recent works. For example, scalar quantization mappings designed to maximize the mutual information and Fisher information with respect to a statistically related quantity were studied in \cite{bhatt2018information} and \cite{barnes2019learning}, respectively. The work \cite{shlezinger2018hardware} showed that a quantization system using uniform \acp{adc} can approach the performance achievable using vector quantizers when the system task is not to recover the analog signal, but to estimate some lower-dimensional information embedded into it.
	This approach was extended to massive \ac{mimo} channel estimation with quantized outputs in \cite{shlezinger2018asymptotic} as well as to the recovery of quadratic functions in \cite{salamatian2019task}, while a data-driven implementation was proposed in \cite{shlezinger2019deep}. The systems proposed in \cite{shlezinger2018hardware,shlezinger2018asymptotic,salamatian2019task,shlezinger2019deep} used hybrid architectures, namely, allowed some constrained processing to be carried out in analog prior to quantization, in order to mitigate the error induced by bit-limited serial scalar \acp{adc}.
	
	Conventional quantization theory considers the acquisition of a discrete-time analog source into a digital form \cite{gray1998quantization}. In some practical applications, such as sensor networks, multiple signals are acquired in distinct physical locations, while their digital representation is utilized in some central processing device, resulting in a distributed quantization setup. The recovery of a single parameter from the acquired signals was considered in \cite{gubner1993distributed, lam1993design} and its extension to the recovery of a common source, known as the CEO problem, was studied in \cite{berger1996ceo, oohama1998rate}, see also \cite[Ch. 12]{el2011network}.  Joint recovery of sources acquired in a distributed manner was studied in \cite{shlezinger2019joint}, which focused on sampling, while \cite{saxena2006efficient, wernersson2009distributed} proposed non-uniform quantization mappings for the representation of multiple sources. Multivariate (vector) quantizers for arbitrary networks were considered in \cite{fleming2004network}.
	
	A common structure exhibited by physical signals is sparsity. Sparse signals are frequently encountered in various applications, ranging from biomedical and optical imaging \cite{wagner2012compressed, shechtman2014gespar} to radar \cite{rossi2014spatial} and communications \cite{berger2010application,  feizi2011power}. An important property of sparse signals is the fact that they can be perfectly reconstructed from a lower-dimensional projection without knowledge of the sparsity pattern. This property is  studied within the framework of \ac{cs} \cite{eldar2012compressed,duarte2011structured}, which considers the recovery of sparse signals from their lower dimensional projections.
	
	Recovery of sparse signals from quantized measurements is the focus of a large body of work \cite{jacques2013robust, boufounos20081,jacques2011dequantizing, gunturk2010sigma, kipnis2018single,boufounos2015quantization,saab2018quantization}. The most common approach studied in the literature is to first project the signal in the analog domain and then  quantize the compressed measurements, via one bit representation \cite{jacques2013robust, boufounos20081}, uniform quantization \cite{jacques2011dequantizing}, sigma-delta quantization \cite{gunturk2010sigma,saab2018quantization}, or vector source coding \cite{kipnis2018single}.
	A detailed survey and analysis of methods combining quantization and \ac{cs} can be found in \cite{boufounos2015quantization}.
	In the context of distributed systems, \ac{cs} for multiple signals acquired separately  was studied in \cite{sarvotham2005distributed,baron2009distributed,do2009distributed,patterson2014distributed,feizi2010compressive},  while \cite{shirazinia2014distributed,leinonen2018distributed} proposed vector quantization schemes for bit-constrained distributed \ac{cs}.
	%
	 Despite the similarity, there is a fundamental difference between distributed quantization of sparse signals and distributed \ac{cs} with quantized observations: In the quantization framework, the measurements are the sparse signals, while in \ac{cs} the observations are  a linear projection of the signals. Consequently, to utilize \ac{cs} methods, one must first have access to the complete signal in order to project it and then quantize, imposing a major drawback when acquiring time sequences. On the other hand, a sequential approach allows to quantize without requiring that the entirety of the signal be accessible, which is particularly relevant not only in the distributed scenarios, but also for signals sparse in time, since storing the entire time-signal in analog form is expensive. Furthermore, while \ac{cs} algorithms have been proven to achieve asymptotic recovery guarantees, their performance may be degraded in finite signal sizes. These drawbacks give rise to the need for a reliable and sequential method for quantizing and recovering sparse signals, which is the focus of this work.
	
	Here, we propose \ac{sqrss}, which is a method for quantizing and recovering discrete-time sparse time sequences utilizing standard serial scalar \ac{adc} quantizers \footnote{While \ac{adc} traditionally refers to hardware which samples and quantizes an analog signal, here we denote by \ac{adc} only the quantization end of this hardware, i.e., we assume throughout the paper a discrete time input signal.}. Our scheme is inspired by recent developments in group testing theory, and leverages coding principles initially designed for secure group testing \cite{9218939} to facilitate recovery of the time sequence. In particular, \ac{sqrss} first quantizes each sample using a scalar \ac{adc}, and uses the \ac{adc} output to update a single binary value, which in turn is used as a codeword from which the sequence can be accurately recovered with high probability. The resulting coding scheme, which quantizes the sparse signal directly and operates over the binary field, allows improved reconstruction compared to \ac{cs}-based methods, which process a quantized linear projection of the real-valued observations, while also avoiding the need to store samples in analog by sequentially updating a single register.

	We characterize the achievable accuracy of \ac{sqrss} in the asymptotically large signal size regime, showing that any fixed desirable distortion level can be achieved with an overall number of bits which  grows logarithmically in the signal dimensionality and linearly with the support size.  We then characterize the computational complexity of \ac{sqrss}, and propose a reduced complexity scheme for implementing \ac{sqrss} at the cost of degraded representation accuracy.
	
	Next, we discuss how \ac{sqrss} can be naturally applied for distributed quantization of a set of temporally jointly sparse time sequences. We begin with the case where each acquired signal is conveyed to the central unit via a direct link, representing, e.g., single-hop networks.
	Then, we show how the technique can be extended to multi-hop networks, in which the quantized data  must travel over multiple intermediate links to reach the central server, and formulate simplified network policies, dictating the behavior of each intermediate node.
	We characterize the achievable distortion of \ac{sqrss} when applied in a distributed setup,
	and derive conditions on the system parameters under which it can achieve the same distortion as when applied in a non-distributed case, assuming that there exists at least a single path to the central unit.
	
	Our numerical results demonstrate that  both  \ac{sqrss}   and its reduced complexity variation   substantially outperform the conventional approach combining \ac{cs} and quantization when applied to the digital representation of a single sparse time-sequence, as well as in distributed acquisition scenarios with jointly sparse signals. This demonstrates the potential of  \ac{sqrss}   for feasible and reliable quantization of sequentially acquired sparse time sequences.
	
	The rest of this paper is organized as follows: In Section~\ref{Sec:Preliminaries} we review some preliminaries in quantization theory and present the system model. Section~\ref{sec:sqrss} proposes \ac{sqrss} along with a discussion and an asymptotic performance analysis.  Section~\ref{sec:efficient_algorithms} presents a reduced complexity variation of \ac{sqrss}.
	In Section~\ref{sec:DistQuant} we  apply \ac{sqrss} for distributed quantization.
	 Section~\ref{sec:sims} details the simulation study, and Section~\ref{sec:conclusions} provides concluding remarks. Proofs of the  results stated in the paper are detailed in the appendix.
	
	Throughout the paper, we use boldface lower-case letters for vectors, e.g., ${\myVec{x}}$.
	Matrices are denoted with boldface upper-case letters,  e.g.,
	$\myMat{M}$.
	Sets are expressed with calligraphic letters, e.g., $\mySet{X}$, and $\mySet{X}^n$ is the $n$th order Cartesian power of $\mySet{X}$.
	The stochastic expectation is denoted by   $\E\{ \cdot \}$,  $ \bigvee$ is the Boolean OR operation, and $\mySet{R}$  is the set of real  numbers.
	All logarithms are taken to base-2.
	
	\section{Preliminaries and System Model}
	\label{Sec:Preliminaries}
	
	\subsection{Preliminaries in Quantization Theory}
	\label{subsec:Pre_Quantization}
	To formulate the quantization of sparse signals setup, we first briefly review standard quantization notions.
	We begin with the definition of a quantizer:
	\begin{definition}[Quantizer]
		\label{def:Quantizer}
		A quantizer $\Quan{M}{\lenX,\lenS}\left(\cdot \right)$ with $\log M$ bits, input size $\lenX$, input alphabet $\mySet{S}$, output size $\lenS$, and output alphabet $\hat{\mySet{S}}$, consists of:
		{\em 1)} An  encoding function $g_\lenX^{\rm e}: \mySet{S}^\lenX \mapsto \{0,1,\ldots,M-1\} \triangleq \mySet{M}$ which maps the input from $\mySet{S}^\lenX$ into a discrete index $j \in \mySet{M}$.
		{\em 2)} A decoding function  $g_\lenS^{\rm d}: \mySet{M} \mapsto \hat{\mySet{S}}^\lenS$ which maps each index $j \in \mySet{M}$ into a codeword $\myVec{q}_j \in  \hat{\mySet{S}}^\lenS$.
	\end{definition}
	The quantizer output for an input $\myS  \in \mySet{S}^\lenX$ is $\hat{\myS}  = g_\lenS^{\rm d}\left( g_\lenX^{\rm e}\left( \myS\right) \right) \triangleq \Quan{M}{\lenX,\lenS}\left( \myS\right)$. An illustration is depicted in Fig. \ref{fig:Q_System}.
	\begin{figure}
		\centering
		{\includegraphics[width = \columnwidth]{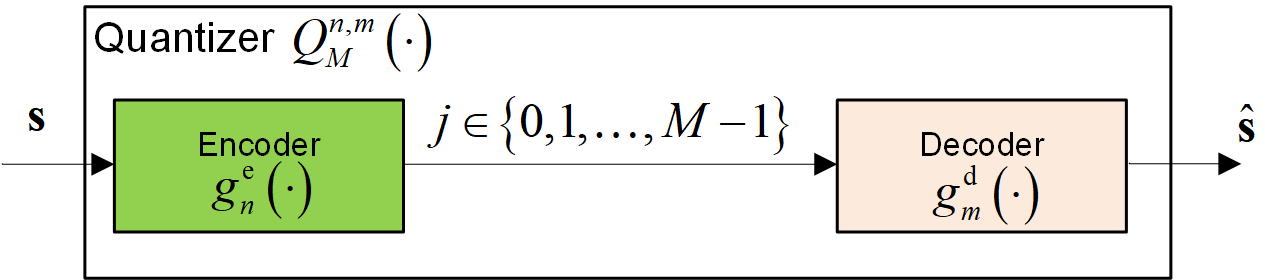}}
		\caption{Quantization system.}
		\label{fig:Q_System}
	\end{figure}
	{\em Scalar quantizers} have scalar input and output, i.e., $\lenX=\lenS=1$ and $\mySet{S}$ is a scalar space, while {\em vector quantizers} operate a multivariate input.
	When the input size and  output size are equal, namely, $\lenX=\lenS$, we write $\Quan{M}{\lenX}\left(\cdot \right) \triangleq \Quan{M}{\lenX,\lenX}\left(\cdot \right)$, while for scalar quantizers we use the notation $\Quan{M}{}\left(\cdot \right) \triangleq \Quan{M}{1}\left(\cdot \right)$.

	In the conventional quantization problem,  a $\Quan{M}{\lenX}\left(\cdot \right)$ quantizer is designed to minimize some distortion measure  $d_\lenX:\mySet{S}^\lenX\times\hat{\mySet{S}}^\lenX \mapsto \mySet{R}^+$  between its input and its output.
	The performance of a quantizer is therefore characterized using two measures: The quantization rate, defined as $\Rate \triangleq \frac{1}{\lenX}\log M$, and the expected distortion $\E\{d_\lenX\left(\myS,  \hat{\myS}  \right)\}$. For a fixed input size $\lenX$ and codebook size $M$, the optimal quantizer is given by
	\begin{equation}
	\label{eqn:OptQuantizer}
	\Quan{M}{\lenX, {\rm opt}}\left(\cdot \right) = \mathop{ \min}\limits_{\Quan{M}{\lenX}\left(\cdot \right)} \E \left\{d_\lenX\left(\myS , \Quan{M}{\lenX}\left( {\myS}  \right)\right)   \right\}.
	\end{equation}
	In the following, the distortion between a source realization $\myS$ and a reconstruction sequence $\hat{\myS}$ is defined as the  \ac{mse} of their difference, which is given by
	\begin{equation}
	\label{eqn:DisQuantizer}
	d_\lenX\left(\myS , \hat{\myS}  \right) \triangleq \frac{1}{\lenX}||\myS -\hat{\myS} ||^2.
	\end{equation}
	
	Characterizing the optimal quantizer via \eqref{eqn:OptQuantizer} and the distortion via \eqref{eqn:DisQuantizer},  as well as the optimal tradeoff between distortion and quantization rate, is in general a difficult task. Consequently, optimal quantizers are typically studied assuming either high quantization rate, i.e., $\Rate \rightarrow \infty$, see, e.g., \cite{li1999asymptotic}, or asymptotically large input size, namely, $\lenX \rightarrow \infty$, typically with stationary inputs, via rate-distortion theory \cite[Ch. 10]{C10}.
	
	Comparing high  rate analysis for scalar quantizers and rate-distortion theory for vector quantizers demonstrates the sub-optimality of serial scalar quantization. For example, for quantizing a large-scale real-valued Gaussian random vector with i.i.d. entries and sufficiently large quantization rate $\Rate$, where intuitively there is little benefit in quantizing the entries jointly over quantizing each entry independently, vector quantization notably outperforms serial scalar quantization \cite[Ch. 23.2]{polyanskiy2014lecture}.
	Nonetheless, vector quantizers are significantly more  complex compared to  scalar quantizers. One of the main sources for this increased complexity stems from the fact that vector quantizers operate on a set of analog samples. As a result, a \ac{dsp} utilizing vector quantizers to acquire an analog time sequence must store  $\lenX$ samples in the analog domain before it produces a digital representation, which may be difficult to implement, especially for large $\lenX$. Scalar quantizers, commonly used in \ac{adc} devices \cite{eldar2015sampling}, do not store data in analog as each incoming sample is immediately converted into a digital representation, and are the focus here.
	
	\subsection{System Model}
	\label{subsec:Pre_Problem}
	We consider the acquisition of a sampled time sequence $s[i]$ observed over the period $i \in \{1,\ldots,\lenT\} \triangleq \mySet{\lenT}$ into a digital representation $\hat{s}[i]$ using up to $\bits$ bits, i.e., $M = 2^\bits$ codewords.
	The signal $s[i] $ is temporally sparse with support size  $\SpaSize < \lenT$, where $\SpaSize$ is a-priori known\footnote{\rev{While we carry out our derivations and analysis assuming $\SpaSize$ is known, we only require an upper bound on it. In fact, \ac{sqrss} method support operation with erroneous knowledge of $\SpaSize$, as discussed in Section~\ref{subsec:Discussion}.}}.
	We propose a quantization system which is specifically designed to exploit this  sparsity to improve the recovery accuracy. In particular, we propose an encoder-decoder pair which utilizes tools from group testing theory to exploit the underlying sparsity of the continuous amplitude signal.
	\begin{figure}
		\centering
		{\includegraphics[width = \columnwidth]{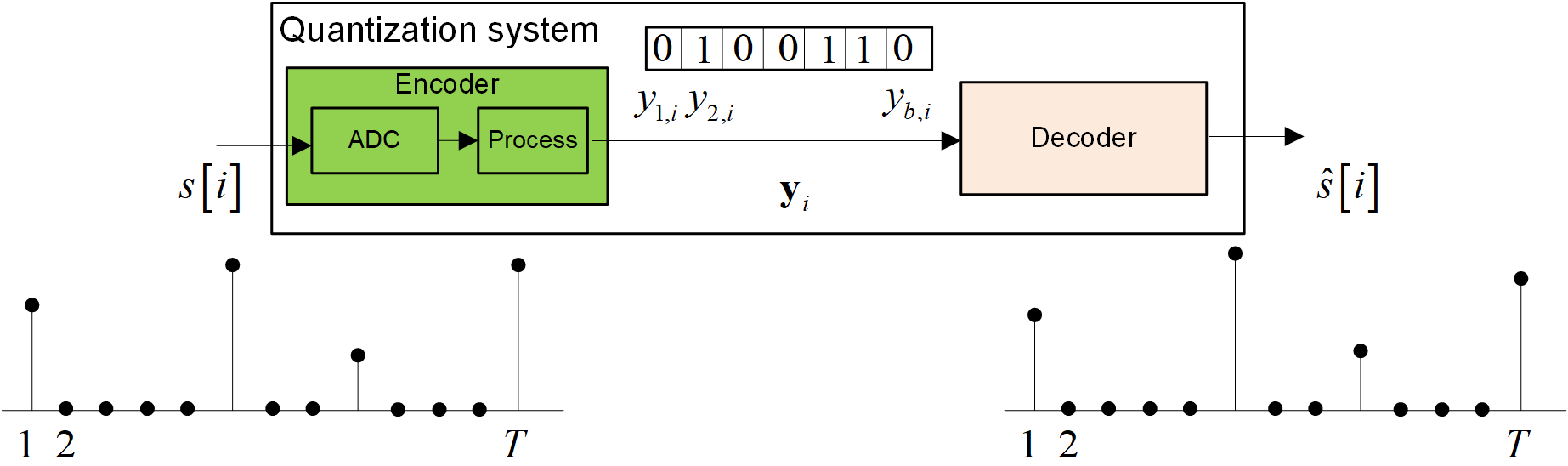}}
		\caption{Serial quantization system for sparse time sequences.}
		\label{fig:QCS_System}
	\end{figure}
	
	In order to avoid the need to store samples in analog, the system operates at each sample $s[i]$ independently. In particular, on each time instance $i \in \mySet{\lenT}$,  the encoder updates a register of $\bits$ bits, whose value upon the encoding of $s[i]$ is denoted by $\myY_i $. Once the complete time sequence is observed, i.e., $i = \lenT$, the decoder uses the digital codeword $\myY_{\lenT} $ to produce an estimate of the sequence denoted $\{\hat{s}[i]\}_{i\in\mySet{\lenT}}$. An illustration of the system is depicted in Fig. \ref{fig:QCS_System}. Since the decoder processes discrete codeword $\myY_{\lenT} $, while each $\myY_{i} $, ${i<\lenT}$ is stored only during the $i$th acquisition step, the system uses $\bits$ bits for digital representation.

	
	\section{The \ac{sqrss} System } \label{sec:sqrss}
	We next detail the proposed \ac{sqrss} system. The main rationale of \ac{sqrss} is to facilitate quantization of sparse signals using  conventional low-complexity serial scalar quantizers by utilizing group testing theory tools.
	Broadly speaking \ac{sqrss} quantizes each incoming sample using a scalar \ac{adc}. However, instead of storing this quantized value, it is used to update a $\bits$ bits codeword, which is decoded into a digital representation of the sparse signal. This approach allows to quantize each incoming sample with relatively high resolution, while using a single  $\bits$ bits register from which the digital representation of the complete signal is obtained.
	
	To properly formulate \ac{sqrss}, we first present the codebook generation in Subsection \ref{subsec:codebook}. Then, we elaborate on the \ac{sqrss} encoder and decoder structures in Subsections \ref{subsec:encoder} and \ref{subsec:decoder}, respectively. In Subsection \ref{subsec:MSE} we characterize the achievable distortion of \ac{sqrss} in the large signal size regime. Finally, in Subsection \ref{subsec:Discussion} we discuss the pros and cons of \ac{sqrss} compared to previously proposed approaches for quantizing sparse signals.
	
	\subsection{Codebook Generation} \label{subsec:codebook}
	The \ac{sqrss} system maintains a codebook used by its encoder and decoder.
	In particular, for a time sequence of $\lenT$ samples,  \ac{sqrss} uses a codebook  of $\lenL \cdot \lenT + 1$ codewords, each consisting of $\bits$ bits, where $\lenL$ is a fixed integer. We discuss the effect of $\lenL$ on the \ac{mse} and the complexity of  \ac{sqrss} in Subsection \ref{subsec:MSE}, and propose guidelines for setting its value to optimize the tradeoff between these  performance measures.
	
	Our codebook design is based on the wireless sensor coding scheme of \cite{wsn2017drivejornal}, which is inspired by recent advances in group testing theory \cite{dorfman1943detection}, and particularly the code proposed in \cite{9218939} for secure group testing.
	The objective in group testing is to identify a subset of defective items in a larger set using as few  measurements as possible. This objective can be recast as a codebook generation problem, such that for each outcome vector, i.e., a set of  measurements, it should be possible to identify the non-zero inputs \cite{dorfman1943detection}. While this setup bears much similarity to our quantization of sparse sources problem, in group testing the inputs are represented over a binary field, while in our setting the inputs can be any real value.  Consequently, the codebook here needs to be able not only to detect the indexes of the non-zero inputs, as in conventional group testing, but also to recover their value. To facilitate our design, we henceforth assume that the inputs are discretized to a set of $l+1$ different values, and show how this is incorporated into the overall scheme in the following subsections. We refer to \cref{sec:sims} for a discussion on how the parameter $l$ relates to the overall quantization rate.
	
	In particular, to formulate the codebook, we generate $\lenT \cdot \lenL \cdot \bits$ independent realizations from a Bernoulli distribution with mean value $\frac{\ln (2)}{\SpaSize} $. These realizations form $\lenL \cdot \lenT$ mutually independent codewords. The codewords are then divided into $\lenT$ bins, denoted $\myBin_i \triangleq \{\myCodeword_{j,i}\}_{j=1}^{\lenL}$, $i \in \mySet{\lenT}$, and we add to each bin the all-zero codeword denoted $\myCodeword_{0}$.
	Since $\myCodeword_{0}$ is common to all the bins, the total number of codewords is $\lenL \cdot \lenT +1$.
	The benefits of this codebook design are discussed in Subsection~\ref{subsec:Discussion}.


	\subsection{Encoder Structure} \label{subsec:encoder}
	Having generated $\lenT$ bins of $\lenL$ codewords, $\{\myBin_i\}_{i=1}^{\lenT}$, we now discuss the encoding process. To that aim, we fix some scalar quantization mapping $\Quan{\lenL+1}{}(\cdot)$ over $\mySet{R}$ with resolution $\lenL + 1$, denoted henceforth as $Q(\cdot)$ for simplicity, and let $\{\ScaQuant_j\}_{j=0}^{\lenL}$ be the set of its possible outputs.
	The specific selection of the quantization mapping represents the acquisition hardware. For example, when using the common flash \ac{adc} architecture, $Q(\cdot)$ represents a uniform quantization mapping with $\lenL + 1$ uniformly spaced decision regions \cite{kosonocky1999analog}.
	Without loss of generality, we assume that the scalar quantizer maps the input value $0$ into the discrete value $\ScaQuant_0$, namely, $Q(0) = \ScaQuant_0$.
	
	The encoding process consists of the following three stages, illustrated in Fig. \ref{fig:QCS_enc}:
	\begin{enumerate}[label={\em E\arabic*}]
		\item \label{itm:E1} Each incoming sample $s[i]$ is quantized into the discrete scalar value $Q(s[i])$. Since this same identical mapping is applied to each incoming sample in a serial manner, it can be implemented using conventional \acp{adc}.
		\item \label{itm:E2} The encoder  uses the index of the discrete value $Q(s[i])$ to select a codeword from the $i$th bin as follows: If $\ScaQuant_j = Q(s[i])$, then the selected codeword is $\myCodewordT_{i} = \myCodeword_{j,i} \in \myBin_i$.
		\item \label{itm:E3} The encoder output  $\myY_{i} $, which is initialized such that  $\myY_{0} $ is the all-zero vector, is updated by taking its Boolean OR with the selected codeword $\myCodewordT_{i} $, i.e.,
		\begin{equation}
		\label{eqn:EncOutput}
		\myY_{i}  = \myY_{i-1}  \bigvee \myCodewordT_{i}.
		\end{equation}
		Consequently, the encoder output $\myY_{\lenT} $ is given by
		\begin{equation}
		\label{eqn:EncOutput2}
		\myY_{\lenT}  = \bigvee_{i=1}^{\lenT}\myCodewordT_{i}.
		\end{equation}
	\end{enumerate}
	
	\begin{figure}
		\centering
		{\includegraphics[width = \columnwidth]{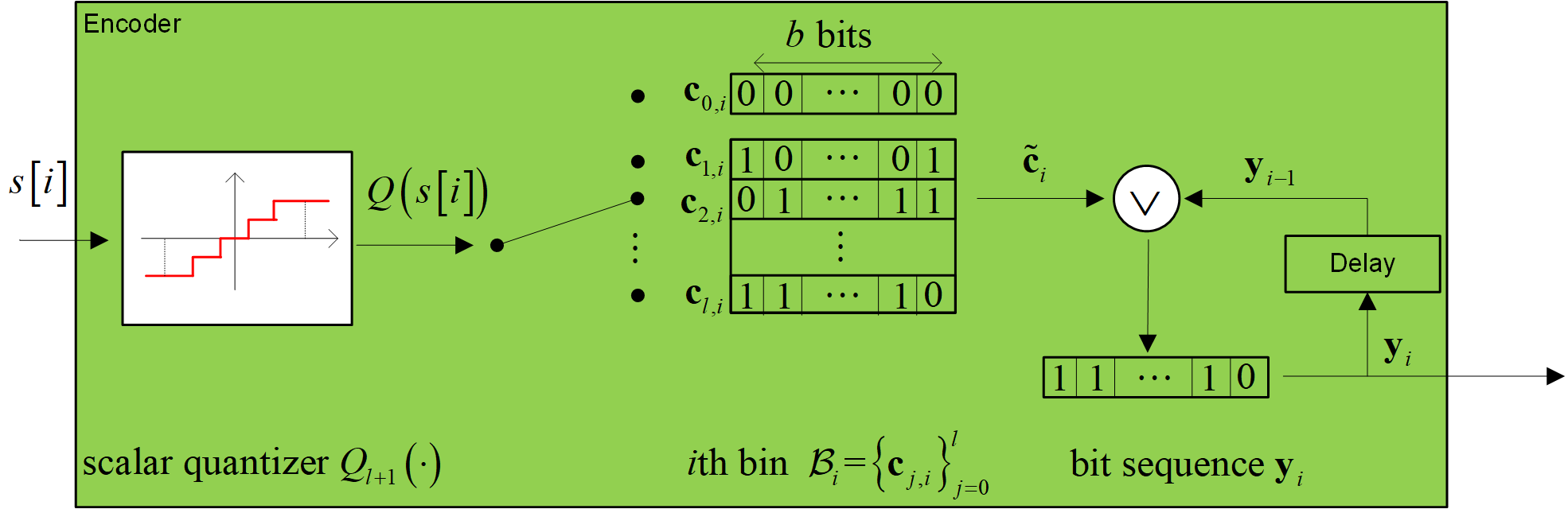}}
		\caption{Encoding process of the \ac{sqrss} system.}
		\label{fig:QCS_enc}	
	\end{figure}

	Note that only the discrete index of the quantized $Q(s[i])$, and not its actual value, affects the selection of the encoder output $\myY_{\lenT} $. Nonetheless, in Subsection \ref{subsec:MSE} we show that the selection of the output of $Q(\cdot)$, i.e., the values of $\{\ScaQuant_j\}$, and not only its partition of $\mySet{R}$ into decision regions, affect the overall \ac{mse} of  \ac{sqrss}.
	Additionally, the formulation of $\myY_{\lenT} $ via \eqref{eqn:EncOutput} implies that it can be represented using a single register of $\bits$ bits, which is updated using logical operations on each incoming sample. Consequently, while the encoder assigns a $\bits$ bits codeword to each incoming sample, the overall output size is $\bits$ and not $\lenT \cdot \bits$, thus the quantization rate is $\Rate = \frac{\bits}{\lenT}$. 
	\subsection{Decoder Structure} \label{subsec:decoder}
	The recovery of the digital representation $\{\hat{s}[i]\}_{i \in \mySet{\lenT}}$ from the output of the encoder $\myY_{\lenX}  \in \{0,1\}^{\bits}$ is based on \ac{ml} decoding. In this decoding scheme, the most likely set of $\SpaSize$ codewords are selected, from which the digital representation is obtained. To formulate the decoding process,  recall that the set $\mySet{\lenT}$ has exactly $ {\lenT \choose \SpaSize}$ possible subsets of size $\SpaSize$, representing the possible sets of non-zero entries of $\myS $. We use $\{\mySet{X}_w\}_{w \in \{1,\ldots, {\lenT \choose \SpaSize}\}}$ to denote these subsets.  The \ac{sqrss} decoder implements the following steps:
	\begin{enumerate}[label={\em D\arabic*}]
		\item \label{itm:D1} For a given encoder output $\myY_{\lenT} $, the decoder recovers a collection of $\SpaSize$ codewords $\hat{\myMat{C}}_{\mySet{X}_w} = \{\myCodeword_{j_i,i}\}_{i \in \mySet{X}_w}$, {\em each one taken from a separate bin}, for which $\myY_{\lenT} $ is most likely, namely,
		\begin{equation}
		\label{eqn:MLDef}
		\Pr\left(\myY_{\lenT}  \big|  \hat{\myMat{C}}_{\mySet{X}_w} \right) \ge \Pr\left(\myY_{\lenT}  \big|  \hat{\myMat{C}}_{\mySet{X}_{\tilde{w}}} \right), \quad \forall \tilde{w} \neq w.
		\end{equation}
		The decoder looks for both the set of $\SpaSize$ bins $\mySet{X}_w$ as well as the selection of the codeword for each bin, i.e., the selection of codeword index $j_i$ within the $i$th bin, $i \in \mySet{X}_w$, which maximize the conditional probability \eqref{eqn:MLDef}.
		\item \label{itm:D2} The decoder recovers $\{\hat{s}[i]\}$ from  $\hat{\myMat{C}}_{\mySet{X}_w} = \{\myCodeword_{j_i,i}\}_{i \in \mySet{X}_w}$ by setting its $i$th entry, denoted $\hat{s}[i]$, to be $\hat{s}[i] =\ScaQuant_{j_i} $ for  each $i \in \mySet{X}_w$ and  $\hat{s}[i] = \ScaQuant_0$ for $i \notin \mySet{X}_w$.
	\end{enumerate}
	The \ac{ml} decoder scans $\binom{\lenT}{\SpaSize}(\lenL)^\SpaSize$ possible subsets of codewords in the codebook, i.e., the $\binom{\lenT}{\SpaSize}$ possible bins corresponding to indexes which may contain non-zero values, and the $\lenL$ codewords in each such bin. For every scanned subset of codewords, the decoder compares the Boolean OR of each subset which contains $\SpaSize$ codewords to the quantized register $\myY_{\lenT} $. Since the length of each codeword is $\bits$, the computational complexity   is on the order of $	\mathcal{O}\big(\binom{\lenT}{\SpaSize}\lenL^\SpaSize \SpaSize \bits\big)$ operations.
	%
	
	While the decoding process described above may be computationally complex, it essentially implements a one-to-one mapping from $\myY_{\lenT} $ to  $\{\hat{s}[i]\}$, and can thus be implemented using a standard look-up table. In \Cref{sec:efficient_algorithms}, we present a sub-optimal low-complexity \ac{sqrss} decoder. 
	
	In the following subsection we study the achievable performance, in terms of the tradeoff between quantization rate and distortion, of the proposed \ac{sqrss} system.
	\off{
		\textcolor{red}{Alejandro - here is something we should think about: it is not clear from the description of the decoder how the complexity is affected by the setting of $\lenL$. In fact, it looks as though what really matters is the size of the lookup table which is determined by $\bits$. This is the case in standard source coding where complexity is related to the quantization rate, i.e., the number of codewords. If we indeed think that $\lenL$ is related to the complexity and not only to the tradeoff between rate and distortion, we should emphasize this dependence, and preferably quantify it. I suggest we discuss this the next time we talk.}
	}
	
	\subsection{Achievable Performance}  \label{subsec:MSE}
	In order to study the achievable performance, we first note that the \ac{sqrss} encoder and decoder are designed to recover the output of the scalar quantizer $Q(\cdot) = \Quan{\lenL+1}{}(\cdot)$. Therefore, when the \ac{sqrss} decoder detects the correct set of codewords, the distortion is determined by the scalar quantizer and its resolution, which is dictated by the  parameter $\lenL$.
	
	To formulate this distortion, define the overall average \ac{mse} of the scalar quantizer via
	\begin{equation}
	\label{eqn:ScaMSE2}
	D_{\lenT}(\lenL) \triangleq \frac{1}{\lenT}\sum\limits_{i=1}^{\lenT} \mathbb{E}\left[ |s[{i}] - \Quan{\lenL+1}{}(s[i])  |^2\right].
	\end{equation}
	The average \ac{mse} \eqref{eqn:ScaMSE2} is determined by the scalar quantization mapping $Q(\cdot)$  and the distribution of the time sequence $\{s[i]\}$.
	It represents the accuracy of applying  $Q(\cdot)$ directly to the sequence $\{s[i]\}$ without using any additional processing, thus operating at quantization rate of $\log(\lenL +1)$ bits per input sample. \ac{sqrss} with rate $\Rate$, which, as we show next, can be much smaller than $\log(\lenL +1)$,  is capable of achieving this average \ac{mse} when its decoder  successfully recovers the correct set of codewords. A sufficient condition for successful recovery in the limit of asymptotically large inputs, and thus for~\eqref{eqn:ScaMSE2} to be achievable, is stated in the following theorem:
	\begin{theorem}\label{direct theorem}
		The \ac{sqrss} system applied to a sparse signal $\{s[i]\}_{i\in\mySet{\lenT}} $ with support size $\SpaSize=\mathcal{O}(1)$ achieves the average \ac{mse} $D_{\lenT}(\lenL)$ given in \eqref{eqn:ScaMSE2} in the limit $\lenT \rightarrow \infty$ when for some $\varepsilon>0$, the quantization rate $\Rate$ satisfies:
		\begin{eqnarray}\label{eq:reduce_hw}
		\Rate \ge \Rate_\varepsilon(\lenL)  \triangleq \max_{1 \leq i \leq \SpaSize}\frac{(1+\varepsilon)\SpaSize}{i\cdot \lenT}\log\left( \binom{\lenT-\SpaSize}{i}\cdot \lenL^i\right).
		\end{eqnarray}
	\end{theorem}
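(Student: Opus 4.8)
The plan is a random-coding achievability argument for the \ac{ml} decoder. Since the \ac{sqrss} codebook $\{\myBin_i\}$ is drawn from a product Bernoulli law, it suffices to bound the average (over codebooks) probability that the decoder returns anything other than the true collection $\{\myCodewordT_i\}_{i\in\mySet{X}}$ of selected codewords: whenever that event does not occur, the reconstruction rule \ref{itm:D2} gives $\hat s[i]=\ScaQuant_{j_i}=Q(s[i])$ on the support and $\hat s[i]=\ScaQuant_0=Q(0)=Q(s[i])$ off it, so the \ac{sqrss} output coincides with the scalar-quantized signal and its \ac{mse} equals \eqref{eqn:ScaMSE2}. A short extra step --- the squared error stays bounded under a decoding error because the quantizer outputs are bounded, so the excess distortion is at most a constant times the (vanishing) error probability --- then upgrades ``error probability $\to 0$'' to ``expected distortion $\to D_{\lenT}(\lenL)$''. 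So I would devote the proof to the error probability.

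First I would condition on the true support $\mySet{X}$ ($|\mySet{X}|=\SpaSize$) and on $\{\myCodewordT_i\}_{i\in\mySet{X}}$; exchangeability of the codebook makes this without loss of generality. The crucial distributional fact is that, because the codeword entries are i.i.d.\ Bernoulli with mean $\beta=\ln(2)/\SpaSize$, the $\bits$ bits of $\myY_{\lenT}=\bigvee_{i\in\mySet{X}}\myCodewordT_i$ (recall \eqref{eqn:EncOutput2}) are i.i.d., each equal to zero with probability $(1-\beta)^{\SpaSize}$. The choice $\beta=\ln(2)/\SpaSize$ is precisely the one making $(1-\beta)^{\SpaSize}\approx\tfrac12$ --- it tends to $\tfrac12$ as $\SpaSize$ grows and is a fixed constant below $\tfrac12$ otherwise --- and the slack $(1+\varepsilon)$ in \eqref{eq:reduce_hw} is what absorbs this approximation. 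I would also dispose of the harmless case $Q(s[i])=\ScaQuant_0$ for some $i\in\mySet{X}$: then $\myCodewordT_i=\myCodeword_0$ adds nothing to the OR, and since $\myCodeword_0$ lies in every bin the decoder can still explain such a bin, so the estimates below go through with ``$\SpaSize$ or fewer'' active codewords.

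The heart of the proof is a union bound over the rival hypotheses, organized by their overlap with the truth. Because $\{\myCodewordT_i\}\mapsto\myY_{\lenT}$ is a \emph{noiseless} Boolean OR, the \ac{ml} criterion \eqref{eqn:MLDef} amounts to returning any size-$\SpaSize$ selection (one codeword per bin) whose OR equals $\myY_{\lenT}$; hence a decoding error occurs iff some selection other than the true one is consistent with $\myY_{\lenT}$. I would partition the wrong selections by $i\in\{1,\dots,\SpaSize\}$, the number of codewords they use that are \emph{not} among the true $\SpaSize$ codewords. There are at most $\binom{\SpaSize}{i}\binom{\lenT-\SpaSize+i}{i}\lenL^{i}$ such selections, which for $\SpaSize=\mathcal{O}(1)$ equals $\binom{\lenT-\SpaSize}{i}\lenL^{i}(1+o(1))$. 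The delicate quantity is the probability that one fixed type-$i$ selection is consistent: consistency forces each of its $i$ fresh codewords to be covered by $\myY_{\lenT}$ \emph{and} the retained plus fresh codewords to jointly cover every one of $\myY_{\lenT}$. Conditioning on $\myY_{\lenT}$ and on which of its ones the retained codewords already cover, and then averaging using the i.i.d.\ structure of $\myY_{\lenT}$ and of the fresh codewords, yields a bound of the form $\bigl[\,1-2(1-\beta)^{\SpaSize}\bigl(1-(1-\beta)^{i}\bigr)\,\bigr]^{\bits}$, which under $(1-\beta)^{\SpaSize}\approx\tfrac12$ is $2^{-\bits\,i/\SpaSize}$ up to the same $(1+o(1))$.

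Multiplying the count by the per-selection probability, the type-$i$ contribution to the error is at most $\binom{\lenT-\SpaSize}{i}\lenL^{i}\,2^{-\bits\,i/\SpaSize}(1+o(1))$, which tends to $0$ as soon as $\bits\ge(1+\varepsilon)\tfrac{\SpaSize}{i}\log\!\bigl(\binom{\lenT-\SpaSize}{i}\lenL^{i}\bigr)$; taking the maximum over the finitely many $i\in\{1,\dots,\SpaSize\}$ (so the total error is a finite sum of vanishing terms) and substituting $\bits=\Rate\lenT$ produces exactly \eqref{eq:reduce_hw}. I expect the main obstacle to be this consistency-probability estimate: performing the conditioning on $\myY_{\lenT}$ carefully enough to extract the clean exponent $i/\SpaSize$ for every $i$, and verifying that the discrepancy between the true per-bit constant $(1-\beta)^{\SpaSize}$ and $\tfrac12$ is genuinely swallowed by the factor $(1+\varepsilon)$.
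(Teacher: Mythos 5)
Your proposal is correct and follows essentially the same route as the paper: the appendix reduces the theorem to a reliability lemma stating that $b \ge \max_{1\le i\le k}\frac{(1+\varepsilon)k}{i}\log\binom{T-k}{i}l^i$ drives the decoding error probability to zero, substitutes $R=b/T$, and proves that lemma by citation to the group-testing achievability analysis of \cite{wsn2017drivejornal} and \cite{9218939} --- which is exactly the union bound over overlap classes, with the rival-selection count $\binom{T-k}{i}l^i$ and per-selection consistency probability $\approx 2^{-bi/k}$, that you carry out explicitly. The only difference is that you reprove the cited lemma rather than invoke it, and in doing so you correctly identify the one genuinely delicate step (showing that the gap between the true per-bit constant $(1-\ln 2/k)^k$ and $\tfrac12$ is absorbed by the $(1+\varepsilon)$ slack) that the paper leaves entirely to its references.
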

	
	{\em Proof:} The proof is given in the appendix.
	
	\smallskip 	
	Theorem \ref{direct theorem} implies that, as $\lenT$ increases, if the number of bits is $\bits=\Rate \cdot \lenT$ where $\Rate$ satisfies \eqref{eq:reduce_hw}, then the average error probability in detecting the \ac{sqrss} codewords approaches zero (decaying exponentially with $\lenT$) and thus the \ac{sqrss} system achieves the average \ac{mse} $D_{\lenT}(\lenL)$ given in \eqref{eqn:ScaMSE2}.
	
	Note that the average \ac{mse} $D_{\lenT}(\lenL)$ and the corresponding quantization rate $\Rate_\varepsilon(\lenL) $ both depend on the auxiliary parameter $\lenL$. The dependence of $D_{\lenT}(\lenL)$ on $\lenL$ is obtained from the quantization mapping used, as well as the distribution of the input $\{s[i]\} $.
	For example, when the samples of $\{s[i]\} $ are identically distributed with \ac{pdf} $\Pdf{s}(\cdot)$, then, using the Panter-Dite approximation \cite{panter1951quantization}, the optimal (non-uniform) scalar quantizer in the fine quantization regime achieves the following average \ac{mse}:
	\begin{equation}
	\label{eqn:OptDist}
	D_{\lenT}(\lenL) \approx \frac{1}{12}2^{-2 \log(\lenL + 1)}\Big( \int\limits_{\alpha \in \mySet{R}} \Pdf{s}^{1/3}(\alpha) d\alpha\Big)^3.
	\end{equation}	
	The average \ac{mse} in \eqref{eqn:OptDist} implies that the achievable distortion using scalar quantizers, including conventional architectures such as uniform quantization mappings, can be made arbitrarily small by increasing the resolution  $ \log(\lenL + 1)$.
	
	While the average \ac{mse} directly depends on the quantization mapping, the rate $\Rate_\varepsilon(\lenL)$ is invariant to the setting of $Q(\cdot)$, and is obtained as the maximal value of the right hand side of \eqref{eq:reduce_hw}. To avoid the need to search for the maximal value in  \eqref{eq:reduce_hw}, we state an upper bound on $\Rate_\varepsilon(\lenL)$ in the following corollary:
	\begin{corollary}
		\label{cor:BoundRate}
		The quantization rate  $\Rate_\varepsilon(\lenL)$ in \eqref{eq:reduce_hw} satisfies
		\begin{equation}
		\label{eqn:BoundRate}
		\Rate_\varepsilon(\lenL) \leq  (1+\varepsilon) \frac{\SpaSize}{\lenT} \log \left( \lenT\cdot\lenL\right) .
		\end{equation}
	\end{corollary}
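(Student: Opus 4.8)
The plan is to bound each term appearing in the maximum that defines $\Rate_\varepsilon(\lenL)$ in \eqref{eq:reduce_hw} by the right-hand side of \eqref{eqn:BoundRate}, uniformly over the index $i$, so that the maximum obeys the same bound. Fix $1 \le i \le \SpaSize$. After dividing out the common factor $(1+\varepsilon)\SpaSize/\lenT$ from both sides, it suffices to establish
\begin{equation*}
\frac{1}{i}\log\!\left(\binom{\lenT-\SpaSize}{i}\lenL^i\right) \le \log(\lenT\cdot\lenL).
\end{equation*}

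The only ingredient is the elementary estimate $\binom{\lenT-\SpaSize}{i} \le \lenT^i$, which holds because $\binom{\lenT-\SpaSize}{i}\le(\lenT-\SpaSize)^i\le\lenT^i$ (and, slightly more sharply, $\binom{\lenT-\SpaSize}{i}\le\binom{\lenT}{i}\le\lenT^i/i!$). By monotonicity of the logarithm,
\begin{equation*}
\log\!\left(\binom{\lenT-\SpaSize}{i}\lenL^i\right) \le \log\!\left(\lenT^i\lenL^i\right) = i\,\log(\lenT\cdot\lenL),
\end{equation*}
and dividing by $i$ gives precisely the required inequality. Since this holds for every $i$ with $1 \le i \le \SpaSize$, multiplying back by $(1+\varepsilon)\SpaSize/\lenT$ and taking the maximum over $i$ in \eqref{eq:reduce_hw} yields \eqref{eqn:BoundRate}.

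There is essentially no obstacle to this argument; the only mild point worth noting is that the bound is intentionally loose, trading the per-$i$ optimization of \eqref{eq:reduce_hw} for a simple closed-form expression, so the proof should just record the crudest sufficient estimate $\binom{\lenT-\SpaSize}{i}\le\lenT^i$ rather than track constants. (If a tighter statement were desired, one could instead use $\binom{\lenT-\SpaSize}{i}\le\bigl(e(\lenT-\SpaSize)/i\bigr)^i$ and optimize over $i$, but this refinement is not needed for the corollary as stated.)
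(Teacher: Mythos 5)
Your proof is correct and follows essentially the same route as the paper, which simply substitutes the bound $\log\binom{\lenT-\SpaSize}{i}\le i\log\lenT$ into each term of \eqref{eq:reduce_hw} so that every candidate in the maximum is at most $(1+\varepsilon)\frac{\SpaSize}{\lenT}\log(\lenT\lenL)$. Your elementary justification $\binom{\lenT-\SpaSize}{i}\le(\lenT-\SpaSize)^i\le\lenT^i$ is if anything cleaner than the paper's appeal to Stirling's approximation, but it is the same argument.
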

	
	\begin{IEEEproof}
		The corollary is obtained by substituting in  \eqref{eq:reduce_hw} the upper bound $\log\binom{\lenT-\SpaSize}{i} \le i\log T$, which follows from Stirling's approximation \cite{C10,chan2014non}.
	\end{IEEEproof}

	%
	%
	\smallskip
	We note that when $\SpaSize=\mathcal{O}(1)$, the upper bound \eqref{eqn:BoundRate} tends to zero as $\lenT$ grows for any fixed $\lenL$. Consequently, for large $\lenT$, \ac{sqrss} requires significantly smaller quantization rates to achieve  $D_{\lenT}(\lenL) $ compared to directly applying $Q(\cdot)$ to $\{s[i]\}$, which requires a rate of $\log (\lenL + 1)$ to achieve the same average \ac{mse}. This gain, which demonstrates the ability of \ac{sqrss} to exploit the underlying sparsity of $\{s[i]\}$, is also observed in the simulations study presented in Section~\ref{sec:sims}. 
	
	Corollary \ref{cor:BoundRate} can be used to determine the quantization rate for achieving a desirable \ac{mse} for a given family of  scalar quantization mappings: The  parameter $\lenL$ is set to the minimal value for which  $D_{\lenT}(\lenL) $ is not larger than the desirable distortion. Next, using the resulting $\lenL$, the quantization rate can be obtained using the right-hand side of \eqref{eqn:BoundRate}. Theorem \ref{direct theorem} guarantees that, for large input size $\lenT$, the desirable distortion is achievable when using \ac{sqrss} with the selected quantization rate.
	In fact, in the numerical study presented in Section \ref{sec:sims} we demonstrate that,  by properly tuning $\lenL$, the proposed system can achieve substantial \ac{mse} gains over previously proposed approaches for quantizing sparse time sequences.

	The bound on the quantization rate required to approach $D_{\lenT}(\lenL) $ given in Corollary \ref{cor:BoundRate} can also be used to characterize the asymptotic growth rate of the number of quantization bits used by the \ac{sqrss} system, $\bits$, as stated in the following corollary:

	\begin{corollary} \label{cor:asympt}
		The \ac{mse} $D_{\lenT}(\lenL) $ can be approached as $\lenT$ increases when the number of quantization bits $\bits$ grows as
		\begin{equation}
		\label{eqn:asympt}
		\bits = \mathcal{O} \left(\SpaSize \log \lenT +\SpaSize\log \lenL\right).
		\end{equation}
	\end{corollary}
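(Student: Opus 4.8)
The plan is to obtain Corollary~\ref{cor:asympt} directly from Theorem~\ref{direct theorem} together with the rate bound in Corollary~\ref{cor:BoundRate}; no new argument is needed beyond assembling these two facts. The starting observation, taken from the encoder description in Subsection~\ref{subsec:encoder}, is that the \ac{sqrss} system employs $\bits = \Rate \cdot \lenT$ quantization bits, so controlling the asymptotic growth of $\bits$ reduces to bounding $\Rate \cdot \lenT$ for a rate $\Rate$ that is sufficient to approach $D_{\lenT}(\lenL)$.

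First I would invoke Theorem~\ref{direct theorem}: for any fixed $\varepsilon > 0$, operating at quantization rate $\Rate = \Rate_\varepsilon(\lenL)$ ensures that the \ac{sqrss} decoder recovers the correct set of codewords with probability tending to one as $\lenT \to \infty$, so that the average \ac{mse} approaches $D_{\lenT}(\lenL)$. Then I would replace $\Rate_\varepsilon(\lenL)$ by the simpler upper bound furnished by Corollary~\ref{cor:BoundRate}, namely $\Rate_\varepsilon(\lenL) \leq (1+\varepsilon)\frac{\SpaSize}{\lenT}\log(\lenT\cdot\lenL)$; since achievability is preserved when the rate is increased, it is enough to operate at $\Rate = (1+\varepsilon)\frac{\SpaSize}{\lenT}\log(\lenT\cdot\lenL)$.

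Multiplying this rate by $\lenT$ and using $\log(\lenT \cdot \lenL) = \log\lenT + \log\lenL$ yields $\bits = \Rate \cdot \lenT = (1+\varepsilon)\,\SpaSize\,(\log\lenT + \log\lenL)$, which is $\mathcal{O}(\SpaSize\log\lenT + \SpaSize\log\lenL)$ since $\varepsilon$ is a fixed constant absorbed into the big-$\mathcal{O}$. There is no real obstacle here — the corollary is a bookkeeping consequence of the preceding results; the only points warranting a line of care are that the $(1+\varepsilon)$ multiplicative factor is swallowed by the $\mathcal{O}(\cdot)$ notation and that, consistently with the hypothesis of Theorem~\ref{direct theorem}, the statement is to be read for fixed support size $\SpaSize = \mathcal{O}(1)$, so that $\bits$ indeed grows logarithmically in $\lenT$ for any fixed $\lenL$.
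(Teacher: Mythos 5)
Your proposal is correct and follows exactly the route the paper intends: Corollary~\ref{cor:asympt} is an immediate consequence of Corollary~\ref{cor:BoundRate} combined with the identity $\bits = \Rate\cdot\lenT$, with the $(1+\varepsilon)$ factor absorbed into the $\mathcal{O}(\cdot)$ notation. Your added remarks about monotonicity of achievability in the rate and the standing assumption $\SpaSize = \mathcal{O}(1)$ are consistent with the paper's treatment.
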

	
	Corollary \ref{cor:asympt} implies that,  besides the obvious linear dependence in $\SpaSize\log \lenL$, the required number of bits grows proportionally to a logarithmic factor of $\lenT$, which depends on the sparsity pattern size $\SpaSize$.
	A similar asymptotic growth in the number of bits, i.e., proportional to $\SpaSize \log \lenT$, was also shown to be sufficient to achieve a given distortion when using \ac{cs}-based methods in \cite[Thm. 2]{jacques2013robust}. However, our numerical study presented in Section~\ref{sec:sims} demonstrates that despite the similarity in the asymptotic growth, when $\bits$ is fixed, \ac{sqrss} achieves improved reconstruction accuracy compared to \ac{cs}-based techniques.
	
	Substituting \eqref{eqn:asympt} in the \ac{ml} decoding complexity in Subsection~\ref{subsec:decoder} allows us to characterize the computational burden of the \ac{ml} decoder, as stated in the following corollary:
	\begin{corollary} \label{cor:complexity}
		 \ac{sqrss}  with the \ac{ml} decoder detailed in Subsection \ref{subsec:decoder} is capable of achieving the \ac{mse} $D_{\lenT}(\lenL) $ \eqref{eqn:ScaMSE2} in the limit $\lenT \rightarrow \infty$ with a computational complexity on the order of $\mathcal{O}\left(\binom{\lenT}{\SpaSize}\lenL^\SpaSize \SpaSize^2 \log \lenT + \binom{\lenT}{\SpaSize}\lenL^\SpaSize \SpaSize^2 \log \lenL\right)$ operations.
	\end{corollary}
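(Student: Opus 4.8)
The plan is to obtain this corollary as a direct bookkeeping consequence of Theorem~\ref{direct theorem}, the \ac{ml} decoding complexity count given in Subsection~\ref{subsec:decoder}, and the asymptotic bit-growth established in Corollary~\ref{cor:asympt}. First, recall that Theorem~\ref{direct theorem} guarantees that \ac{sqrss} attains the average \ac{mse} $D_{\lenT}(\lenL)$ in the limit $\lenT\rightarrow\infty$ whenever the number of bits satisfies $\bits=\Rate\cdot\lenT$ with $\Rate\ge\Rate_\varepsilon(\lenL)$, and that by Corollary~\ref{cor:asympt} this holds once $\bits=\mathcal{O}\left(\SpaSize\log\lenT+\SpaSize\log\lenL\right)$ as given in \eqref{eqn:asympt}. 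Hence the achievability half of the statement is already in hand, and the only remaining task is to re-express the decoding cost so that it no longer references $\bits$ explicitly.

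Next, I would substitute this scaling of $\bits$ into the \ac{ml} decoding complexity derived in Subsection~\ref{subsec:decoder}. There, the decoder scans the $\binom{\lenT}{\SpaSize}$ candidate bin subsets together with the $\lenL^{\SpaSize}$ codeword selections within each such subset, and for every scanned subset compares the Boolean OR of $\SpaSize$ codewords of length $\bits$ against $\myY_{\lenT}$, giving a total of $\mathcal{O}\big(\binom{\lenT}{\SpaSize}\lenL^{\SpaSize}\SpaSize\bits\big)$ operations. Replacing $\bits$ by $\mathcal{O}\left(\SpaSize\log\lenT+\SpaSize\log\lenL\right)$ and distributing yields $\mathcal{O}\big(\binom{\lenT}{\SpaSize}\lenL^{\SpaSize}\SpaSize^2\log\lenT+\binom{\lenT}{\SpaSize}\lenL^{\SpaSize}\SpaSize^2\log\lenL\big)$, which is exactly the claimed bound.

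There is essentially no obstacle here: the corollary is a one-line substitution combining three earlier results. The only point meriting a sentence of care is confirming that the $\bits$ appearing in the complexity count of Subsection~\ref{subsec:decoder} is the same $\bits$ that controls achievability in Theorem~\ref{direct theorem}, i.e.\ that the codeword length and the register length $\myY_{\lenT}$ coincide; this is immediate from the encoder construction in Subsection~\ref{subsec:encoder}, where each codeword $\myCodeword_{j,i}$ and the register updated via \eqref{eqn:EncOutput} both consist of exactly $\bits$ bits.
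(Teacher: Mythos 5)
Your proposal is correct and follows exactly the paper's route: the corollary is obtained by substituting the bit-growth $\bits=\mathcal{O}(\SpaSize\log\lenT+\SpaSize\log\lenL)$ from Corollary~\ref{cor:asympt} into the \ac{ml} decoding complexity $\mathcal{O}\big(\binom{\lenT}{\SpaSize}\lenL^{\SpaSize}\SpaSize\bits\big)$ stated in Subsection~\ref{subsec:decoder}, with achievability already guaranteed by Theorem~\ref{direct theorem}. Your added remark that the codeword length and register length coincide is a sensible sanity check that the paper leaves implicit.
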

	The complexity of the \ac{sqrss} decoder is significantly affected by the size of the sparsity pattern $\SpaSize$ in a much more dominant manner compared to its dependence on the signal size $\lenT$, and the resolution of the scalar quantizer $\lenL$. While this implies that the \ac{sqrss} system is most computationally efficient for highly sparse inputs, the proposed mechanism is applicable for any size of the sparsity pattern. \off{\rev{Anyway, unlike traditional schemes, e.g., classical group testing and CS approaches, where the sparsity is defined as the ratio between $k$ and $T$, using the group testing for non-binary inputs as we propose in this work, the sparsity is defined as the ratio between $k$ and $T\cdot l$. Hence, for sufficiently large $l$, the samples are considered as sparse even when $k/T \rightarrow 1$. \off{For example, as presented in Fig~\ref{fig:non_zero_samples}, for $T=50$ and $k=30$. Where $k=T$, empirically CoMa decoder gets approximately similar MSE results as with scalar quantization.}}}

    \rev{We note that SQuaTS is geared towards low rate and low resolution scenarios, where one typically has much to gain by incorporating coding schemes in quantization over merely utilizing serial scalar ADCs. As shown in  \eqref{eqn:asympt}, the codeword length scales logarithmically with $l$ and $T$. If the codewords are too long due to design issues, e.g., high values of $l$ yielding fine resolution quantization are desired, fragmentation as proposed in \cite{wsn2017drivejornal} can be used. That is, in the SQuaTS system, fragmentation by dividing the set of input bits $T$ or set the levels $l$ into two or more groups.\off{Doing so divides the set of input bits $T$ or set of the levels $l$ is partitioned into two or more groups.} This operation reduces the memory size and the decoding complexity as needed. Furthermore, when the sparsity level is approximately identical among the groups, as is the case for large groups with i.i.d. inputs or in the presence of prior knowledge of structured sparsity, fragmentation does not compromise  the performance of the proposed SQuaTS system, as we numerically demonstrate in Fig.~\ref{fig:non_zero_samples} and Fig.~\ref{fig:high_k} in Section~\ref{sec:sims}.}
	
	\subsection{Discussion}  \label{subsec:Discussion}
	We next discuss the practical aspects of this method and its rationale. In particular, we first detail the benefits which stem from the \ac{sqrss} architecture and compare it to related schemes for quantizing sparse signals, such as direct application of scalar quantizers as well as compress-and-quantize \cite{jacques2011dequantizing,jacques2013robust, boufounos20081,  gunturk2010sigma,saab2018quantization}. Then, we elaborate on the relationship between \ac{sqrss} and group testing theory.
	
	\subsubsection{Practical benefits and comparison with related schemes}
	 \ac{sqrss}  is specifically designed to utilize scalar \acp{adc} in a serial manner. The resulting structure can be therefore naturally implemented using practical  \ac{adc} architectures \cite{kosonocky1999analog}. Moreover, \ac{sqrss} is tailored to exploit an underlying sparsity of the input signal. Straight-forward application of a serial scalar \ac{adc} requires $\lenT \cdot \log (\lenL + 1)$ bits to achieve the distortion $D_{\lenT}(\lenL) $ in \eqref{eqn:ScaMSE2}. Our proposed \ac{sqrss}, which exploits the sparsity of the input by further encoding the \ac{adc} output in a serial manner,  requires $\bits = \mathcal{O} \left(\SpaSize \log (\lenT\lenL)\right)$ bits to achieve the same \ac{mse}, as follows from Corollary \ref{cor:asympt}.
	This implies that for highly sparse signals, i.e., when $\SpaSize \ll \lenT$, \ac{sqrss}  significantly reduces the number of bits while utilizing  scalar \acp{adc} for acquisition, by introducing an additional encoding applied in a serial manner at its output.	
	The resulting approach thus bears some similarity to previously proposed universal quantization methods which are based on applying entropy coding to the output of a quantizer. See \cite{ziv1985universal} for scalar quantizers and \cite{zamir1992universal} for vector quantizers. Indeed,  since the codewords representing the quantized value are generated according to a Bernoulli distribution with mean value $\frac{\ln (2)}{\SpaSize}$ and the outcome $\myY _\lenT$ is the Boolean OR of $\SpaSize$ inputs, it can be shown that its entries approach being independent and equally distributed on the set $\{0,1\}$ for large values of $\SpaSize$, namely, the optimal lossless encoded representation, as achieved using entropy coding \cite[Ch. 5]{C10}. Nonetheless, to apply conventional entropy coding, one must first quantize all the entries of the input (or at least a large block of input entries) before applying the encoding process, requiring a large number of bits to store and represent this quantized block. \ac{sqrss}, which is specifically designed to operate in a serial manner, updates the same $\bits$-bits register on each incoming sample, thus avoiding the need to store the output of the serial scalar \ac{adc} $Q(\cdot)$ prior to its encoding.
	
	Arguably the most common approach considered in the literature for quantization of sparse signals is based on \ac{cs} techniques. In these methods, a  sensing matrix is used to linearly combine  the sparse signal into a lower-dimensional vector, which is then quantized, either using optimal vector quantization, as in  \cite{kipnis2018single}, or more commonly, via some scalar continuous-to-discrete mapping, as in \cite{jacques2011dequantizing,jacques2013robust, boufounos20081,  gunturk2010sigma}. When the input signal is a sequentially acquired  time sequence, as considered here, such \ac{cs} based techniques need to store the incoming samples in the analog domain prior to their combining using the sensing matrix\footnote{One may also store only the lower-dimension compressed vector and update its entries on each incoming input sample. Yet, this approach still requires the storage of a large amount of samples in analog as quantization can only be carried out once the complete signal is compressed.}. This requirement, which does not exist for our proposed \ac{sqrss}, limits the applicability of these proposed schemes, especially for long time sequences, i.e., in the regime typically considered in the literature. It should be stressed that \ac{cs}-based methods assumes a \emph{simple} acquisition, i.e. conventionally linear, at the expense of a more complex decoding process. \ac{sqrss} on the other hand, is a more involved encoding scheme which is tailored to the task of serial acquisition of sparse signals.
	
	\rev{An additional benefit of \ac{sqrss} compared to \ac{cs}-based methods, stems from its usage of binary codebooks for compression.}  \ac{sqrss} operates directly on $\{s[i]\}$, and not on its lower dimensional projections as in \ac{cs}, assign to binary codewords which originate from group testing theory, based on the value of $\{s[i]\}$, and more precisely, on $\{Q(s[i])\}$. By doing so, \ac{sqrss} achieves improved immunity to measurement errors compared to operating over fields of higher cardinality. This benefit is translated to more accurate digital representations, as numerically  demonstrated in   Section \ref{sec:sims}.
	
	\rev{Our analysis of \ac{sqrss} is carried out assuming that the value of $k$ used in the design of the quantization is the true sparsity level, as detailed in Section~\ref{subsec:Pre_Problem}. However, \ac{sqrss} is applicable and its performance guarantees hold also when only an upper bound on $k$ is known, and the actual sparsity pattern is smaller. This is a common assumption in the group testing literature \cite{macula1999probabilistic}, and it  was shown that  $k$ can be estimated in real-time with $O(\log T)$ bits \cite{damaschke2010bounds,damaschke2010competitive}.  In general, the sparsity assumption, i.e., $k \ll T$, allows the incorporation of group testing tools to yield accurate and computationally feasible serial quantization. When the number of non-zero samples is higher than the value of $k$ used to design the code, the error probability will increase. However, if the quantization rate of the code defined is higher than the sufficiency rate (given in Theorem~\ref{direct theorem}), the decoder will not fail drastically. Only some degradation in the MSE results are obtained, as we numerically demonstrate in Section~\ref{subsec:SimSingle}.}
	
	\rev{The codebook used by the proposed SQuaTS system, as defined in Section III-A, is generated randomly. When $b$ is large, the probability of repetition is small \cite{atia2012boolean}. In the case that $b$ is small, one can choose in the codebook generation stage "typical codewords", namely, only codewords without repetition to avoid errors in the recovery at the decoding process. We note that the numbers of non-zeros bits in the codewords is dependent on $b$ rather than on the sparsity level  $k$. On average, the $nTl$ required codewords in SQuaTS system have $p \cdot b \approx  \frac{\ln2}{k} \cdot (1+\varepsilon) k \log_2 nTl \triangleq |c(1)|$ non-zeros bits. Hence, the number of possible codewords is given by, $\binom{b}{|c(1)|} \geq \left(\frac{b}{|c(1)|}\right)^{|c(1)|}$, and thus to be able to generate sufficient codewords without repetitions in the codebook, it is required that
    \[
        nTl \leq \left(\frac{b}{|c(1)|}\right)^{|c(1)|},
    \]
    for any $k,n,l,T$ and some $\varepsilon > 0 $. Rearranging terms in the inequality results in
   \begin{equation}
   \label{eqn:EpsPrime}
        \frac{1}{\ln2\log_2(\frac{k}{\ln2})}-1\leq \varepsilon^{\prime},
   \end{equation}
    where $\varepsilon = \varepsilon^{\prime} + \varepsilon^{\prime\prime}$ and  $0 \leq \varepsilon^{\prime} < \varepsilon$. In Fig.~\ref{fig:varepsilon_k} it is numerically demonstrated that for any $k\neq 1$ there are sufficient possible codewords for any $n,l,T$ with $\varepsilon^{\prime} = 0$. For $k=1$ it is required to increase the size of $b$ by $\varepsilon^{\prime}$ to obtain sufficient possible codewords without repetitions.}

    \begin{figure}
	    \centering
	    {\includegraphics[trim=0cm 0.0cm 0cm 0.0cm, width = 1 \columnwidth]{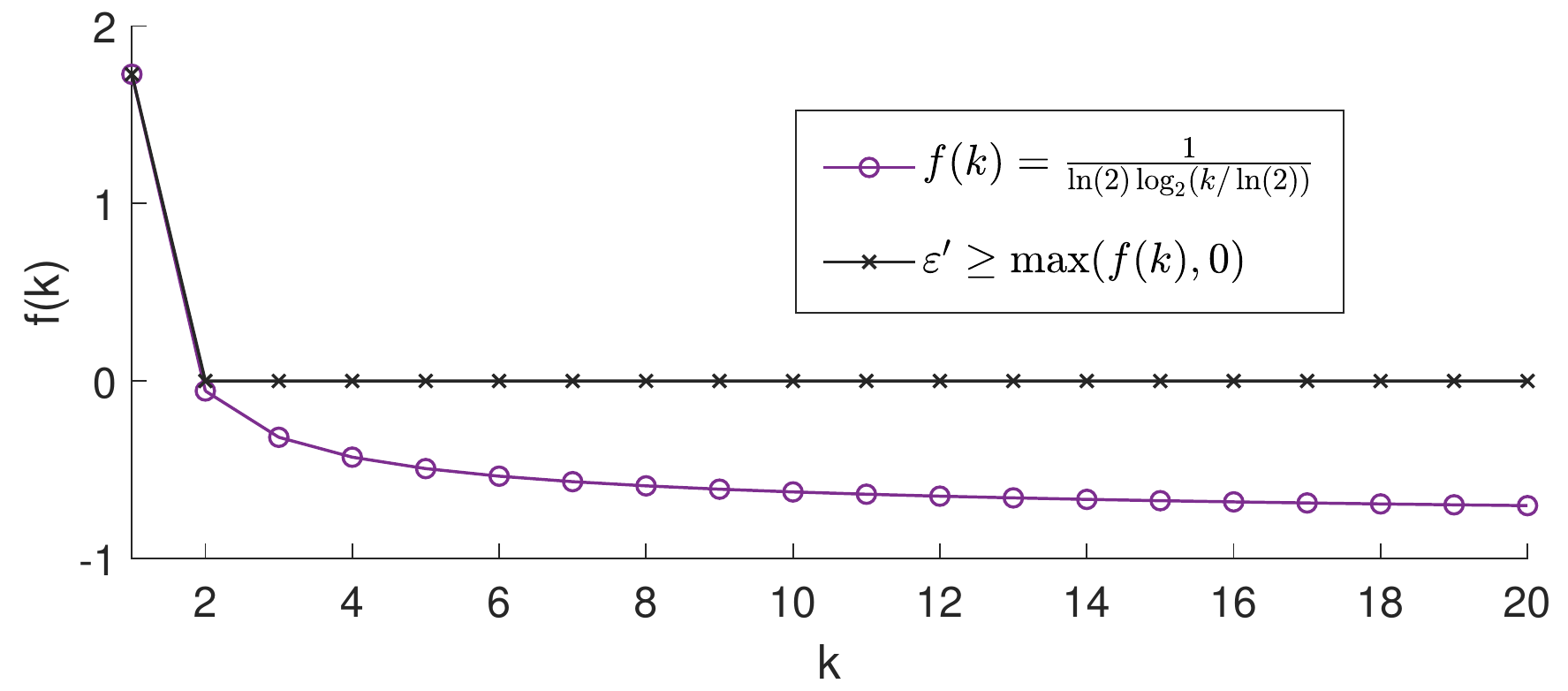}}
	    \caption{\rev{Setting of $\varepsilon^{\prime}$ in \eqref{eqn:EpsPrime} required to obtain sufficient possible codewords without repetitions.}}
	    \label{fig:varepsilon_k}
    \end{figure}
	
	A possible drawback of \ac{sqrss} compared to \ac{cs}-schemes stems from the fact that \ac{sqrss} is designed assuming that the signal $\{s[i]\} $ is sparse, i.e., that at most $\SpaSize$ of its entries are non-zero.  \ac{cs} methods are commonly capable of reliably recovering signals which are sparse in an alternative domain, namely, when there exists a non-singular matrix $\myMat{P}$ such that $\myMat{P}\myS $ is sparse, where $\myS$ is the $\lenT \times 1$ vector representation of $\{s[i]\} $. It is noted though that \ac{sqrss} can still be applied to such signals by first projecting the signal using the matrix $\myMat{P}$, resulting in a sparse signal which can be represented using \ac{sqrss}. Such application however requires the entire signal to be first acquired, as is the case with conventional \ac{cs} methods.
	
	Finally, we note that while \ac{cs} based techniques typically require the ratio between the sparsity pattern size $\SpaSize$ and the input dimensionality $\lenT$ to be upper bounded, our proposed \ac{sqrss} can be applied for any ratio between $\SpaSize$ and $\lenT$. \off{It should be emphasized though that when $\SpaSize$ is not sufficiently smaller than $\lenX$, the proposed \ac{sqrss} may no  longer be more efficient (in terms of number of quantization bits required to achieve the \ac{mse} $D_{\lenT}(\lenL) $) compared to a straight-forward application of a serial scalar quantizer $Q(\cdot)$.\footnote{\textcolor{red}{Alejandro - can you please verify this statement? Can we even quantify when does our \ac{sqrss} become less effective than just quantizing the input using a serial scalar quantizer?}}}
	
	\subsubsection{Relationship to group testing theory}
	As mentioned in Subsection \ref{subsec:codebook}, the \ac{sqrss} code construction is inspired by codebooks designed for  group testing.
	Group testing first originated from the need to identify a small subset $\SpaSize$ of infected draftees with syphilis from a large set of a population with size $\lenT$, using as few pool measurements $\bits$ as possible. Thus, group testing measurements, i.e., codewords, are designed such that given an outcome vector of size $\bits$, one should be able to identify the defective items, namely the non-zero inputs. As discussed in Subsection \ref{subsec:codebook}, a fundamental difference between our setup and conventional group testing stems from the fact that while in group testing the inputs are represented over the binary field, in our setting the inputs are the quantized values $\{Q(s[i])\}_{i=1}^{\lenT}$ whose alphabet size is $\lenL + 1$. Our code construction overcomes this difference by exploiting recent code designs targeting extended group testing models, and in particular, those considered in \cite{wsn2017drivejornal} and in the secure group testing framework \cite{9218939}.
	
	The resulting group testing based code design leads to a compact and accurate digital representation. In particular, due to the binning structure of the code suggested, when $\SpaSize$ inputs are different from zero there are only $\binom{\lenT}{\SpaSize}l^\SpaSize$ possible subsets of codewords from which the output of the encoder is selected. For comparison, in a naive codebook which assigns a different codeword to each quantized input value without binning, there are $\binom{\lenT \lenL}{\SpaSize}$ possible subsets. This  significantly reduces the number of bits required in the outcome vector.
	
	Finally, we note that the construction of the suggested code does not depended on the distribution of the input signal, which is similar to universal quantization methods \cite{ziv1985universal}. In fact, the distribution of $\{s[i]\} $ only affects the \ac{mse} induced by the serial scalar quantizer $Q(\cdot)$. The codebook presented in Subsection \ref{subsec:codebook} is  designed to allow reliable reconstruction under the worst case scenario, i.e., the setting in which $\{Q(s[i])\}_{i=1}^{\lenT}$  are i.i.d. uniformly distributed.
	Intuitively, the quantization rate required to achieve the \ac{mse} $D_{\lenT}(\lenL) $ can be further reduced by exploiting a-priori information on the input distribution. This approach was considered for the original group testing problem with, e.g., Poisson priors in  \cite{emad2014poisson}. We leave  investigation of this approach for future study.
	
	\bigskip
	
	\ifFullVersion
	\section{Efficient Decoding Algorithm}\label{sec:efficient_algorithms}
	The \ac{sqrss} system detailed in the previous section and its performance analysis rely on \ac{ml} decoding. In particular, the  decoder uses an \ac{ml} approach to identify the sub-set of $\SpaSize$ samples of the sparse signal which may be non-zero, along with their corresponding quantized values. Such a decoding algorithm suffers from high computational complexity, as noted in Corollary \ref{cor:complexity}.
	To overcome this drawback of \ac{sqrss}, in this section we propose a decoding algorithm based on the Column Matching (CoMa) method used in \cite{chan2014non} for group testing setups.  The proposed algorithm is presented in Subsection \ref{subsec:Coma}. In Subsection \ref{subsec:ComaAnalysis} we analyze the algorithm performance and discuss its benefits.
	
	\subsection{CoMa Decoder}  \label{subsec:Coma}
	As mentioned above, our proposed decoding algorithm is based on the CoMa method \cite{chan2014non}.
	Broadly speaking, unlike the \ac{ml} decoder, which looks for the set of $\SpaSize$ codewords from different code bins which are most likely to correspond to the binary vector $\myY_{\lenT}$, CoMa decoder attempts to match a codeword from each bin to $\myY_{\lenT}$ separately. Replacing the joint search  for a set of codewords with a separate examination of each codeword significantly reduces the computational burden, at the cost of degraded decoding accuracy, as we show in Subsection \ref{subsec:ComaAnalysis}. The resulting decoder operates with the same code construction and encoder mapping as described for \ac{sqrss} in   Subsections \ref{subsec:codebook} and \ref{subsec:encoder}, respectively, thus maintaining the sequential operation and natural implementation with practical \acp{adc} of \ac{sqrss}.
	
	In particular, given an encoder output $\myY_{\lenT}$, the CoMa decoder consists of two stages: First, it scans the codebook $\{\mySet{B}_i\} = \{\myCodeword_{j,i}\}$, removing all codewords which could not have resulted in $\myY_{\lenT}$. Since the encoding procedure, and specifically, step \ref{itm:E3} detailed in Subsection \ref{subsec:encoder}, is based on a logical OR operation between the selected codewords from each bin, any codeword which has a non-zero entry in an index of zero entry of  $\myY_{\lenT}$ could not have been used in the encoding of $\myY_{\lenT}$.
	Once this elimination stage is concluded, the resulting set of possible codewords, which we denote by $\mySet{C}$, is used to generate the digital representation $\{\hat{s}[i]\}$. Specifically, for every remaining codeword $\myCodeword_{j,i}$ in $\mySet{C}$, the decoder sets $\hat{s}[i]$ to be the quantized value assigned to $\myCodeword_{j,i}$, i.e., $\hat{s}[i] = q_j$.
	The time instances $i \in \mySet{\lenT}$ for which there is no codeword in  $\mySet{C}$ are assumed to have originated from the zero codeword $\myCodeword_{0}$, and are thus set to $q_0$.
	If the remaining set of codewords $\mySet{C}$ contains several codewords from the same bin, i.e., $\exists j_1 \neq j_2$ such that $\myCodeword_{j_1,i} \in \mySet{C}$ and $\myCodeword_{j_2,i} \in \mySet{C}$, then one of these codewords is randomly selected as the one used to generate the corresponding recovered sample $\hat{s}[i]$. The decoding method is summarized as Algorithm \ref{CoMAalgo}.

%
%
%
%
%
	\begin{algorithm}[t!]
		\SetKwInOut{Input}{Input}
		\caption{CoMa Decoding.\label{CoMAalgo}}
		\small
			\Input{ $ \myY_\lenX  = (y_{1,\lenX}, \ldots, y_{\bits,\lenT})$, codebook $\{\myCodeword_{j,i}\}$.}
			\KwData{ $\mySet{C} \leftarrow \{(j,i) :  j \in\{1,\ldots,\lenL\}, i \in \mySet{\lenT}\}$.}
			\For{$i_b = 1$ to $\bits$}
			{
				\If{$\myY_{i_b,\lenT} = 0$}
				{
					$\mySet{C} \leftarrow \mySet{C} \backslash \{(j,i): (\myCodeword_{j,i})_{i_b} = 1\}$\;
				}
			}
			\For{$i=1$ to $\lenT$}
			{
				\uIf{$\exists j_i$ such that $(j_i,i) \in \mySet{C}$}
				{
					$\hat{s}[i] \leftarrow \ScaQuant_{j_i}$\;
				}
				\Else
				{
					$\hat{s}[i] \leftarrow \ScaQuant_{0}$\;
				}
			}
			\KwOut{Recovered time sequence  $\{\hat{s}[i] \} $.}
	\end{algorithm}

	\subsection{Analysis and Discussion}  \label{subsec:ComaAnalysis}
	The CoMa decoder is based on examining each codeword one-by one, which is less computationally complex compared to the straight forward  \ac{ml} evaluation, at the cost of reduced performance. 	
	As we show next, Algorithm \ref{CoMAalgo} requires a larger quantization rate  $\Rate$ to guarantee that the \ac{mse} $D_{\lenT}(\lenL)$ is achievable for any fixed $\lenL$ compared to the \ac{ml} decoder detailed in Subsection \ref{subsec:decoder}, thus trading computational burden for quantization rate.
	The performance of the \ac{sqrss} system using the CoMa decoder is stated in the following proposition:
	
	\begin{prop}
		\label{pro:ComaRate}
		The \ac{mse} $D_{\lenT}(\lenL) $ is achievable by \ac{sqrss} with the CoMa decoder in the limit $\lenT \rightarrow \infty$ when for some $\varepsilon>0$, the quantization rate $\Rate$ satisfies:
		\begin{equation}
		\label{eqn:ComaRate}
		\Rate \ge \bar{\Rate}_{\varepsilon}(\lenL) \triangleq  \frac{(1+\varepsilon) e}{\lenT} \SpaSize \log \left( \lenT\cdot\lenL\right),
		\end{equation}
		where $e$ is the base of the natural logarithm. For finite and large $\lenT$, the probability of the \ac{mse} being larger than  $D_{\lenT}(\lenL) $ is at most $\lenT^{-\varepsilon}$.
	\end{prop}
	\begin{proof}
		The proof directly follows using similar arguments as in \cite{chan2014non}, where instead of $\lenT$ possible codewords, in the \ac{sqrss} system there are $\lenT\cdot \lenL$ possible codewords.
	\end{proof}
	
	Comparing \eqref{eqn:ComaRate} and Corollary \ref{cor:BoundRate}, which states an upper bound on the corresponding achievable quantization rate when using the \ac{ml} decoder $\Rate_{\varepsilon}(\lenL)$, indicates that $\bar{\Rate}_{\varepsilon}(\lenL) > \Rate_{\varepsilon}(\lenL)$, i.e., the CoMa decoder requires larger quantization rates to achieve the \ac{mse} $D_{\lenT}(\lenL) $  than the \ac{ml} decoder. However, Proposition \ref{pro:ComaRate} implies that, as the sequence length $\lenT$ increases, the asymptotic growth in the number of bits, $\bits = \Rate \lenT$, is $\bits=\mathcal{O}(\SpaSize \log \lenT + \SpaSize \log \lenL)$, i.e., the same as the growth rate characterized in Corollary \ref{cor:asympt} for \ac{sqrss} with the \ac{ml} decoder. Furthermore, Proposition \ref{pro:ComaRate} holds for any sparsity pattern size $\SpaSize$, while the corresponding analysis of the \ac{ml} decoder requires $\SpaSize$ not to grow with the sequence length, i.e.,  $\SpaSize = \mathcal{O}(1)$.
	
	Based on the characterization of the asymptotic growth of the number of bits $\bits$, we obtain the complexity of Algorithm~\ref{CoMAalgo}.
	The computational burden under which \ac{sqrss} is capable of achieving the  \ac{mse} $D_{\lenT}(\lenL) $  when using the CoMa decoder is stated in the following Corollary:
	\begin{corollary}
		\label{cor:ComaComp}
		\ac{sqrss} with the CoMa decoder achieves the \ac{mse} $D_{\lenT}(\lenL) $ in the limit $\lenT \rightarrow \infty$ with complexity on the order of $\mathcal{O}( \lenT\cdot \lenL \cdot \SpaSize \log (\lenT\cdot \lenL))$ operations.
	\end{corollary}
	
	\begin{IEEEproof}
		Algorithm~\ref{CoMAalgo} essentially scans over all the $\lenT \cdot \lenL$ codewords, comparing each to the $\bits$-bits binary $\myY_{\lenT}$. Consequently, its number of operations is on the order of $\mathcal{O}(\lenT \lenL\bits)$. Combining this with the observation that for $\bits=\mathcal{O}(\SpaSize \log \lenT + \SpaSize \log \lenL)$, \ac{sqrss} with the CoMa decoder achieves the \ac{mse}  $D_{\lenT}(\lenL) $  in the limit $\lenT \rightarrow \infty$ proves the corollary.
	\end{IEEEproof}
	
	Comparing the  complexity of the CoMa method in Corollary \ref{cor:ComaComp} to that of the \ac{ml} decoder in Corollary \ref{cor:complexity} reveals the computational gains of Algorithm~\ref{CoMAalgo}. Focusing on highly sparse setups where $\lenT \gg \SpaSize$, and recalling that $\binom{\lenT}{k\SpaSize} > \frac{\lenT^\SpaSize}{\SpaSize^\SpaSize}$, Corollary \ref{cor:complexity} indicates that the complexity of the \ac{ml} decoder is larger than a term which is dominated by $\lenT^\SpaSize \log \lenT$. Consequently, the \ac{ml} decoder becomes infeasible as $\SpaSize$ grows.
	The corresponding computational complexity of the CoMa decoder in Corollary \ref{cor:ComaComp} is dominated by the term $\lenT\log \lenT$, implying that it  can be implemented for practically any sparsity pattern satisfying $\lenT \gg \SpaSize$. In fact, unlike the \ac{ml} decoder, Algorithm~\ref{CoMAalgo} is invariant to the value of $\SpaSize$, which is required here only for setting the distribution of the codewords, as explained in Subsection \ref{subsec:codebook}, and for determining the quantization rate under which a desired \ac{mse} level is achievable with sufficiently high probability.

	Finally, we note that the CoMa method detailed here is only one example of an efficient algorithm given in the literature that can be leveraged by the \ac{sqrss} system decoder. In fact, it is possible to use several efficient algorithms proposed initially designed for the purpose of traditional group testing, see e.g., \cite{aldridge2014group,coja2019information},  or even modify the coding scheme to be based on systematic codes, as studied in \cite{bui2019efficient}. \rev{An additional decoding scheme which can be combined with SQuaTS is the recently proposed two-stage multi-level decoding method recently proposed in  \cite{cohen2020multi} for COVID-19 pooled testing.} We leave the analysis of \ac{sqrss} with these alternative decoding methods to future investigation.
%
%
	\fi 
	
	\section{\ac{sqrss} for Distributed Quantization}\label{sec:DistQuant}
	In distributed quantization, a set of signals are acquired individually and jointly recovered. Such setups correspond, e.g., to sensor arrays, where each bit-constrained sensor observes a different time sequence, and all the measured sequences should be recovered by some centralized server. Each sensor may have a direct link to the server, resulting in a single hop structure, or must convey its quantized measurements over a route with intermediate nodes, representing a multi-hop topology.
	In this section we show how \ac{sqrss} can be naturally applied to distributed quantization setups.
	 We first formulate the system model for distributed quantization in Subsection \ref{subsec:DistQuantModel}. Then in Subsection \ref{subsec:DistQuantSingle} we adapt \ac{sqrss} to distributed quantization over single-hop networks, and discuss how to it can applied to multi-hop networks in Subsection \ref{subsec:DistQuantMulti}.
	
	\subsection{Distributed Quantization System Model}\label{subsec:DistQuantModel}
	We consider distributed acquisition and centralized reconstruction of $\lenX$ analog time sequences. The sequences, denoted $\{s_m[i]\}_{m=1}^{\lenX}$  are separately observed over the period $i \in  \mySet{\lenT}$, representing, e.g., sources measured at distinct physical locations. The signals are jointly sparse with joint support size $\SpaSize$ \cite{baron2009distributed}. We focus on two models for the joint sparsity of $\{s_m[i]\}$:
	
	\paragraph*{Overall sparsity} Here, the ensemble of all $\lenX$ signals over the observed duration is $\SpaSize$-sparse, namely, the set $\{s_m[i]\}_{m \in \lenXset, i\in\mySet{T}}$\rev{, with $\lenXset\triangleq \{1,\ldots,\lenX\}$,} contains at most $\SpaSize$ non-zero entries. This model, in which no structure is assumed on the sparsity pattern of each signal, coincides with the general joint-sparse model of \cite{baron2009distributed} without a shared component.
	\paragraph*{Structured sparsity} In the second model the signals are sparse in both time and space. Specifically, for each $m \in \lenXset$, the signal $\{s_m[i]\}_{ i\in\mySet{T}}$ is $k_t$-sparse, while for any $i\in \mySet{T}$, the set $\{s_m[i]\}_{m \in \lenXset}$ is $k_s$-sparse. This setup is a special case of overall sparsity with $\SpaSize = k_s k_t$ with an additional structure which can facilitate recovery.

	Each time sequence $\{s_m[i]\}_{i\in\mySet{\lenT}}$ is encoded  into a $\bits$-bits vector denoted $\myX_m\in \{0,1\}^{\bits}$.
	The encoding stage is carried out in a distributed manner, namely, each  $\myX_m$ is determined only by its corresponding time sequence $\{s_m[i]\}_{i\in\mySet{\lenT}}$ and is not affected by the remaining sequences. The binary vectors $\{\myX_m\}$ are conveyed to a single centralized decoder over a network, possibly undergoing several links over multi-hop routes. We consider a binary network model, such that each link can be either broken or error-free. The centralized decoder maintains links with $\lenZ$ nodes. The $\lenZ$ network outputs, denoted $\{\myZ_m\}_{m=1}^{\lenZ}$, are collected by the decoder into a $\bits$-bits vector $\myY\in\{0,1\}^{\bits}$, which is decoded into a digital representation of $\{{s}_m[i]\}$, denoted $\{\hat{s}_m[i]\}$, as illustrated in Fig. \ref{fig:DistQCS_System}.
	The accuracy  is measured by the \ac{mse}   $\sum_i \sum_m\mathbb{E}\left[ (s_m[i]  \!-\! \hat{s}_m[i] )^2\right]$ and the quantization rate  is $\Rate = \frac{\bits}{\lenX\lenT}$ here.
	
	\begin{figure}
		\centering
		{\includegraphics[trim=0cm 0.0cm 0cm 0cm, width = \columnwidth]{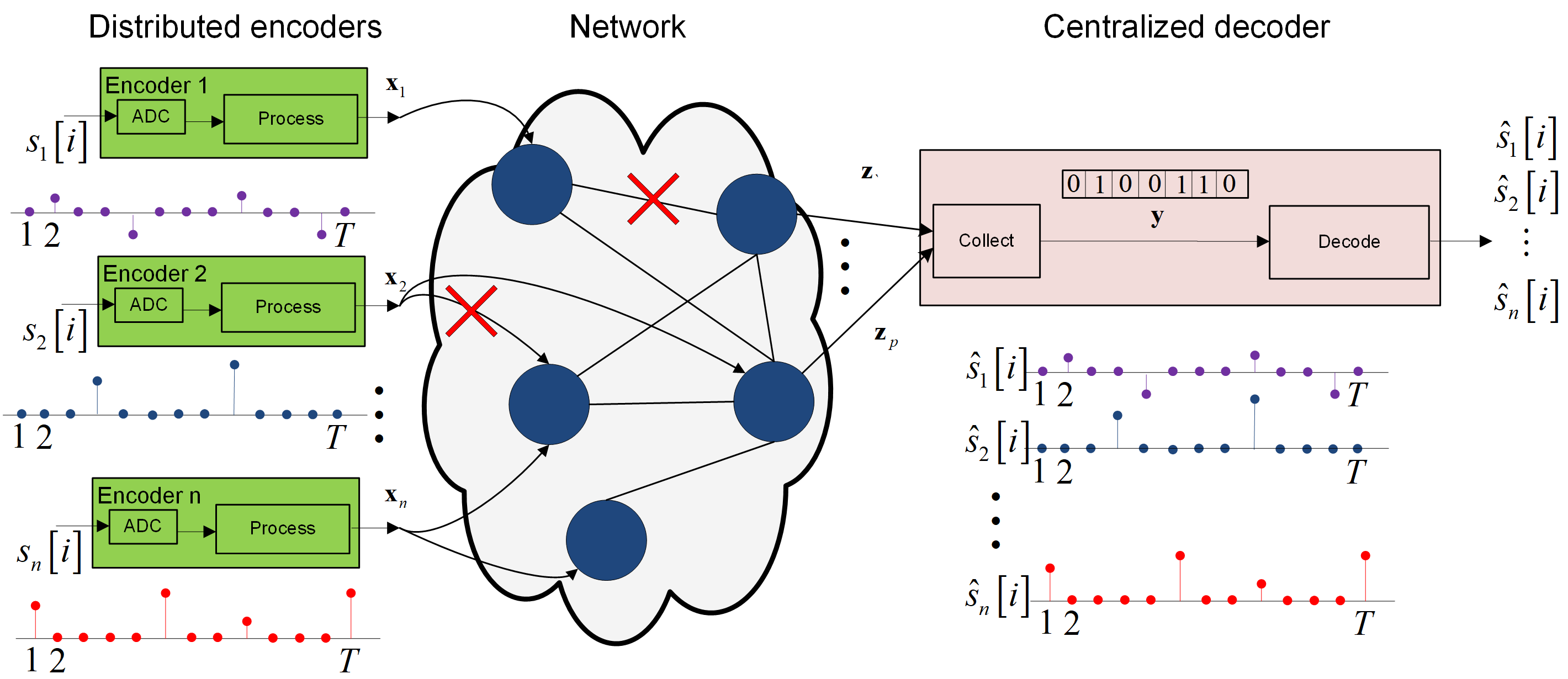}}
		\caption{Distributed quantization system illustration.}
		\label{fig:DistQCS_System}
	\end{figure}
	
%
%
%
		
	\subsection{Single-Hop Networks}\label{subsec:DistQuantSingle}
	We next show how \ac{sqrss}, proposed in Section \ref{sec:sqrss} for the quantization of a single time sequence, can be adapted to distributed quantization setups. We begin here with single-hop networks, where  each encoder has a direct error-free link to the centralized decoder. In particular, the applicability of \ac{sqrss} to distributed quantization stems from the fact that its encoding procedure, and specifically step \ref{itm:E3}, is based on applying a logical OR operation to a set of codewords, selected according to the quantized values of each observed sample.  The associative property of the logical OR operation implies that an \ac{sqrss} encoder can be applied to an ensemble of $\lenX$ sequences by separately encoding each sequence with an \ac{sqrss} encoder using a different codebook.
	
	In particular, in order to apply \ac{sqrss} in a distributed quantization method, one must simply generate a codebook for the ensemble of sequences, i.e., a total of $\lenX \lenT$ codewords, via the generation procedure detailed in Subsection \ref{subsec:codebook}. Then, the generated codewords are distributed among the $\lenX$ encoders, and each encoder of index $m \in \{1,\ldots,\lenX\} $ uses these codewords to apply the \ac{sqrss} encoding method detailed in Subsection \ref{subsec:encoder} to its corresponding sequence $\{s_m[i]\}$. \rev{The output of the $m$-th encoder at time instance $\lenT$, i.e., after its sequence is acquired, is used as the binary vector $\myX_m$ conveyed to the centralized receiver.
	Each node in the network performs a Boolean OR operation of all incoming input vectors $\myX_{m^{\prime}}, m^{\prime} \in \{1,\ldots,P^{\prime}\}$. The output of each node given $P^{\prime}$ inputs vectors is thus $ \vee_{m=1}^{P^{\prime}}\textbf{x}_m$. We denote the output vector of the last nodes before the centralized decoder by $\textbf{z}_m$.} For a single-hop network, the bit vectors received by the decoder, $\{\myZ_m\}_{m=1}^{\lenZ}$, are given by $\myZ_m = \myX_m$ and $\lenZ = \lenX$. Consequently, by the associativity of the logical operator, the centralized decoder can recover the output of applying an \ac{sqrss} encoder to {\em the ensemble of sequences}, denoted $\myY_{\lenT}$, from the outputs of the separate encoders, via
	\begin{equation}
	\label{eqn:recovery}
	\myY_{\lenT}  = \bigvee_{m=1}^{P}\myZ_m.
	\end{equation}
	Using $\myY_{\lenT}$, the centralized decoder can recover the estimate of the ensemble of time sequences $\{\hat{s}_m[i]\}$, via conventional \ac{sqrss} decoding, i.e., \ac{ml} decoding detailed in Subsection \ref{subsec:decoder} or the CoMa method presented in Section \ref{sec:efficient_algorithms}.
	
	 The fact that the proposed distributed adaptation of \ac{sqrss} effectively implements the application of \ac{sqrss} to the ensemble of sequences implies that the achievable performance guarantees of \ac{sqrss}, derived in Subsections \ref{subsec:MSE} and \ref{subsec:ComaAnalysis} for the \ac{ml} and CoMa decoders, respectively,  hold also in the distributed setup. For example, by letting $D_{\lenT, \lenX}(l)$ be the \ac{mse} achieved when $\hat{s}_m[i] = Q(s_m[i])$, namely, a desirable \ac{mse} determined only by the \ac{adc} resolution, it follows from Theorem \ref{direct theorem} that $D_{\lenT, \lenX}(l)$ is achievable by the distributed quantization scheme when its rate satisfies the condition stated in the following proposition:
	 \begin{prop}\label{distributed theorem}
	 	\ac{sqrss} adapted to distributed quantization using the \ac{ml} decoder  achieves the  \ac{mse}   $D_{\lenT, \lenX}(l)$ in the limit  $\lenX\lenT \rightarrow \infty$ with $\SpaSize=\mathcal{O}(1)$  when the quantization rate $\Rate$ satisfies the following inequality 	for some $\varepsilon>0$:
	 	%
	 	\begin{eqnarray}\label{eq:distributed theorem}
	 	\Rate \ge \max_{u \in\mySet{I}(\SpaSize)  }\frac{(1+\varepsilon)k}{u \cdot \lenX T}\log\left( \vartheta\cdot \lenL^u\right).
	 	\end{eqnarray}
	 	The parameter $\vartheta$ and the set $\mySet{I}(\SpaSize)$ depend on the type of joint sparsity: for overall sparsity, $\vartheta = \binom{\lenX \lenT}{\SpaSize}$ and $\mySet{I}(\SpaSize) = \{1,\ldots,\SpaSize\}$, while for structured sparsity $\vartheta =  {\lenX \choose \SpaSize_t}{\lenT \choose \SpaSize_s}$, and $\mySet{I}(\SpaSize= k_s k_t) = \{u_t u_s: 1 \leq u_t \leq \SpaSize_t, 1 \leq u_s \leq \SpaSize_s\}$.
	 \end{prop}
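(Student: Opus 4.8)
The plan is to reduce the distributed scheme to the single-sequence \ac{sqrss} of Theorem~\ref{direct theorem} via the associativity of the Boolean OR, and then re-run the \ac{ml} error-exponent argument with the combinatorial quantities that correspond to each joint-sparsity model. For a single-hop network the vector held by the centralized decoder is $\myY_{\lenT}=\bigvee_{m=1}^{\lenZ}\myZ_m=\bigvee_{m=1}^{\lenX}\myX_m$ by \eqref{eqn:recovery}, and each $\myX_m$ is itself the OR of the codewords selected by the $m$th encoder along its acquisition window. Since a zero sample always selects the common all-zero codeword $\myCodeword_{0}$, the register $\myY_{\lenT}$ is exactly the OR of at most $\SpaSize$ non-zero codewords, one taken from each bin indexed by an active pair $(m,i)$, where the $\lenX\lenT$ bins are drawn i.i.d.\ Bernoulli$(\ln 2/\SpaSize)$ by the construction of Subsection~\ref{subsec:codebook}. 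Hence the distributed encoder implements precisely the \ac{sqrss} encoder of Subsection~\ref{subsec:encoder} applied to the length-$\lenX\lenT$ ensemble sequence (consistently, the rate $\Rate=\bits/(\lenX\lenT)$ is the single-sequence rate with $\lenT$ replaced by $\lenX\lenT$), the centralized \ac{ml} decoder of Subsection~\ref{subsec:decoder} operating on $\myY_{\lenT}$ recovers $\{\hat s_m[i]\}=\{Q(s_m[i])\}$ whenever it identifies the correct collection of $\SpaSize$ codewords, and in that event the attained distortion is $D_{\lenT,\lenX}(l)$ by definition.

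For the overall-sparsity model the claim is then immediate: applying Theorem~\ref{direct theorem} with $\lenT$ replaced by $\lenX\lenT$ shows $D_{\lenT,\lenX}(l)$ is achievable, with error probability decaying exponentially in $\lenX\lenT$, whenever $\Rate\ge\max_{1\le u\le\SpaSize}\frac{(1+\varepsilon)\SpaSize}{u\lenX\lenT}\log\big(\binom{\lenX\lenT-\SpaSize}{u}\lenL^{u}\big)$. Since $\SpaSize=\mathcal{O}(1)$ gives $\binom{\lenX\lenT-\SpaSize}{u}\le\binom{\lenX\lenT}{\SpaSize}$ for every $u\le\SpaSize$ once $\lenX\lenT$ is large, any $\Rate$ meeting \eqref{eq:distributed theorem} with $\vartheta=\binom{\lenX\lenT}{\SpaSize}$ and $\mySet{I}(\SpaSize)=\{1,\ldots,\SpaSize\}$ a fortiori meets the condition supplied by Theorem~\ref{direct theorem}, which proves this case.

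For the structured-sparsity model I would repeat the union-bound step of the proof of Theorem~\ref{direct theorem}, but with the \ac{ml} decoder — and hence the union over competing hypotheses $\mySet{X}_{\tilde w}$ in \eqref{eqn:MLDef} — restricted to support patterns consistent with the $(\SpaSize_s,\SpaSize_t)$ structure. Two bookkeeping changes enter: the number of admissible support patterns becomes $\vartheta={\lenX\choose\SpaSize_t}{\lenT\choose\SpaSize_s}$ rather than $\binom{\lenX\lenT}{\SpaSize}$, and the number of bins in which a competing structured pattern can disagree with the true one is no longer an arbitrary integer $u\le\SpaSize$ but is of the form $u_tu_s$ with $1\le u_t\le\SpaSize_t$ and $1\le u_s\le\SpaSize_s$, so the error event decomposes as a sum over $u\in\mySet{I}(\SpaSize)=\{u_tu_s\}$. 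Substituting these counts into the same large-deviations estimate of $\Pr(\myY_{\lenT}\mid\hat{\myMat{C}}_{\mySet{X}_{\tilde w}})\ge\Pr(\myY_{\lenT}\mid\hat{\myMat{C}}_{\mySet{X}_{w}})$ used there — whose exponent depends on $\SpaSize$ only through the Bernoulli parameter $\ln2/\SpaSize$ and on the two hypotheses only through the number of bins in which they differ — shows the union bound vanishes precisely when $\Rate$ exceeds $\max_{u\in\mySet{I}(\SpaSize)}\frac{(1+\varepsilon)\SpaSize}{u\lenX\lenT}\log(\vartheta\lenL^{u})$, i.e.\ \eqref{eq:distributed theorem}.

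I expect the structured case to be the only real obstacle. One must verify that the \ac{ml} pairwise-error probability between two structured hypotheses depends only on the number of bins in which they disagree, so that the per-index exponent of Theorem~\ref{direct theorem} still applies term by term, and one must argue that enumerating the disagreement as $u_tu_s$ (and ranging $u$ over $\mySet{I}(\SpaSize)$) indeed accounts for all relevant confusions — equivalently, that the structure cannot create a differently shaped, more likely error event than those counted. The overall-sparsity part, by contrast, is a routine relabeling of Theorem~\ref{direct theorem}.
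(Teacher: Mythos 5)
Your proof takes essentially the same route as the paper's: the paper's entire argument is that the proposition ``follows by repeating the proof of Theorem~\ref{direct theorem}'' with the sequence length set to $\lenX\lenT$ and with $\vartheta$ playing the role of the number of candidate support sets in Lemma~\ref{direct lemma1}, which is exactly your reduction-plus-recount strategy. If anything you are more explicit than the paper on two points it glosses over --- reconciling the $u$-dependent $\binom{\lenX\lenT-\SpaSize}{u}$ of Theorem~\ref{direct theorem} with the $u$-independent $\vartheta=\binom{\lenX\lenT}{\SpaSize}$ in the proposition, and restricting the union bound to structured supports with disagreement counts $u_tu_s$ --- so there is no gap relative to the paper's own level of rigor.
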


	 \begin{IEEEproof}
	 	The proposition follows by repeating the proof of Theorem \ref{direct theorem} given in the Appendix, while setting the length of the sequence to be $\lenX \lenT$, i.e., the length of the ensemble of signals, instead of $\lenT$, and noting that the joint sparsity affects the number of possible codeword combinations. In particular,  $\vartheta$ is the number of possible sets of non-zero entries in the ensemble of signals over which the \ac{ml} decoder searches, used in Lemma \ref{direct lemma1} in the Appendix.
 	 \end{IEEEproof}

 	Proposition \ref{distributed theorem} indicates that, as expected, \ac{sqrss} can exploit structures in the joint sparse nature of the observed signals to improve performance, namely, to utilize less bits while guaranteeing that a desired \ac{mse}  $D_{\lenT, \lenX}(l)$  is achievable.

	\subsection{Multi-Hop Networks}\label{subsec:DistQuantMulti}
		We now generalize our scheme to a multi-hop network, in which multiple directed links relate the distributed encoders and the centralized decoder. The intermediate nodes in the networks, which act as helpers or relays, can perform basic operations on their input from incoming links. For the sake of space and exposition, we consider a simplified model for this communication network, in which links are assumed to support $b$-bits of information without errors, or result in a complete erasure. We also assume that the transmission is synchronized, i.e., the encoders and intermediate nodes all transmit in sync across their outgoing links, and that the network is acyclic. Note that despite its simplicity, this model is reminiscent of several network models used in the literature, e.g., \cite{el2011network}. 
		
		The operation of the encoders and the decoder in the multi hop setup is identical to that discussed for single hop networks in Subsection \ref{subsec:DistQuantSingle}. The only addition is in the network policy, as depicted in Fig. \ref{fig:QCS_MHP}: At each intermediate node, we perform a Boolean OR operation of all incoming input vectors (which is the same mathematical operation performed by the encoder and decoders in Subsection~\ref{subsec:DistQuantSingle}), and transmit the result length $b$-vector on all outgoing links. The network outputs are collected in $\myY_{\lenT}$ via \eqref{eqn:recovery}.
		
		\begin{figure}
			\centering
			{\includegraphics[trim=0cm 0.0cm 0cm 0cm, width = 0.95\columnwidth]{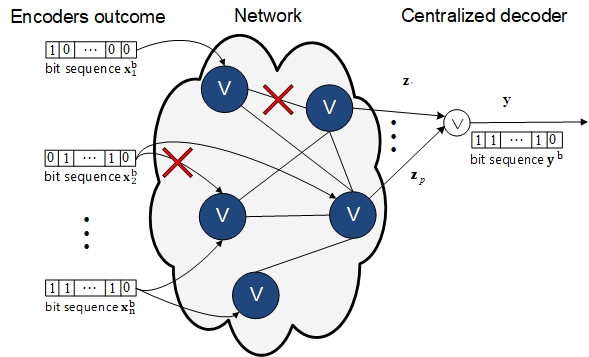}}
			\caption{Acquisition process in multi-hop  Networks.}
			\label{fig:QCS_MHP}
		\end{figure}
		
		Clearly, the resulting bit sequence at the decoder $\myY_{\lenT}$ is identical to the one in  Subsection~\ref{subsec:DistQuantSingle}, as long as there exist at least one path in the network from each encoder to the centralized decoder. Note that this is in  contrast with the previous literature on distributed \ac{cs} over networks, where it is typical to impose  conditions on the network topology that guarantee a successful description \cite{feizi2010compressive}.
		Consequently, the structures of the encoders and the decoder are invariant to whether the encoders communicate with the decoder directly or over multi-hop networks, and the achievable performance of \ac{sqrss} stated in, e.g., Proposition \ref{distributed theorem}, hold in such multi-hop networks.
		Additionally, the scheme we propose is robust to link failures:  as long as there exist at least one path from all encoders to the decoder, any number of link failures in the network still leads to the same received vector at the decoder, i.e., the coding scheme can achieve the min-cut max-flow bound of the network \cite{el2011network,dantzig2003max}.

		It follows from the above discussion that the presence of a multi-hop network does not affect the  operation of the distributed adaptation of \ac{sqrss} or its achievable performance, and only requires a simplified network policy to be carried out by the intermediate network nodes. While our analysis assumes that each encoder has at least a single path to the decoder, it can be shown that the presence of missing paths for some encoders does not affect the recovery of the remaining signals. In particular, by treating the output of a broken link as the zero vector, if the $m$th encoder has no path to the decoder, the recovery of $\{s_j[i]\}_{j\neq m}$ remains intact, while $\hat{s}_m[i]$ is estimated as being all zeros.

%
%
%
		
	\section{Numerical Evaluations}\label{sec:sims}
	In this section, we evaluate the performance of the proposed \ac{sqrss} scheme using various simulations, for a fixed and finite signal size $\lenT$. We first numerically evaluate \ac{sqrss} for the quantization of a single sparse signal in Subsection \ref{subsec:SimSingle}. Then, in Subsection \ref{subsec:SimNoise} we study the resiliency of \ac{sqrss} to noisy digital representations. Finally, in Subsection \ref{subsec:SimDist}, we evaluate its extension to distributed quantization setups, as discussed in Section \ref{sec:DistQuant}.

    \subsection{Single Sparse Signal}
    \label{subsec:SimSingle}
    We begin by numerically evaluating \ac{sqrss} used for quantizing a single sparse signal $\{s[i]\}$, $i \in \mySet{\lenT}$.
	To this end, we consider two sparse sources with sizes $\lenT \in \{100,50\}$ and support sizes $\SpaSize \in \{3,2\}$, respectively. To generate each signal, we  randomly select $\SpaSize$ indexes, denoted $\{i_j\}_{j=1}^\SpaSize$, and then choose the values of $\{s[i_j]\}$ to be i.i.d. zero-mean unit-variance Gaussian \acp{rv}, while the remaining entries   are set to zero. 
	
	Each of the generated signals is quantized and represented in digital form using each of the following methods:
	\begin{itemize}
		\item \ac{sqrss} system with   $\lenL = \rev{\max\left(\lfloor\frac{1}{\lenT}2^{\frac{\lenT\Rate}{\SpaSize(1+\epsilon)}}\rfloor , 2\right)}$ following \eqref{eqn:BoundRate}, where $\epsilon$ is selected \rev{empirically for each point} in the range $\epsilon \in [0.8, 1.3]$ \rev{to maximize the performance of the \ac{ml} decoder, where we select the value which achieves the minimal MSE among four different values of $\epsilon$ uniformly placed in this region}. Here, the continuous-to-discrete mapping $Q(\cdot)$ implements uniform quantization over the region $[-2,2]$.
		\ifFullVersion
		We consider \rev{three different \ac{sqrss} decoders: the \ac{ml} decoder detailed in Subsection \ref{subsec:decoder} the CoMa-based reduced complexity decoder proposed in Section~\ref{sec:efficient_algorithms} which is tuned with the same value of $\epsilon$ as that used by the \ac{ml} decoder; and the CoMa decoder with $\epsilon$ optimized by fine search separately for each quantization rate.}
		\fi 
		\item A uniform scalar quantizer with support $[-2,2]$ applied separately to each sample of $\{s[i]\} $, mapping every decision region to its centroid. This system, which models the direct application of a serial scalar \ac{adc} to the sparse signal   $\{s[i]\}$, can be utilized only when the quantization rate satisfies $\Rate \ge 1$, as the quantizers require  at least one bit.
		\item A compress-and-quantize system which first compresses  $\myS = [s[1], \dots, s[\lenT]]^T$ into $\mySet{R}^{m}$, where $m$ is selected in the range $[6\SpaSize, 20\SpaSize]$ to minimize the \ac{mse}. The compression is carried out using a sensing matrix $\myMat{A} \in \mySet{R}^{m \times \lenT}$ whose entires are i.i.d. zero-mean unit variance Gaussian \acp{rv}. The compressed signal $\myMat{A}\myS $ is quantized using a uniform scalar quantizer with support $[-2,2]$. The digital representation $\hat{\myS} = [\hat{s}[1], \dots, \hat{s}[\lenT]]^T$ is then recovered using the \ac{qiht} method \cite{jacques2013quantized} as well as \ac{fista} \cite{beck2009fast}.
	\end{itemize}
	All of the above schemes are compared with the same number of bits $\bits = \Rate \cdot \lenT$, and the \ac{mse} is computed by averaging the squared error over $100$ Monte Carlo simulations.

	\begin{figure}
		\centering
		\ifFullVersion		
		{\includegraphics[trim=0cm 0.0cm 0cm 0.0cm, width = \columnwidth]{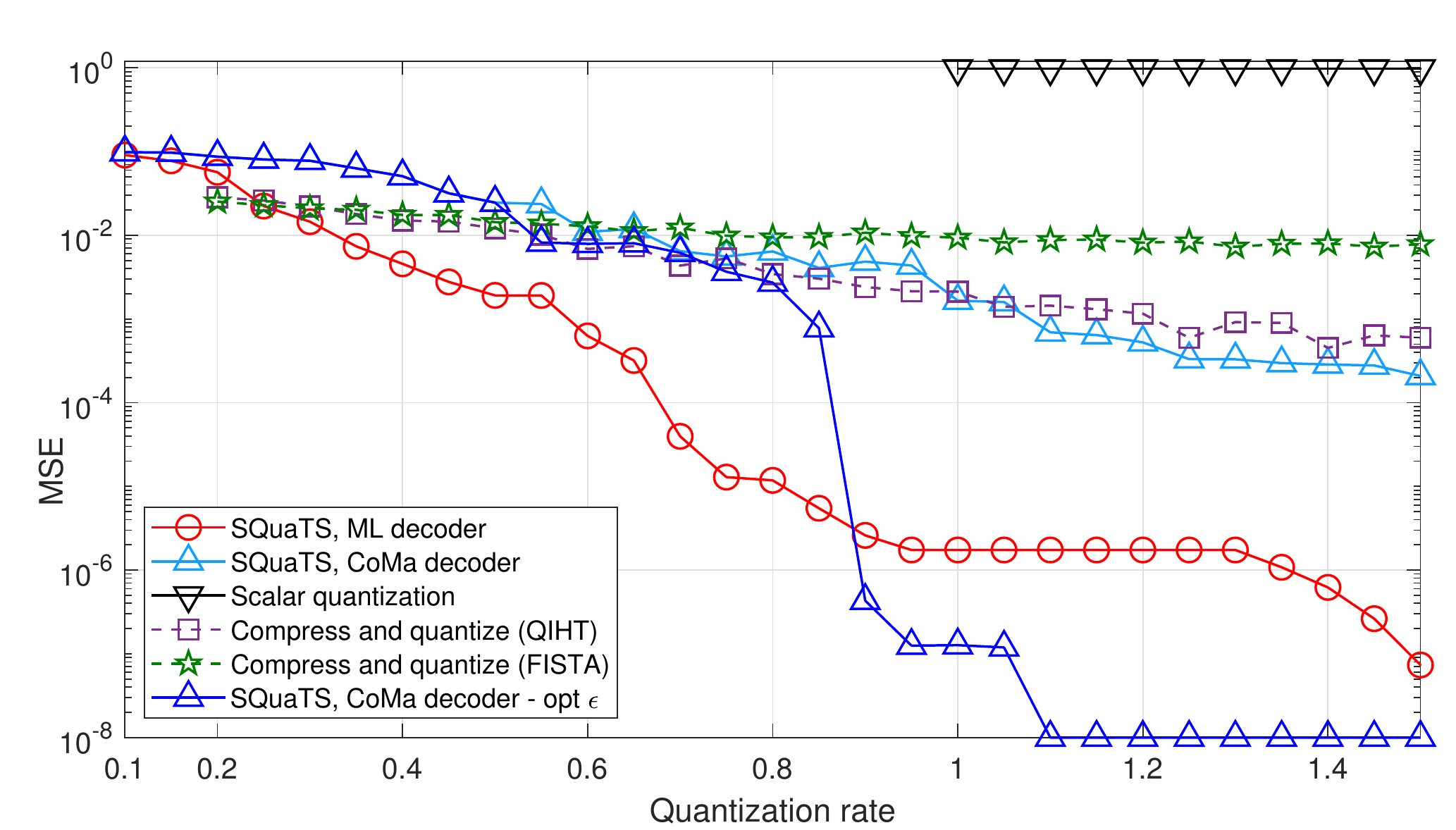}}
		\else
		{\includegraphics[width = \columnwidth]{matlab/Quantization_rate_4_comp7a.eps}}
		\fi 
		\caption{Quantization systems comparison, $\lenT = 100$, $\SpaSize = 3$.}
		\label{fig:simulation}
	\end{figure}
	
	\begin{figure}
		\centering
		\ifFullVersion
		{\includegraphics[trim=0cm 0.0cm 0cm 0.0cm, width = \columnwidth]{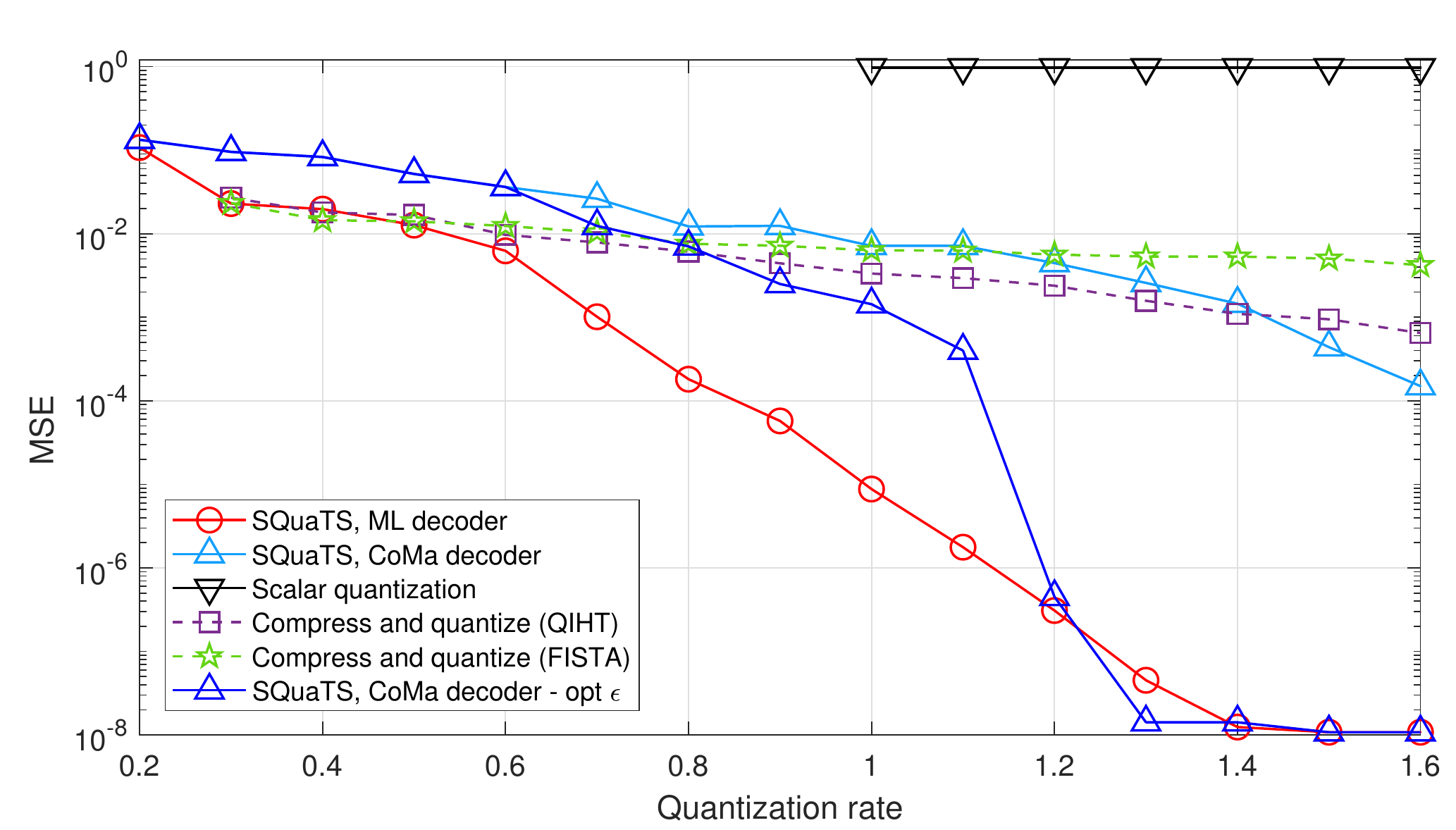}}
		\else		
		{\includegraphics[width = \columnwidth]{matlab/Quantization_rate_1_comp_n50_k2_4_V2.eps}}
		\fi 
		\caption{Quantization systems comparison, $\lenT = 50$, $\SpaSize = 2$.}
		\label{fig:simulation1}	
	\end{figure}
	
	The empirically evaluated \acp{mse} of the considered quantization systems versus the quantization rate $\Rate$ are depicted in Figs. \ref{fig:simulation}-\ref{fig:simulation1} for the setups with $(\lenT, \SpaSize) = (100,3)$ and $(\lenT,\SpaSize) = (50,2)$, respectively. Observing Figs. \ref{fig:simulation}-\ref{fig:simulation1}, it is noted that the proposed \ac{sqrss} system achieves superior representation accuracy and that its resulting \ac{mse} is not larger than $10^{-4}$ for quantization rates $\Rate \ge 0.8$. For comparison, directly applying a scalar quantizer to the sparse signal is feasible only for $\Rate \ge 1$, and its achievable \ac{mse} is only slightly less than $1$. This degraded performance of directly applying scalar quantizers stems from the fact that for the considered rates $\Rate \in [1,2)$, this quantization mapping implements a one-bit sign quantization of $\{s[i]\}$. Since most of the samples of $\{s[i]\}$ are zero, this quantization rule induces substantial distortion.
	
	The \ac{mse} performance of the \ac{cs}-based quantization scheme improves much less dramatically with the quantization rate $\Rate$ compared to   \ac{sqrss}. For example, for the scenario depicted in Fig. \ref{fig:simulation}, the \ac{sqrss} system achieves \ac{mse} of $2\cdot 10^{-3}$ for $\Rate = 0.5$, while the \ac{cs}-based systems achieve an \acp{mse} of $1.2\cdot 10^{-2}$ and $1.4 \cdot 10^{-2}$ for the \ac{qiht} and \ac{fista} decoders, respectively, i.e., a gap of approximately $7$ dB. However, for quantization rate of $\Rate = 1$, the corresponding \ac{mse} values are $1.7 \cdot 10^{-6}$, $2\cdot 10^{-3}$, and  $9\cdot 10^{-3}$, for the \ac{sqrss} system, \ac{cs} with \ac{qiht} recovery, and \ac{cs} with \ac{fista} recovery, respectively, namely, performance gaps of $30 - 37$ dB in \ac{mse}. For all considered scenarios, the \ac{qiht} recovery scheme, which is specifically designed for reconstructing sparse signals from compressed and quantized measurements, outperforms the \ac{fista} method which considers general sparse recovery.
	
	\ifFullVersion
	
	\begin{figure}
	    \centering
	    {\includegraphics[trim=0cm -1.0cm 0cm 0.0cm, clip, width = 1 \columnwidth]{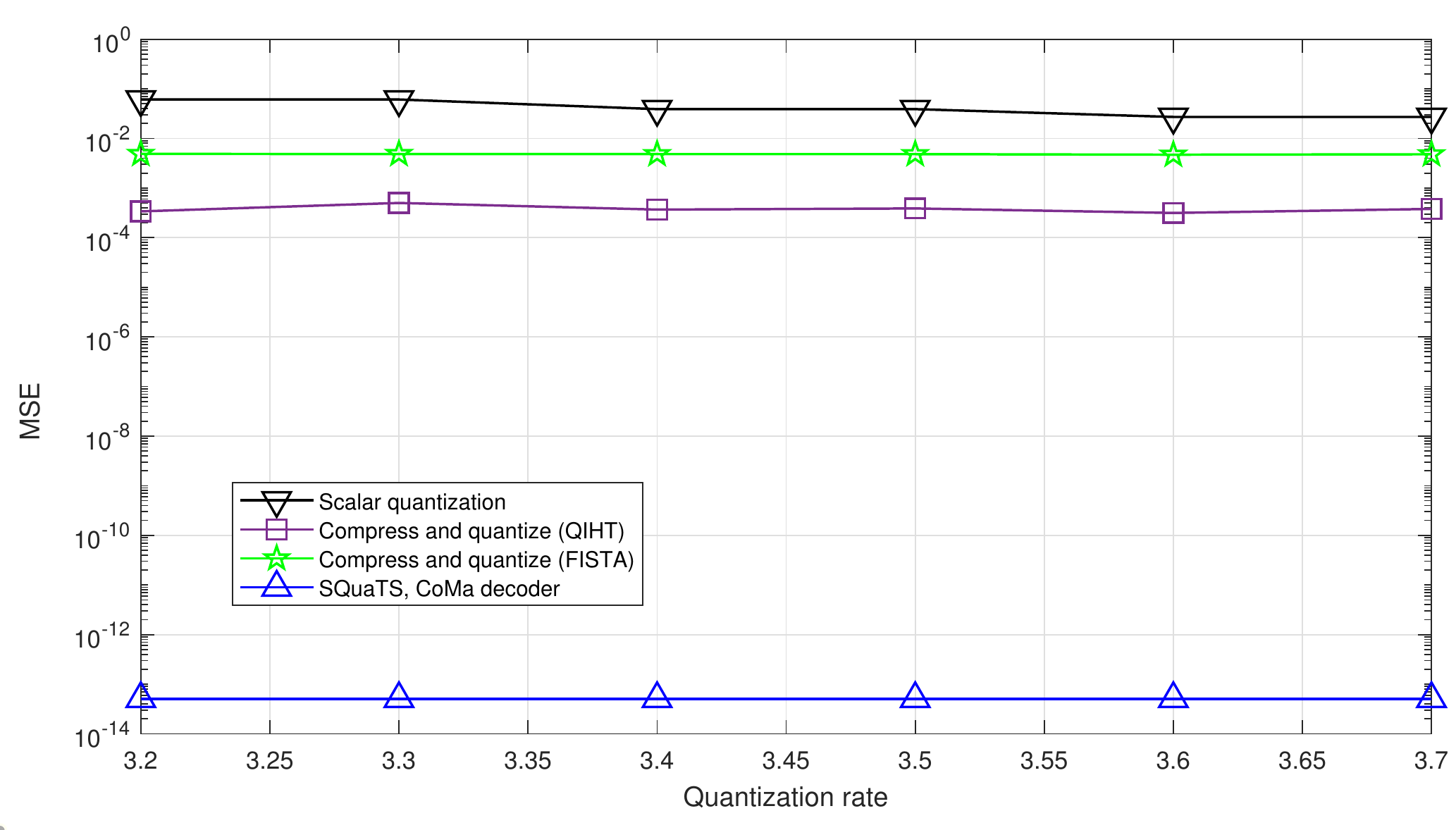}}
	    {\includegraphics[trim=0cm 0.0cm 0cm 0.0cm, clip, width = 1 \columnwidth]{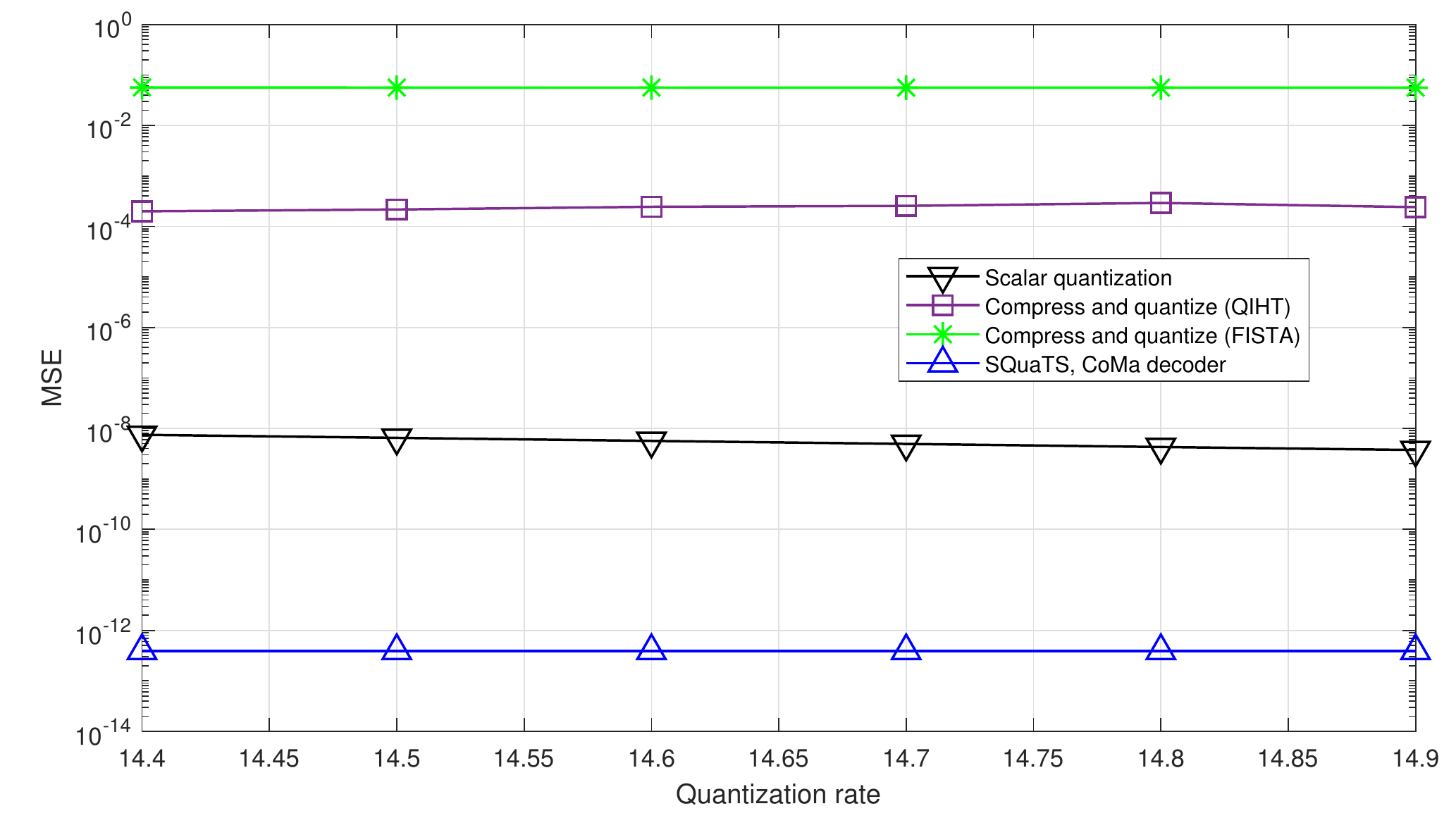}}
	    \caption{\rev{Quantization systems comparison with CoMa decoder for $T = 50$. In the top panel $k=2$ and in the bottom panel $k = 10$.}}
	    \label{fig:Coma_per}
    \end{figure}

	Furthermore, it is observed in Figs. \ref{fig:simulation}-\ref{fig:simulation1} that the \ac{sqrss} encoder combined with the reduced complexity CoMa detector proposed in Section \ref{sec:efficient_algorithms} outperforms \ac{cs}-based quantizers as the quantization rate $\Rate$ increases when using both a fixed $\epsilon$ as well as an optimized setting of this parameter. In particular, for the scenario whose results are depicted in Fig. \ref{fig:simulation}, \ac{sqrss} with the CoMa decoder without $\epsilon$ optimization outperforms compress-and-quantize with \ac{qiht} and \ac{fista} recovery for rates $\Rate > 0.6$ and $\Rate > 1$, respectively. The corresponding rate thresholds for the scenario in Fig. \ref{fig:simulation1} are $\Rate > 1.1$ and $\Rate > 1.4$, respectively. However, when optimizing $\epsilon$ specifically for the CoMa decoder in a fine manner, which can be carried out without substantial overhead due to its reduced computational complexity, its performance is notably improved. In fact, in high quantization rates, we observe that the optimization of $\epsilon$ allows the CoMa decoder to outperform the ML decoder, for which $\epsilon$ is merely selected out of four possible candidates.
	
	\rev{The ability of \ac{sqrss} with the CoMa decoder to notably outperform \ac{cs}-based approaches is also demonstrated for higher resolution quantization regimes in  Fig.~\ref{fig:Coma_per}. This is achieved by increasing the quantization rate to satisfy the sufficiency rate in Proposition~\ref{pro:ComaRate} and selecting appropriate $l$.  Fig.~\ref{fig:non_zero_samples} numerically evaluates \ac{sqrss} combined  with fragmentation under the scenario presented in the bottom panel of Fig.~\ref{fig:Coma_per}. In particular, fragmentation is carried out here to reduce the computational burden by dividing the signal into ten groups, as discussed in Section~\ref{subsec:MSE}, resulting in $T=5$ and $k=1$. We observe that CoMa  achieves improved performance also for various quantization rates by properly setting $l$, as demonstrated in Fig.~\ref{fig:high_k}.} These results indicate that \ac{sqrss}, which acquires the sparse signal in a serial manner, is capable of achieving superior performance even when using sub-optimal reduced complexity decoders, compared to \ac{cs}-based methods which must observe the complete signal before it is quantized.
	\fi 

	\begin{figure}
	    \centering
	    {\includegraphics[trim=0cm 0.0cm 0cm 0cm, clip, width = 1 \columnwidth]{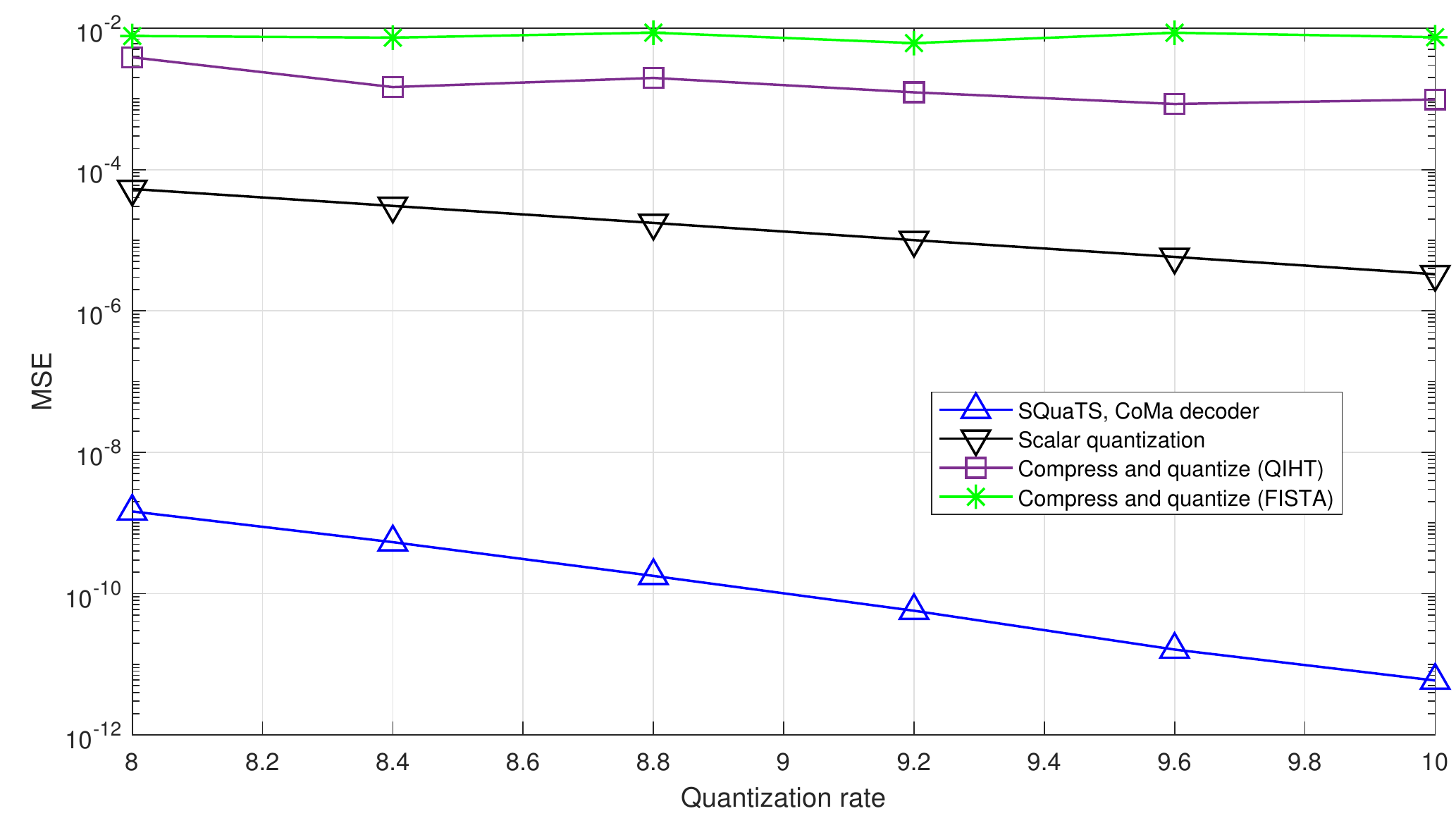}}
	    \caption{\rev{Quantization systems comparison for fragmentation of $T=50$ and $k=10$, as given in Fig. \ref{fig:Coma_per}, to ten groups such that $T = 5$ and $k = 1$.}}
	    \label{fig:non_zero_samples}
    \end{figure}

	\rev{Fig~\ref{fig:diff_k} numerically evaluates the performance obtained for the scenario discussed in Section \ref{subsec:Discussion}, in which the value of $k$ used in designing the \ac{sqrss} system differs from the true sparsity level. Here, we designed the system  for $k=3$, while evaluating its performance when the actual number of the non-zero samples varies in the proximity of $k=3$. For lower non-zero samples case, we observe in Fig. 11 that  \ac{sqrss} still exploits the reduced sparsity level, allowing it to be translated into improved performance, despite the fact that it was designed with higher values of $k$. When the true value of $k$ is larger than that used in design, we observe some graceful degradation in the MSE accuracy of \ac{sqrss}, indicating its ability to maintain reliable operation in such scenarios. In summary, if the number of non-zeros input samples is higher or lower than the a-priori $k$ selected, the codebook designed for $k$ can still be applied. In such cases, when the mismatch is not too large, only a minor degradation in the quantization distortion is observed. However, when this number of non-zeros inputs samples is much different than the a-priori assumption, one will need different codebook designs, and a look up table if this is used to reduce the decoding complexity with ML algorithm.}
    \begin{figure}
	    \centering
	    {\includegraphics[trim=0cm 0.0cm 0cm 0cm, clip, width = 1 \columnwidth]{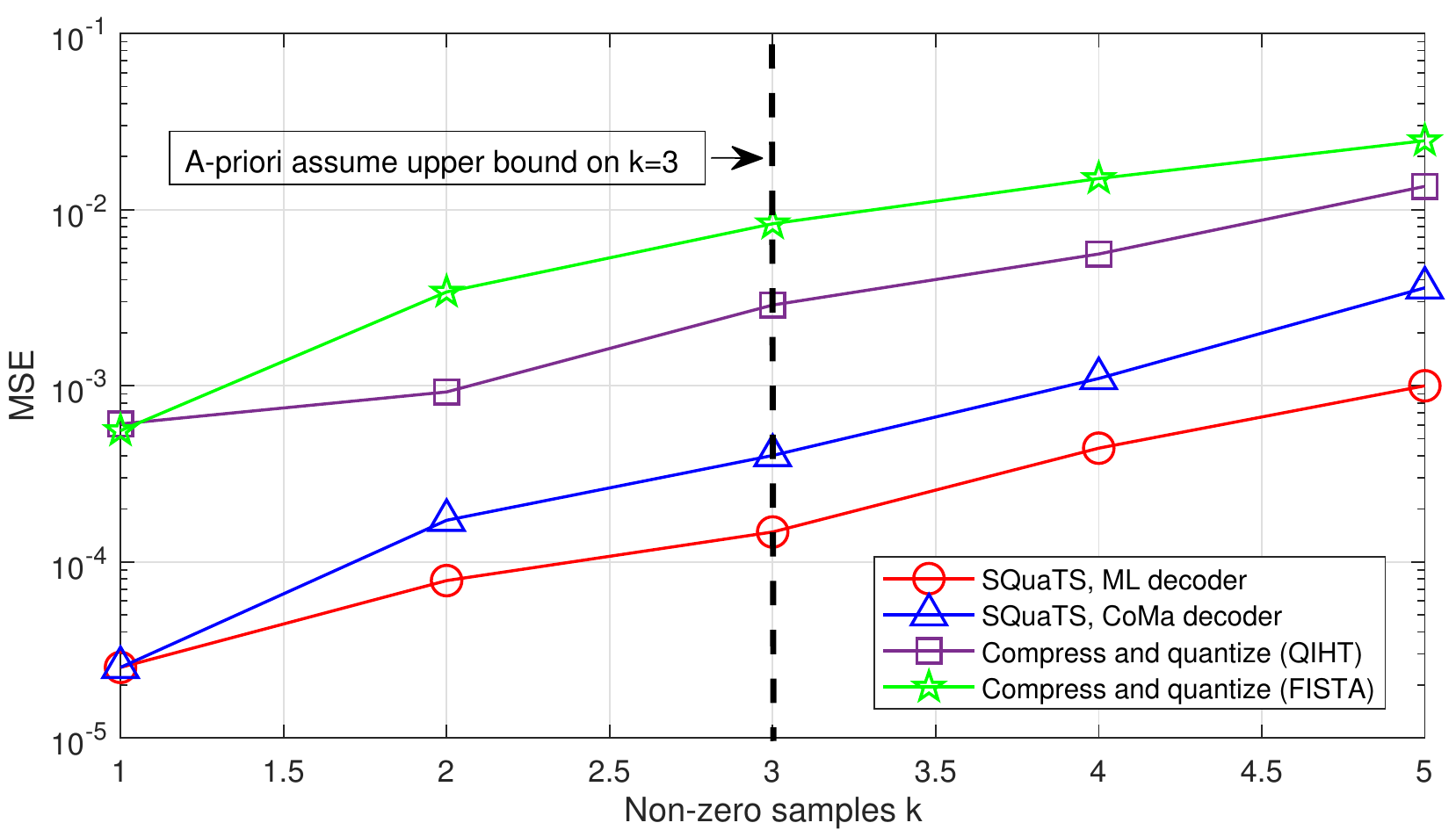}}
	    \caption{\rev{Quantization systems comparison for different numbers of non-zero inputs, $T = 100$, quantization rate of $R=0.78$ and SQuaTS codebook designed for upper bound of $k = 3$.}}
	    \label{fig:diff_k}
    \end{figure}
	
	\off{\rev{In \ac{sqrss}, as we elaborate in Section~\ref{subsec:MSE}, the sparsity is defined as the ratio between $k$ and $T\cdot l$. Hence, for sufficiently large $l$, the samples are considered as sparse even when $k/T \rightarrow 1$. Fig~\ref{fig:non_zero_samples}, present an example for $T=50$ and $k=30$. Where $k=T$, CoMa decoder gets approximately similar MSE results as with scalar quantization.}}
	
	Finally, Figs.~\ref{fig:NumericalResults1} and \ref{fig:NumericalResults2} show the trade-off between the quantization rate required for achieving a desired distortion level $D_{\lenT}(\lenL)$, computed via \eqref{eq:reduce_hw}, with the parameters $k$ and $l$, for $T=50$ and $T = 100$, respectively. The numerical results in Figs.~\ref{fig:NumericalResults1} and \ref{fig:NumericalResults2} demonstrate how the quantization rate scales with $l$ and $k$. These observed curves settle with the characterization in \Cref{cor:BoundRate}, which noted that the rate scales logarithmically with $l$, and linearly with $k$. The results depicted in Figs.~\ref{fig:NumericalResults1} and \ref{fig:NumericalResults2} demonstrate the ability of \ac{sqrss} to exploit the signal sparsity and to benefit from small values of the support $k$, as the required quantization rate is reduced substantially as $k$ decreases \rev{To demonstrate that the performance gains of \ac{sqrss} persist in such regimes, we numerically evaluate its performance  for a  scenario with $T=50$ and $k=5$. This scenario, for which the empirical performance is depicted in  Fig.~\ref{fig:high_k},  is equivalent to the scenario of $T=100$ and $k=10$, in which fragmentation into two groups is used to reduce the decoding complexity. as suggested in Section~\ref{subsec:MSE}. Note that by performing fragmentation, the quantization rate is divided by the number of groups. Fig~\ref{fig:high_k} also presents the performance with CoMa decoding for a low quantization rate when $\epsilon$ is optimized for this decoder. In practice, for high $k$ using ML decoder, fragmentation is needed due to the computational complexity. In the regime we simulated using CoMA, fragmentation is not needed. We observe in Fig.~\ref{fig:high_k} that the expected gains of \ac{sqrss} are indeed maintained here. }
	
    \begin{figure}
		\centering
		{\includegraphics[width = \columnwidth]{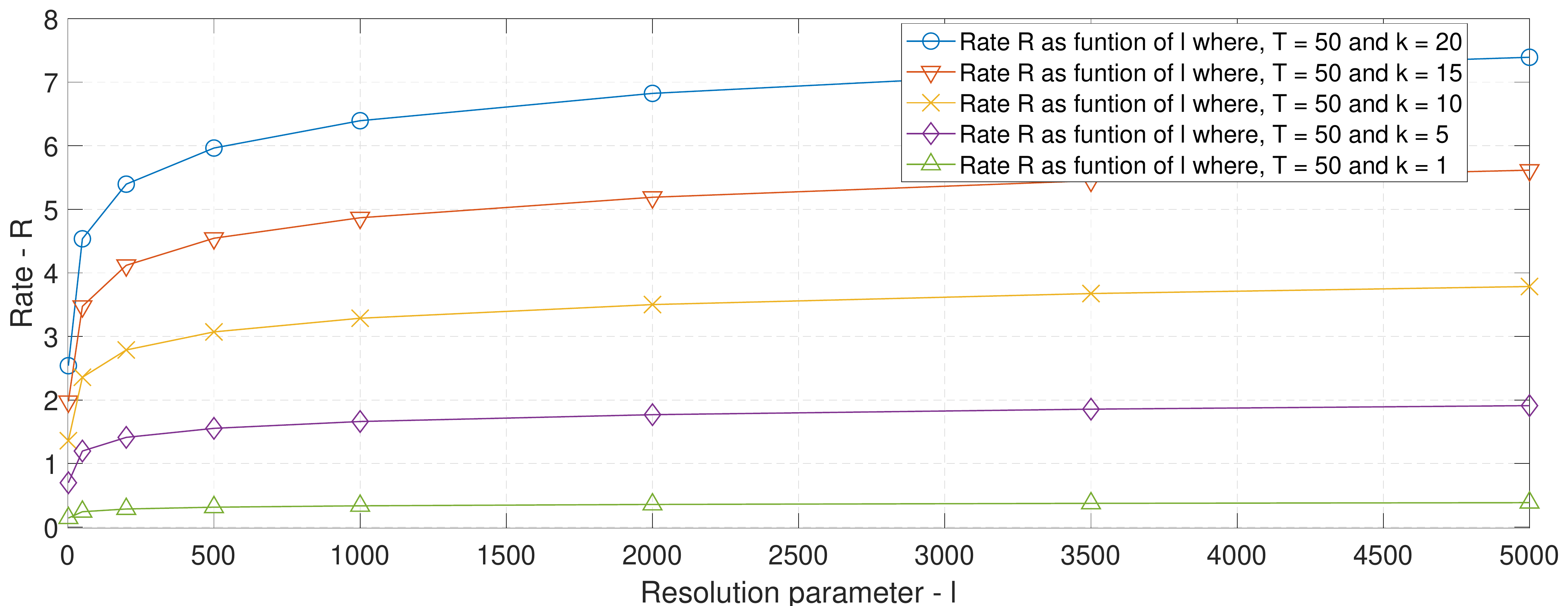}}
		\caption{Numerical results of \ac{sqrss} system, for $T = 50$ and different values of $\lenL$ and $\SpaSize$.
		\label{fig:NumericalResults1}}	
	\end{figure}
	\begin{figure}
		\centering
		{\includegraphics[width = \columnwidth]{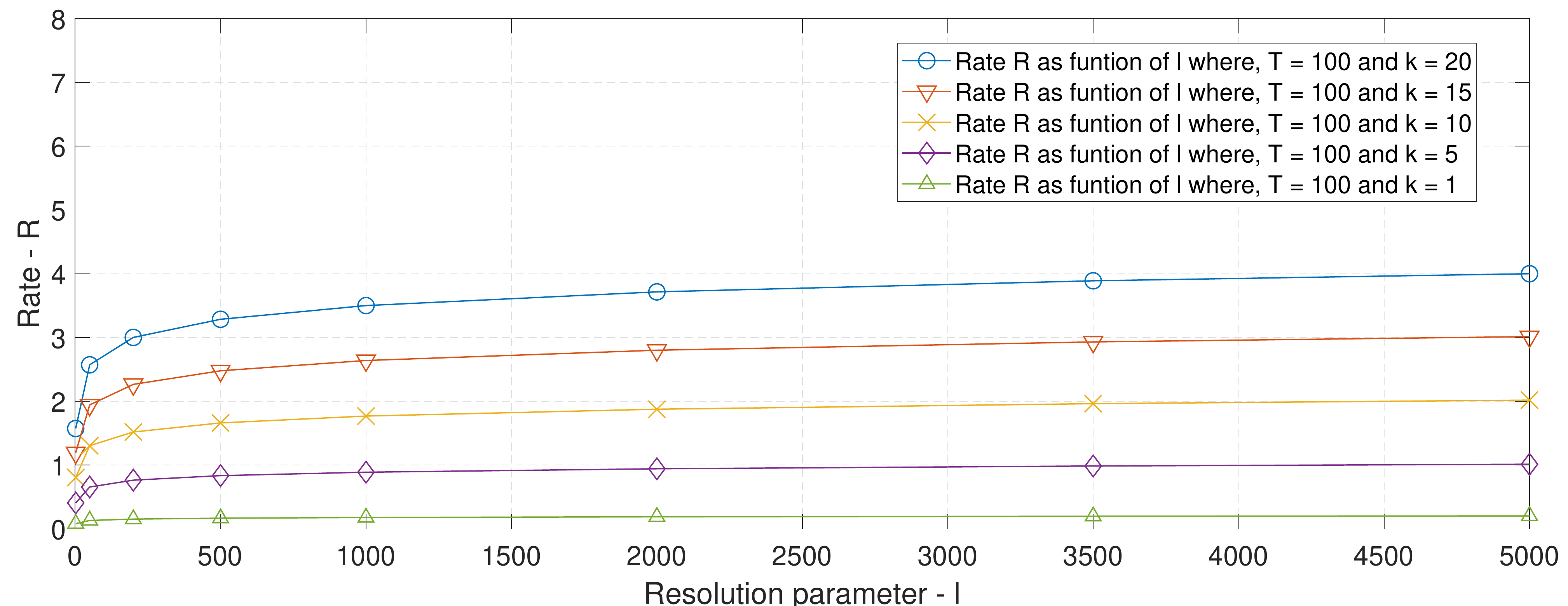}}
		\caption{Numerical results of \ac{sqrss} system, for $T = 100$ and different values of $\lenL$ and $\SpaSize$.}
		\label{fig:NumericalResults2}	
	\end{figure}
	
	\subsection{Noisy Digital Representation}
	\label{subsec:SimNoise}
	
	Next, we carry out a set of experiments whose goal is to demonstrate how \ac{sqrss} can be adapted to deal with noisy digital representations.
	To that aim, we consider the setup in which the value of the bits register  $\myVec{y}_T$ experiences independent random bit-flips, i.e., each bit $y_{i,T}$ is flipped from 0 to 1 with probability $0 \leq q < 1/2$ (positive flip), and flipped from 1 to 0 with probability $0 \leq u < 1$ (negative flip). This noisy model represents corruption of the codeword in the digital domain.
	\off{Note that the fact that this noise models affects the register $\myVec{y}$ at the last time step $T$ implies that it constitutes a \emph{worst-case digital noise}.  This follows since  any noise affecting $\myVec{y}_i$ at time instance $i < T$ is carried through if it corresponds to a positive flip, and can only be corrected if it corresponds to a negative flip\footnote{\textcolor{red}{Alejandro - this sentence is not clear, it implies that adding noisy in time $i <T$ is worse that adding noise at the last time instance, but then you say that adding the noise at $i=T$ is the worst case. So please clarify what you mean.}}, due to the nature of the Boolean operations taking place.}
	
	\begin{figure}
	    \centering
	    {\includegraphics[trim=0cm 0.0cm 0cm 0cm, clip, width = 1 \columnwidth]{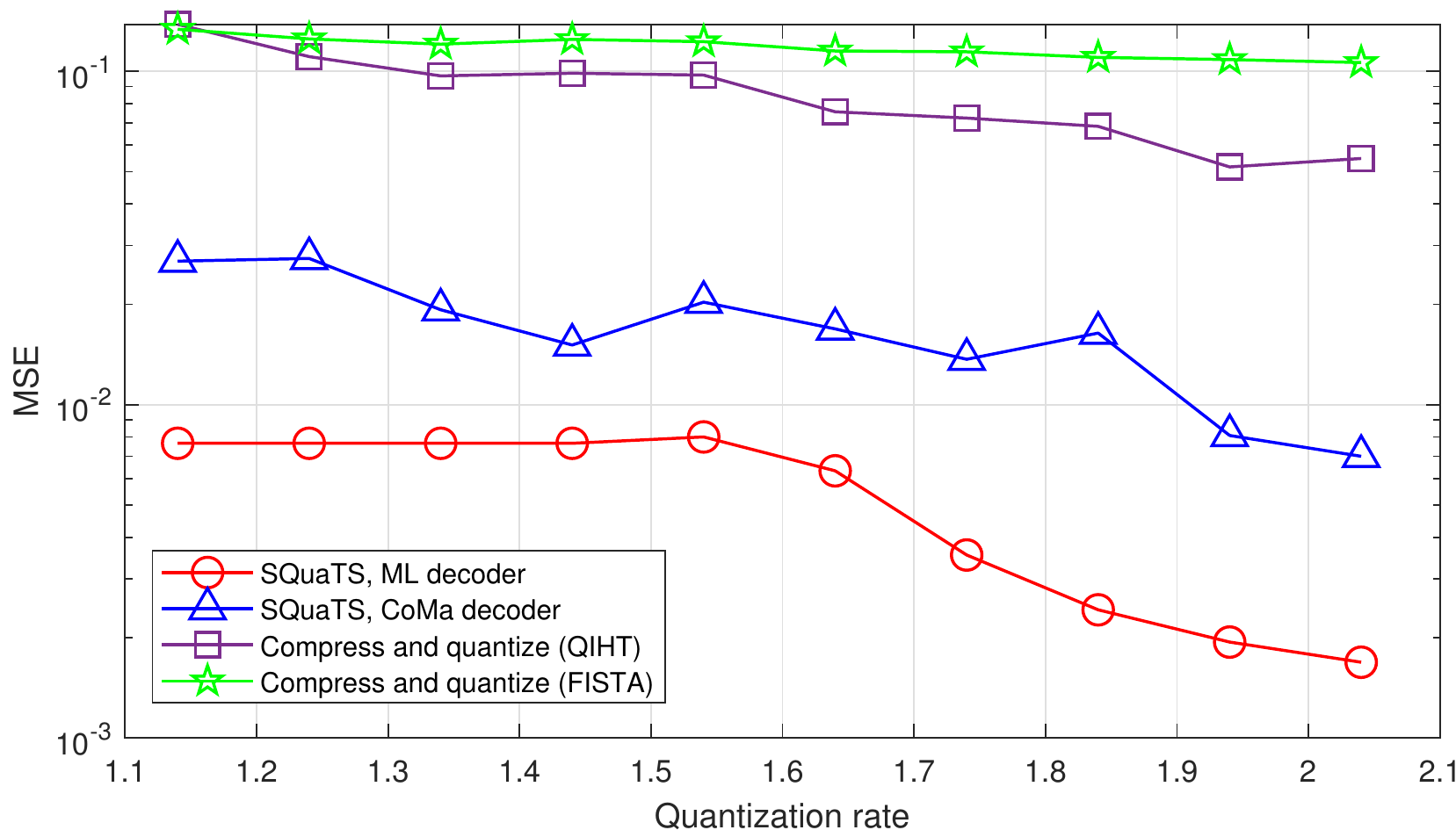}}
	    \caption{\rev{Quantization systems comparison, $T = 50$ and $k = 5$. This scenario is equivalent to the case of $T=100$ and $k=10$, in which we are performing fragmentation to two groups. In this case, the quantization rate is divided by the number of groups.}}
	    \label{fig:high_k}
    \end{figure}
	
	By making use of existing results in noisy group testing theory, we propose to increase the length of the codewords by some factor which depends on $u$ and $q$, while keeping the encoding identical.
	A trivial adaptation of the results in \cite{atia2012boolean} to our setup, reveals that increasing the length of the codeword $b$ by a factor $\frac{1}{1-q} \cdot \frac{1}{(1-u)^2}$ is sufficient\footnote{While this factor is relatively loose, it is preferred here over  the complex yet more precise expression that can be found in \cite{sejdinovic2010note} due to its simple formulation.}.
	On the side of the decoder, the ML scheme operates identically.
	The efficient CoMa method however, must be modified to deal with this noise in the system.
	While a precise discussion of this is outside of the scope of this paper, we refer the interested reader to exisiting efficient algorithms for noisy group testing such as Noisy-CoMa \cite{chan2011non}, as possible ways to tweak the CoMa decoding scheme of Subsection \ref{subsec:Coma} to account for the presence of digital noise.
	
	Figs.~\ref{fig:simulation3} and \ref{fig:simulation4} show the empirical \acs{mse} of the adapted scheme under the same signal settings considered in the previous subsection, in the presence of digital noise with parameters $(q = 0.1, u = 0.1)$ and $(q =0.4, u= 0.1)$, respectively.
	We observe in Figs.~\ref{fig:simulation3}- \ref{fig:simulation4} that the quantization rate $R$ required to achieve a given \ac{mse} level is increased compared to the noiseless case in Figs. \ref{fig:simulation}-\ref{fig:simulation1} -- quantifying the additional bits which enable \ac{sqrss} to be robust to digital noise.
	More precisely, to achieve an average \ac{mse} of $10^{-3}$ with the proposed scheme, the quantization rate must be increased from $.7$, required in the absence of digital noise, to  about $1.6$ and $6$, when $q = 0.1$ and $q=0.4$, respectively.
	As observed, this loss in performance is much more dramatic when $q$ grows, 	revealing that positive flips are more costly than negative flips.
	In either cases, the performance of the adapted \ac{sqrss} scheme outperforms significantly the \ac{cs}-based approaches, which fail to breach under the \ac{mse} of $10^{-2}$, even for large quantization rates.
	
	\begin{figure}
		\centering
		{\includegraphics[width = \columnwidth]{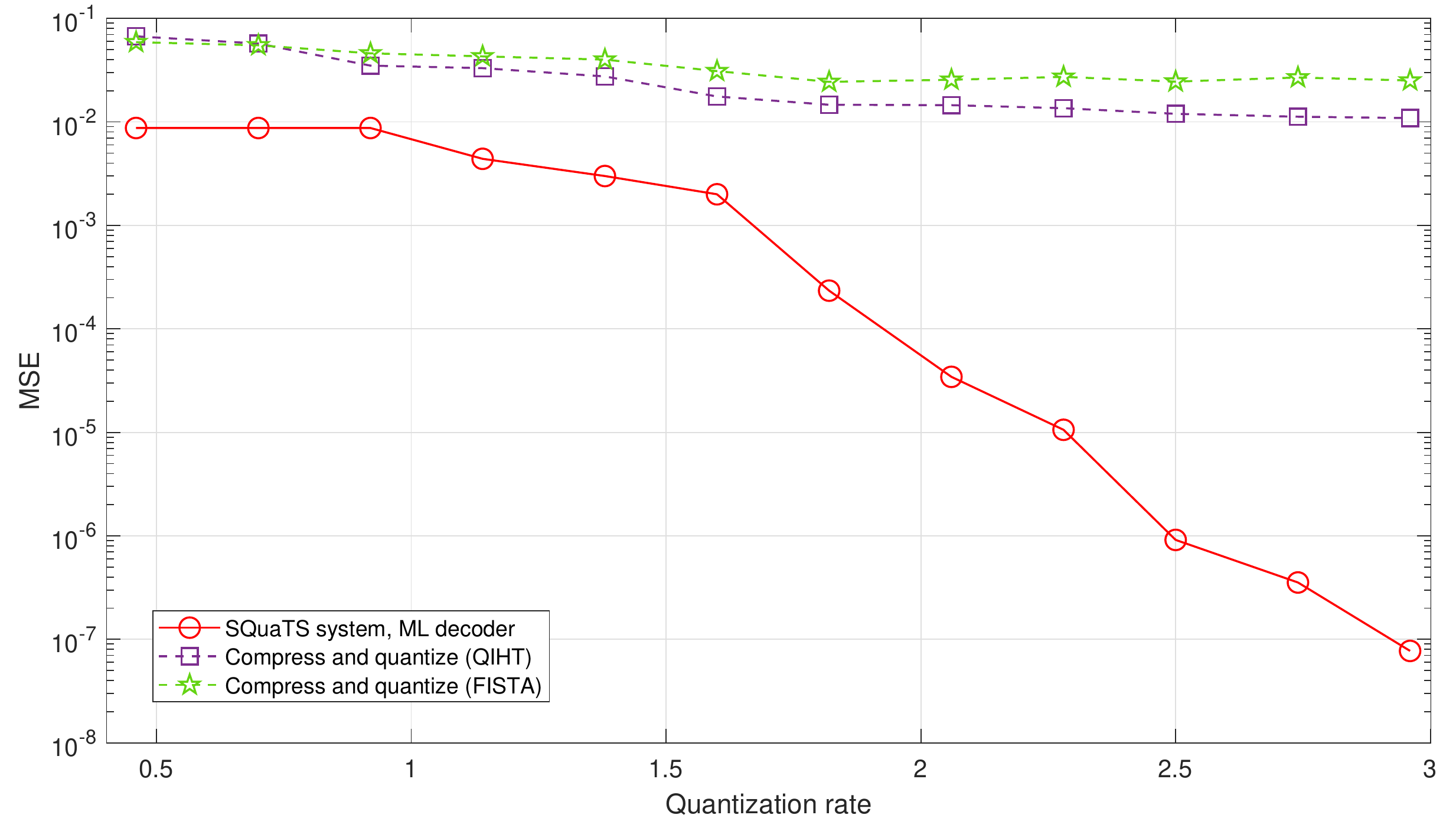}}
		\caption{Quantization systems comparison, $T = 50$, $\SpaSize = 2$, $q=0.1$, $u=0.1$.}
		\label{fig:simulation3}	
	\end{figure}	
	\begin{figure}
		\centering
		{\includegraphics[width = \columnwidth]{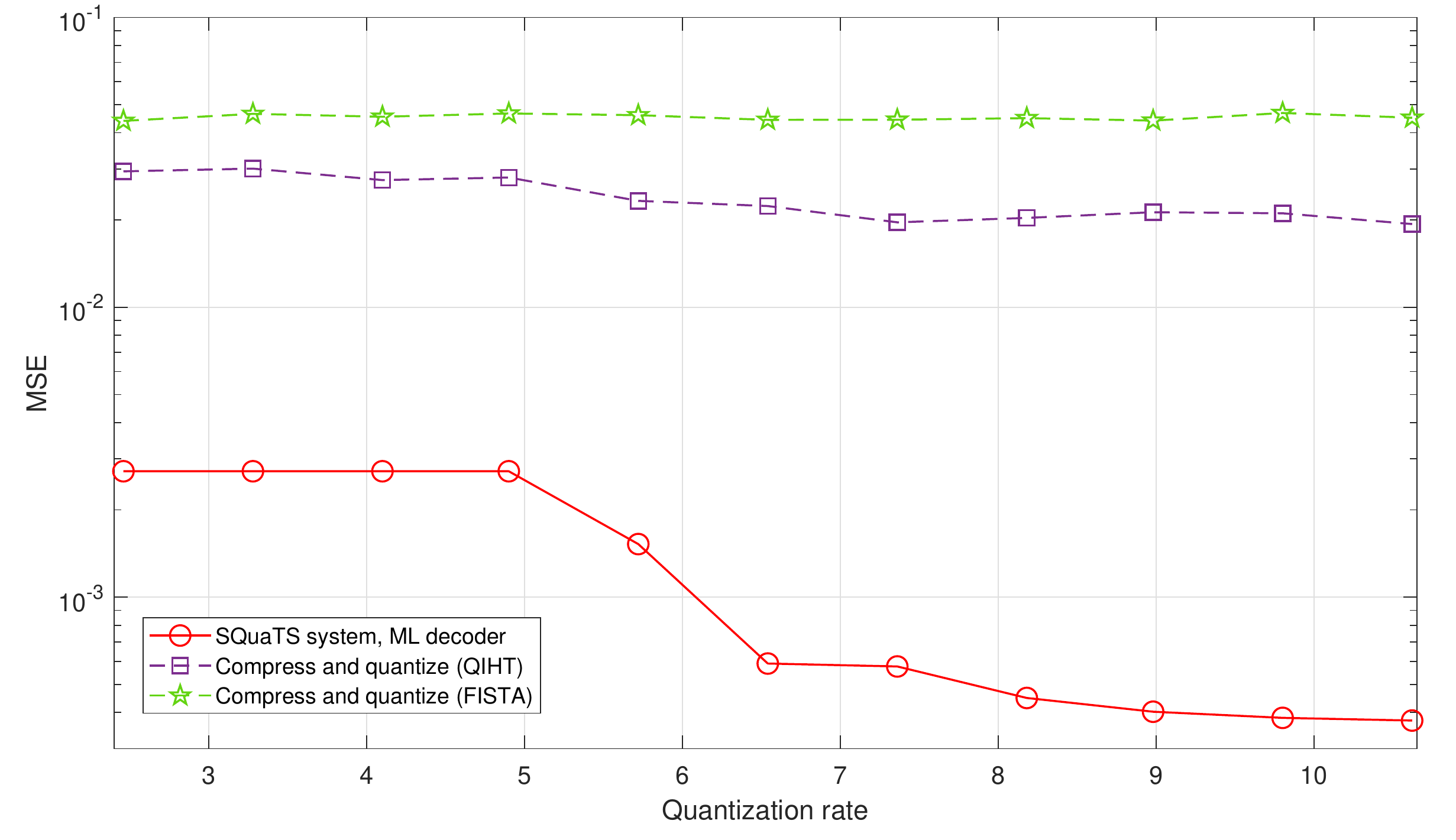}}
		\caption{Quantization systems comparison, $T = 50$, $\SpaSize = 2$, $q=0.4$, $u=0.1$.}
		\label{fig:simulation4}	
	\end{figure}

\subsection{Distributed Quantization}
\label{subsec:SimDist}

We end this section with a numerical study of the extension of \ac{sqrss} to distributed quantization setups, detailed in Section \ref{sec:DistQuant}.
To that aim, we numerically compute the achievable \ac{mse} of \ac{sqrss} applied to $n=10$ jointly sparse signals of size $T=100$ with joint support size of $k=3$ in a single hop network.
In Fig. \ref{fig:distortion} we compare the \ac{mse} of distributed \ac{sqrss} to distributed \ac{cs} \cite{baron2009distributed}, in which the quantized values of compressed projections are aggregated by and recovered the central decoder. We consider the cases where the decoder  recovers the set of signals  using the \ac{qiht} method \cite{jacques2013quantized} as well as \ac{fista} \cite{beck2009fast}.
While more advanced schemes combining distributed \ac{cs} and vector quantization were proposed in \cite{leinonen2018distributed}, their complexity grows rapidly when $\lenX > 2$, and thus we focus on conventional distributed \ac{cs} with scalar quantization.

Observing Fig. \ref{fig:distortion}, we note that the proposed distributed quantization scheme notably outperforms techniques based on distributed \ac{cs}. In particular, our method is shown to improve substantially the accuracy of the overall digital representation as the quantization rate increases, while distributed quantized \ac{cs} is demonstrated to meet an error floor around $4\cdot 10^{-2}$ for \ac{fista} and $9\cdot 10^{-3}$ for \ac{qiht}. \off{\textcolor{red}{Standard uniform quantization, which is applicable only for $\Rate > 1$ as the \acp{adc} must utilize at least one bit, is notably outperformed by  the previous approaches, as it does not exploit the underlying sparsity.}\footnote{\textcolor{red}{Alejandro - where did this statement come from? You  neither depict standard uniform quantization nor rates above 1... Is this a copy paste issue?}}}

\begin{figure}
	\centering
	{\includegraphics[trim=0cm 0.0cm 0cm 0cm, width = \columnwidth]{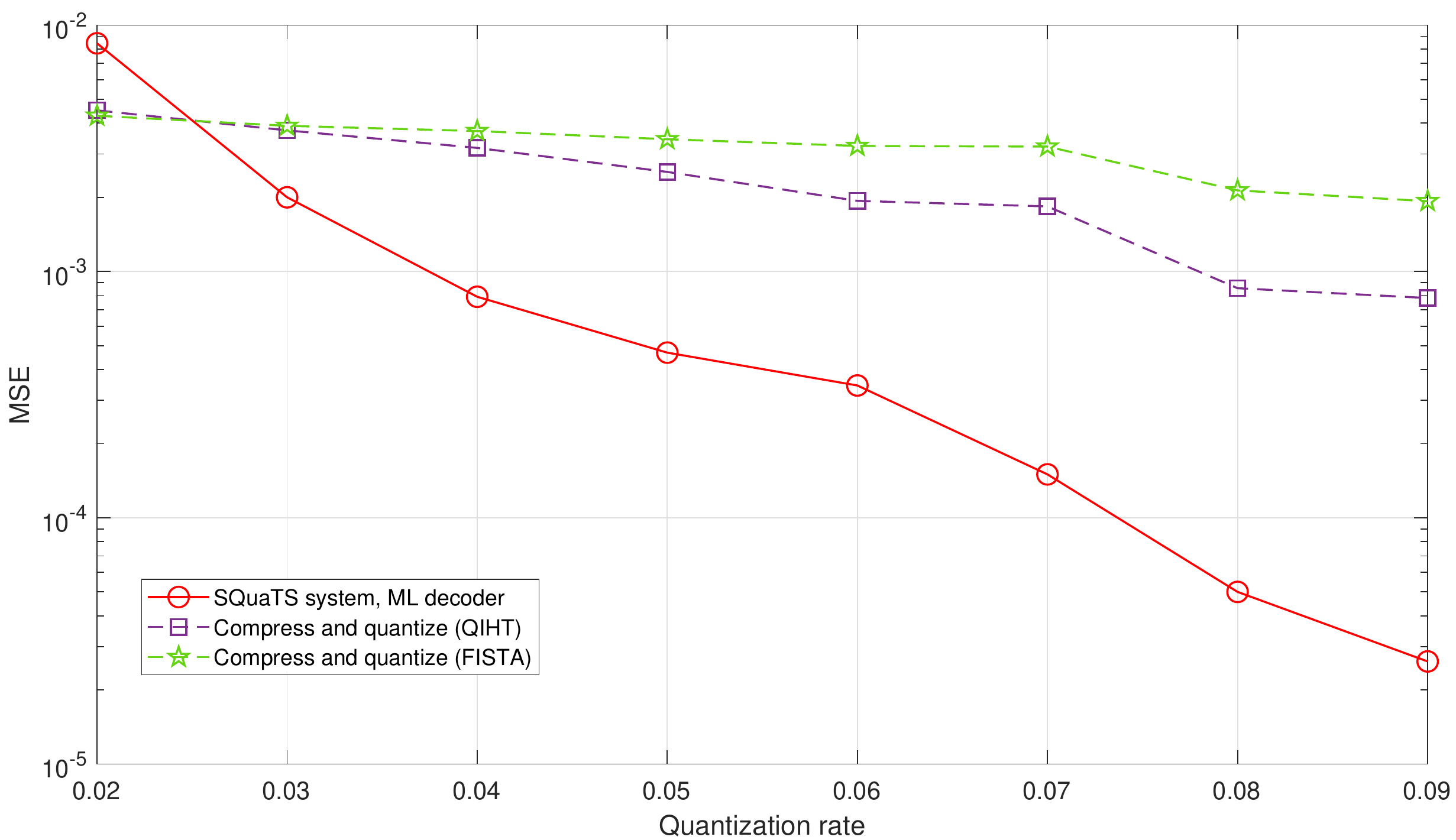}}
	\caption{Distributed quantization systems comparison, $\lenT = 100$, $n=10$, $\SpaSize = 3$.}
	\label{fig:distortion}
\end{figure}

Finally,  we demonstrate how the minimal quantization rate grows with the resolution $\lenL$. To that aim, we compute in Fig. \ref{fig:bound} the minimal rate versus $\lenL$. The setup evaluated here consists of $\lenX= 10$  sequences of $\lenT = 90$ samples each, for both overall sparsity with $\SpaSize \in \{6,12,24,36\}$ as well as structured sparsity with the same overall sparsity level and $k_s = 3$. Observing Fig. \ref{fig:bound}, we note that structured sparsity allows to use lower quantization rates, i.e., fewer bits, to achieve the same level of distortion, due to the additional structure. We also note that the quantization rate grows slowly with $\lenL$, indicating that a minor increase in the quantization rate can allow the scheme to utilize \acp{adc} of much higher resolution, while maintaining the guaranteed performance.
Observing Fig. \ref{fig:bound}, we note that structured sparsity allows to use lower quantization rates, i.e., fewer bits, to achieve the same level of distortion.

\begin{figure}
	\centering
	{\includegraphics[trim=0cm 0.0cm 0cm 0cm, width = \columnwidth]{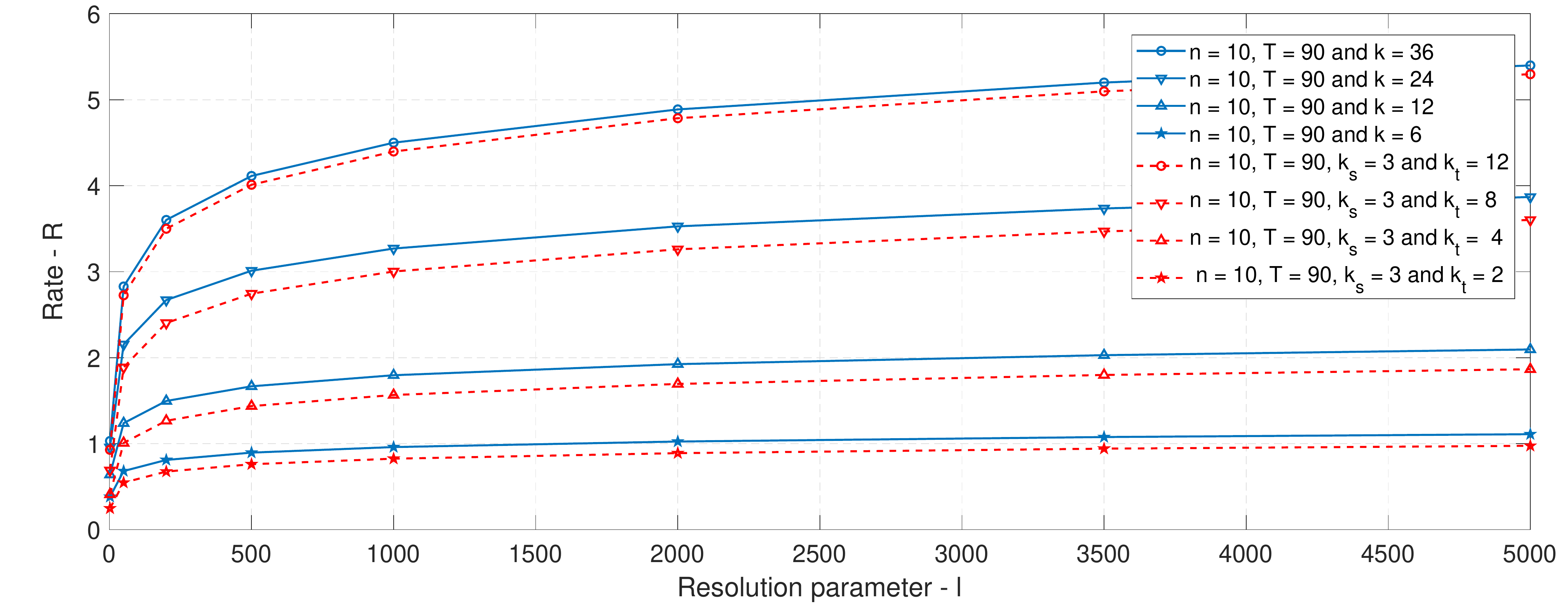}}
	\caption{Quantization rate threshold versus the resolution  $l$.}
	\label{fig:bound}
\end{figure}

	The results presented in this section demonstrate the potential of \ac{sqrss} as a quantization scheme for sparse signals which is both accurate as well as suitable for implementation with conventional serial scalar \acp{adc}, Our results also demonstrate the ability of \ac{sqrss} to implement distributed quantization and its robustness to digital noise.

	\section{Conclusions}\label{sec:conclusions}
	In this paper we proposed \ac{sqrss}, a quantization system designed for representing sparse signals acquired in a sequential manner. \ac{sqrss} combines code structures from group testing theory with the limitations and characteristics of conventional \acp{adc}. We derived the achievable \ac{mse} of the proposed scheme in the asymptotic signal size regime and characterized its  complexity.
	\ifFullVersion
	We proposed a reduced complexity decoding method for \ac{sqrss} which trades performance for computational burden, while maintaining the sequential acquisition property of \ac{sqrss}, and showed how \ac{sqrss} can be extended to distributed setups.
	\fi 
	Our simulation study demonstrates the substantial performance gain of \ac{sqrss} compared to directly applying a serial scalar \ac{adc}, as well as to \ac{cs}-based methods.
	
	
	\begin{appendix}
		%
		To prove the \Cref{direct theorem}, we first  provide a reliability bound which guarantees accurate reconstruction of the quantized representation $\{ Q(s[i])\}_{i=1}^{\lenT}$ from $\myY_{\lenT} $. Then, we show that this bound results in the condition on the quantization rate stated in \Cref{direct theorem}. An achievability bound on the required number of bits  is stated in the following lemma:
		\begin{lemma}\label{direct lemma1}  If for some $\varepsilon > 0$  independent of $\lenT$ and $\SpaSize$, the number of bits used for  digital representation  satisfies
			\begin{eqnarray}\label{eq:reduce_hw2}
			b \ge \max_{1 \leq i \leq k}\frac{(1+\varepsilon)k}{i}\log\binom{\lenT-k}{i}l^i,
			\end{eqnarray}
			then, under the code construction of Section \ref{sec:sqrss}, as $\lenT\rightarrow \infty$ the average error probability to recover $\{ Q(s[i])\}_{i=1}^{\lenT}$, given by $\frac{1}{\lenT} \sum\limits_{i=1}^{\lenT}\Pr\left(\hat{s}_i \neq Q(s[i]) \right)$, approaches zero exponentially.
		\end{lemma}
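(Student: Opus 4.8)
The plan is to view the SQuaTS codeword selection as a non-binary non-adaptive group testing instance and run the standard union-bound achievability argument, following the lines of \cite{atia2012boolean,chan2014non,9218939}. At each bin $i$ the encoder picks one of $\lenL+1$ codewords, namely the all-zero codeword $\myCodeword_{0}$ when $Q(s[i])=\ScaQuant_{0}$ and otherwise the codeword $\myCodeword_{j,i}$ indexed by the level $\ScaQuant_{j}=Q(s[i])$; each non-trivial codeword is an i.i.d.\ $\mathrm{Bernoulli}(p)$ vector of length $\bits$ with $p\triangleq\frac{\ln 2}{\SpaSize}$, and $\myY_{\lenT}$ is the Boolean OR of the at most $\SpaSize$ selected non-trivial codewords. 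Recovering $\{Q(s[i])\}_{i=1}^{\lenT}$ is equivalent to recovering the selected codeword in each bin, and in the noiseless Boolean model the ML likelihood of a candidate set of codewords equals $1$ if their Boolean OR equals $\myY_{\lenT}$ and $0$ otherwise, so ML decoding simply returns any configuration consistent with $\myY_{\lenT}$. Hence $\frac{1}{\lenT}\sum_{i=1}^{\lenT}\Pr(\hat{s}_i\ne Q(s[i]))\le\Pr(\mathcal{E})$, where $\mathcal{E}$ is the event that some configuration other than the true one (a choice of $\SpaSize$ bins together with a codeword index in each) has Boolean OR equal to $\myY_{\lenT}$; it therefore suffices to show $\Pr(\mathcal{E})\to 0$.

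I would decompose $\mathcal{E}$ by the number $i\in\{1,\dots,\SpaSize\}$ of defective bins on which the competing configuration disagrees with the truth, so $\mathcal{E}\subseteq\bigcup_{i=1}^{\SpaSize}\mathcal{E}_{i}$, and union-bound each level. Up to lower-order combinatorial terms, the number of competing configurations at level $i$ is $\binom{\SpaSize}{i}\binom{\lenT-\SpaSize}{i}\lenL^{i}$ — choose which $i$ true codewords to drop, and which $i$ fresh non-defective bins and codeword indices to insert (swapping a codeword within an already-defective bin contributes only $O\bigl(\binom{\SpaSize}{i}\lenL^{i}\bigr)$ and is negligible once $\lenT\to\infty$). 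For a fixed such configuration I would compute the probability that the OR of its $\SpaSize$ codewords equals $\myY_{\lenT}$ by classifying each of the $\bits$ coordinates as covered by a retained true codeword (no constraint on the fresh codewords), covered only by a dropped true codeword (at least one fresh codeword must be $1$ there), or covered by no true codeword (all fresh codewords must be $0$ there); these counts are multinomial with parameters $1-(1-p)^{\SpaSize-i}$, $(1-p)^{\SpaSize-i}\bigl(1-(1-p)^{i}\bigr)$ and $(1-p)^{\SpaSize}$, and taking the multinomial moment collapses to the closed form
\begin{equation}
\label{eq:masking-coverage}
\Pr\bigl(\text{OR}=\myY_{\lenT}\bigr)=\Bigl(1-2(1-p)^{\SpaSize}\bigl(1-(1-p)^{i}\bigr)\Bigr)^{\bits}.
\end{equation}

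The punchline is the design choice $p=\frac{\ln 2}{\SpaSize}$: it is exactly the value for which $(1-p)^{\SpaSize}\to\frac12$, so the base in \eqref{eq:masking-coverage} tends to $(1-p)^{i}$ and the per-configuration probability behaves like $2^{-i\bits/\SpaSize}$ up to a $1+o(1)$ factor in the exponent. Consequently $\Pr(\mathcal{E}_{i})\lesssim\binom{\SpaSize}{i}\binom{\lenT-\SpaSize}{i}\lenL^{i}\,2^{-i\bits/\SpaSize}$, which vanishes — and exponentially fast in $\lenT$ when $\bits$ grows linearly in $\lenT$ as in Theorem~\ref{direct theorem}, since $\log\bigl(\binom{\lenT-\SpaSize}{i}\lenL^{i}\bigr)=O(\log\lenT)$ and $\binom{\SpaSize}{i}\le 2^{\SpaSize}=O(1)$ under $\SpaSize=\mathcal{O}(1)$ — precisely when $\bits\ge\frac{(1+\varepsilon)\SpaSize}{i}\log\bigl(\binom{\lenT-\SpaSize}{i}\lenL^{i}\bigr)$; maximizing over $i$ and summing the $\SpaSize=\mathcal{O}(1)$ terms $\Pr(\mathcal{E}_{i})$ gives the lemma. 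I expect the main obstacle to be the exact evaluation \eqref{eq:masking-coverage}: it is tempting to only require the fresh codewords to be \emph{contained in} $\mathrm{supp}(\myY_{\lenT})$, but this loses a factor $2$ in the exponent (yielding a $2\SpaSize/i$ threshold instead of $\SpaSize/i$) and does not reproduce the stated rate — keeping the covering constraint on the coordinates hit only by dropped codewords is exactly what doubles the $(1-p)^{\SpaSize}$ term. A secondary point is making $(1-\frac{\ln 2}{\SpaSize})^{\SpaSize}\to\frac12$ quantitative, so that the slack $\varepsilon$ cleanly absorbs the finite-$\SpaSize$ corrections and the $\binom{\SpaSize}{i}$ factors.
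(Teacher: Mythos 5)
Your argument is correct and is essentially the approach the paper relies on: the paper's own proof of Lemma~\ref{direct lemma1} is a one-line citation to \cite[Lemma 1]{wsn2017drivejornal} (in turn based on \cite[Lemma 2]{9218939}), and those lemmas are proved by exactly the Atia--Saligrama-style ML/union-bound argument you reconstruct --- decomposition by the number $i$ of misidentified items, the per-configuration consistency probability $\bigl(1-2(1-p)^{k}(1-(1-p)^{i})\bigr)^{b}$, and the design choice $p=\ln 2/k$ making $(1-p)^{k}\to 1/2$ so that the base collapses to $(1-p)^{i}\approx 2^{-i/k}$. Your count $\binom{k}{i}\binom{T-k}{i}l^{i}$ with the $\binom{k}{i}$ and within-bin-swap terms absorbed into the $(1+\varepsilon)$ slack under $k=\mathcal{O}(1)$ reproduces the stated threshold, so the proposal is a faithful self-contained version of the cited proof.
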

		\begin{proof}
			The Lemma follows from \cite[Lemma 1]{wsn2017drivejornal}, whose proof  is based on \cite[Lemma 2]{9218939}.
		\end{proof}
		We note that the corresponding bound in \cite[Theorem III.1]{atia2012boolean}, which studied group testing over a binary field, can be considered as a special case of Lemma \ref{direct lemma1} with $\lenL =1$, i.e., using one bit quantizers.
		In particular, since we consider quantizers with arbitrary resolution, the bound in \Cref{direct lemma1} must account for  the fact that  the codewords have to be selected from different bins, as $\lenL$ can be larger than one. 
		
		Now, \Cref{direct lemma1} yields a sufficient condition for the digital representation $\{\hat{s}[i]\}$ to approach the directly quantized $\{{s}[i]\}$, for which the \ac{mse} is $	D_{\lenT}(\lenL) $ given in \eqref{eqn:ScaMSE2}. Since  the quantization rate is given by $\Rate = \frac{\bits}{\lenT}$,  the condition \eqref{eq:reduce_hw2} becomes
		\begin{equation*}
		\Rate \ge \max_{1 \leq i \leq \SpaSize}\frac{(1+\varepsilon)k}{i\cdot \lenT}\log\left( \binom{\lenT-\SpaSize}{i}\cdot \lenL^i\right),
		\end{equation*} 	
		proving the theorem.
		\qed
	\end{appendix}
	
	\bibliographystyle{IEEEtran}
	\bibliography{references,SecureNetworkCodingGossip}

\begin{thebibliography}{10}
\providecommand{\url}[1]{#1}
\csname url@samestyle\endcsname
\providecommand{\newblock}{\relax}
\providecommand{\bibinfo}[2]{#2}
\providecommand{\BIBentrySTDinterwordspacing}{\spaceskip=0pt\relax}
\providecommand{\BIBentryALTinterwordstretchfactor}{4}
\providecommand{\BIBentryALTinterwordspacing}{\spaceskip=\fontdimen2\font plus
\BIBentryALTinterwordstretchfactor\fontdimen3\font minus
  \fontdimen4\font\relax}
\providecommand{\BIBforeignlanguage}[2]{{%
\expandafter\ifx\csname l@#1\endcsname\relax
\typeout{** WARNING: IEEEtran.bst: No hyphenation pattern has been}%
\typeout{** loaded for the language `#1'. Using the pattern for}%
\typeout{** the default language instead.}%
\else
\language=\csname l@#1\endcsname
\fi
#2}}
\providecommand{\BIBdecl}{\relax}
\BIBdecl

\bibitem{gray1998quantization}
R.~M. Gray and D.~L. Neuhoff, ``Quantization,'' vol.~44, no.~6, pp. 2325--2383,
  1998.

\bibitem{C10}
T.~M. Cover and J.~A. Thomas, \emph{Elements of information theory}.\hskip 1em
  plus 0.5em minus 0.4em\relax John Wiley \& Sons, 2012.

\bibitem{kosonocky1999analog}
S.~Kosonocky and P.~Xiao, ``Analog-to-digital conversion architectures,''
  \emph{Digital Signal Processing Handbook}, 1999.

\bibitem{polyanskiy2014lecture}
Y.~Polyanskiy and Y.~Wu, ``Lecture notes on information theory,'' \emph{Lecture
  Notes for ECE563 (UIUC) and}, vol.~6, pp. 2012--2016, 2014.

\bibitem{bhatt2018information}
A.~Bhatt, B.~Nazer, O.~Ordentlich, and Y.~Polyanskiy, ``Information-distilling
  quantizers,'' \emph{arXiv preprint arXiv:1812.03031}, 2018.

\bibitem{barnes2019learning}
L.~P. Barnes, Y.~Han, and A.~Ozgur, ``Learning distributions from their samples
  under communication constraints,'' \emph{arXiv preprint arXiv:1902.02890},
  2019.

\bibitem{shlezinger2018hardware}
N.~Shlezinger, Y.~C. Eldar, and M.~Rodrigues, ``Hardware-limited task-based
  quantization,'' vol.~67, no.~20, pp. 5223--5238, 2019.

\bibitem{shlezinger2018asymptotic}
N.~Shlezinger, Y.~C. Eldar, and M.~R. Rodrigues, ``Asymptotic task-based
  quantization with application to massive {MIMO},'' vol.~67, no.~15, pp.
  3995--4012, 2019.

\bibitem{salamatian2019task}
S.~Salamatian, N.~Shlezinger, Y.~C. Eldar, and M.~M{\'e}dard, ``Task-based
  quantization for recovering quadratic functions using principal inertia
  components,'' in \emph{Proc. IEEE ISIT}, 2019.

\bibitem{shlezinger2019deep}
N.~Shlezinger and Y.~C. Eldar, ``Deep task-based quantization,'' \emph{arXiv
  preprint arXiv:1908.06845}, 2019.

\bibitem{gubner1993distributed}
J.~A. Gubner, ``Distributed estimation and quantization,'' vol.~39, no.~4, pp.
  1456--1459, 1993.

\bibitem{lam1993design}
W.-M. Lam and A.~R. Reibman, ``Design of quantizers for decentralized
  estimation systems,'' vol.~41, no.~11, pp. 1602--1605, 1993.

\bibitem{berger1996ceo}
T.~Berger, Z.~Zhang, and H.~Viswanathan, ``The {CEO} problem [multiterminal
  source coding],'' vol.~42, no.~3, pp. 887--902, 1996.

\bibitem{oohama1998rate}
Y.~Oohama, ``The rate-distortion function for the quadratic gaussian ceo
  problem,'' vol.~44, no.~3, pp. 1057--1070, 1998.

\bibitem{el2011network}
A.~El~Gamal and Y.-H. Kim, \emph{Network information theory}.\hskip 1em plus
  0.5em minus 0.4em\relax Cambridge university press, 2011.

\bibitem{shlezinger2019joint}
N.~Shlezinger, S.~Salamatian, Y.~C. Eldar, and M.~M{\'e}dard, ``Joint sampling
  and recovery of correlated sources,'' in \emph{Proc. IEEE ISIT}, 2019.

\bibitem{saxena2006efficient}
A.~Saxena, J.~Nayak, and K.~Rose, ``On efficient quantizer design for robust
  distributed source coding,'' in \emph{Proc. IEEE DCC}, 2006, pp. 63--72.

\bibitem{wernersson2009distributed}
N.~Wernersson, J.~Karlsson, and M.~Skoglund, ``Distributed quantization over
  noisy channels,'' vol.~57, no.~6, pp. 1693--1700, 2009.

\bibitem{fleming2004network}
M.~Fleming, Q.~Zhao, and M.~Effros, ``Network vector quantization,'' vol.~50,
  no.~8, pp. 1584--1604, 2004.

\bibitem{wagner2012compressed}
N.~Wagner, Y.~C. Eldar, and Z.~Friedman, ``Compressed beamforming in ultrasound
  imaging,'' vol.~60, no.~9, pp. 4643--4657, 2012.

\bibitem{shechtman2014gespar}
Y.~Shechtman, A.~Beck, and Y.~C. Eldar, ``{GESPAR}: Efficient phase retrieval
  of sparse signals,'' vol.~62, no.~4, pp. 928--938, 2014.

\bibitem{rossi2014spatial}
M.~Rossi, A.~M. Haimovich, and Y.~C. Eldar, ``Spatial compressive sensing for
  {MIMO} radar,'' vol.~62, no.~2, pp. 419--430, 2014.

\bibitem{berger2010application}
C.~R. Berger, Z.~Wang, J.~Huang, and S.~Zhou, ``Application of compressive
  sensing to sparse channel estimation,'' vol.~48, no.~11, pp. 164--174, 2010.

\bibitem{feizi2011power}
S.~Feizi and M.~M{\'e}dard, ``A power efficient sensing/communication scheme:
  Joint source-channel-network coding by using compressive sensing,'' in
  \emph{Allerton Conference on Communication, Control, and Computing}, 2011,
  pp. 1048--1054.

\bibitem{eldar2012compressed}
Y.~C. Eldar and G.~Kutyniok, \emph{Compressed sensing: theory and
  applications}.\hskip 1em plus 0.5em minus 0.4em\relax Cambridge University
  Press, 2012.

\bibitem{duarte2011structured}
M.~F. Duarte and Y.~C. Eldar, ``Structured compressed sensing: From theory to
  applications,'' vol.~59, no.~9, pp. 4053--4085, 2011.

\bibitem{jacques2013robust}
L.~Jacques, J.~N. Laska, P.~T. Boufounos, and R.~G. Baraniuk, ``Robust 1-bit
  compressive sensing via binary stable embeddings of sparse vectors,''
  vol.~59, no.~4, pp. 2082--2102, 2013.

\bibitem{boufounos20081}
P.~T. Boufounos and R.~G. Baraniuk, ``1-bit compressive sensing,'' in
  \emph{Proc. IEEE CISS}, 2008, pp. 16--21.

\bibitem{jacques2011dequantizing}
L.~Jacques, D.~K. Hammond, and J.~M. Fadili, ``Dequantizing compressed sensing:
  When oversampling and non-gaussian constraints combine,'' vol.~57, no.~1, pp.
  559--571, 2011.

\bibitem{gunturk2010sigma}
C.~S. G{\"u}nt{\"u}rk, M.~Lammers, A.~Powell, R.~Saab, and {\"O}.~Yilmaz,
  ``Sigma delta quantization for compressed sensing,'' in \emph{Proc. IEEE
  CISS}, 2010.

\bibitem{kipnis2018single}
A.~Kipnis, G.~Reeves, and Y.~C. Eldar, ``Single letter formulas for quantized
  compressed sensing with gaussian codebooks,'' in \emph{Proc. IEEE ISIT},
  2018, pp. 71--75.

\bibitem{boufounos2015quantization}
P.~T. Boufounos, L.~Jacques, F.~Krahmer, and R.~Saab, ``Quantization and
  compressive sensing,'' in \emph{Compressed sensing and its
  applications}.\hskip 1em plus 0.5em minus 0.4em\relax Springer, 2015, pp.
  193--237.

\bibitem{saab2018quantization}
R.~Saab, R.~Wang, and {\"O}.~Y{\i}lmaz, ``Quantization of compressive samples
  with stable and robust recovery,'' \emph{Applied and Computational Harmonic
  Analysis}, vol.~44, no.~1, pp. 123--143, 2018.

\bibitem{sarvotham2005distributed}
S.~Sarvotham, D.~Baron, M.~Wakin, M.~F. Duarte, and R.~G. Baraniuk,
  ``Distributed compressed sensing of jointly sparse signals,'' in
  \emph{Asilomar conference on signals, systems, and computers}, 2005, pp.
  1537--1541.

\bibitem{baron2009distributed}
D.~Baron, M.~F. Duarte, M.~B. Wakin, S.~Sarvotham, and R.~G. Baraniuk,
  ``Distributed compressive sensing,'' \emph{arXiv preprint arXiv:0901.3403},
  2009.

\bibitem{do2009distributed}
T.~T. Do, Y.~Chen, D.~T. Nguyen, N.~Nguyen, L.~Gan, and T.~D. Tran,
  ``Distributed compressed video sensing,'' in \emph{Proc. IEEE ICIP}, 2009,
  pp. 1393--1396.

\bibitem{patterson2014distributed}
S.~Patterson, Y.~C. Eldar, and I.~Keidar, ``Distributed compressed sensing for
  static and time-varying networks,'' vol.~62, no.~19, pp. 4931--4946, 2014.

\bibitem{feizi2010compressive}
S.~Feizi, M.~M{\'e}dard, and M.~Effros, ``Compressive sensing over networks,''
  in \emph{Allerton Conference on Communication, Control, and Computing}, pp.
  1129--1136.

\bibitem{shirazinia2014distributed}
A.~Shirazinia, S.~Chatterjee, and M.~Skoglund, ``Distributed quantization for
  measurement of correlated sparse sources over noisy channels,'' \emph{arXiv
  preprint arXiv:1404.7640}, 2014.

\bibitem{leinonen2018distributed}
M.~Leinonen, M.~Codreanu, and M.~Juntti, ``Distributed distortion-rate
  optimized compressed sensing in wireless sensor networks,'' vol.~66, no.~4,
  pp. 1609--1623, 2018.

\bibitem{9218939}
A.~{Cohen}, A.~{Cohen}, and O.~{Gurewitz}, ``Secure group testing,'' \emph{IEEE
  Transactions on Information Forensics and Security}, pp. 1--1, 2020.

\bibitem{li1999asymptotic}
J.~Li, N.~Chaddha, and R.~M. Gray, ``Asymptotic performance of vector
  quantizers with a perceptual distortion measure,'' vol.~45, no.~4, pp.
  1082--1091, 1999.

\bibitem{eldar2015sampling}
Y.~C. Eldar, \emph{Sampling theory: Beyond bandlimited systems}.\hskip 1em plus
  0.5em minus 0.4em\relax Cambridge University Press, 2015.

\bibitem{wsn2017drivejornal}
A.~{Cohen}, A.~{Cohen}, and O.~{Gurewitz}, ``{Efficient Data Collection Over
  Multiple Access Wireless Sensors Network},'' \emph{IEEE/ACM Transactions on
  Networking}, vol.~28, no.~2, pp. 491--504, 2020.

\bibitem{dorfman1943detection}
R.~Dorfman, ``The detection of defective members of large populations,''
  \emph{The Annals of Mathematical Statistics}, vol.~14, no.~4, pp. 436--440,
  1943.

\bibitem{panter1951quantization}
P.~Panter and W.~Dite, ``Quantization distortion in pulse-count modulation with
  nonuniform spacing of levels,'' \emph{Proceedings of the IRE}, vol.~39,
  no.~1, pp. 44--48, 1951.

\bibitem{chan2014non}
C.~L. Chan, S.~Jaggi, V.~Saligrama, and S.~Agnihotri, ``Non-adaptive group
  testing: Explicit bounds and novel algorithms,'' vol.~60, no.~5, pp.
  3019--3035, 2014.

\bibitem{ziv1985universal}
J.~Ziv, ``On universal quantization,'' vol.~31, no.~3, pp. 344--347, 1985.

\bibitem{zamir1992universal}
R.~Zamir and M.~Feder, ``On universal quantization by randomized
  uniform/lattice quantizers,'' vol.~38, no.~2, pp. 428--436, 1992.

\bibitem{macula1999probabilistic}
A.~J. Macula, ``Probabilistic nonadaptive group testing in the presence of
  errors and {DNA} library screening,'' \emph{Annals of Combinatorics}, vol.~3,
  no.~1, pp. 61--69, 1999.

\bibitem{damaschke2010bounds}
P.~Damaschke and A.~Muhammad, ``Bounds for nonadaptive group tests to estimate
  the amount of defectives,'' \emph{Combinatorial Optimization and
  Applications}, pp. 117--130, 2010.

\bibitem{damaschke2010competitive}
P.~Damaschke and A.~S. Muhammad, ``Competitive group testing and learning
  hidden vertex covers with minimum adaptivity,'' \emph{Discrete Mathematics,
  Algorithms and Applications}, vol.~2, no.~03, pp. 291--311, 2010.

\bibitem{atia2012boolean}
G.~K. Atia and V.~Saligrama, ``Boolean compressed sensing and noisy group
  testing,'' vol.~58, no.~3, pp. 1880--1901, 2012. A minor corection appered in
  vol. 61, no. 3, pp. 1507-1507, 2015.

\bibitem{emad2014poisson}
A.~Emad and O.~Milenkovic, ``Poisson group testing: A probabilistic model for
  nonadaptive streaming boolean compressed sensing,'' in \emph{Proc. IEEE
  ICASSP}, 2014, pp. 3335--3339.

\bibitem{aldridge2014group}
M.~Aldridge, L.~Baldassini, and O.~Johnson, ``Group testing algorithms: bounds
  and simulations,'' vol.~60, no.~6, pp. 3671--3687, 2014.

\bibitem{coja2019information}
A.~Coja-Oghlan, O.~Gebhard, M.~Hahn-Klimroth, and P.~Loick,
  ``Information-theoretic and algorithmic thresholds for group testing,''
  \emph{arXiv preprint arXiv:1902.02202}, 2019.

\bibitem{bui2019efficient}
T.~V. Bui, M.~Kuribayashi, T.~Kojima, R.~Haghvirdinezhad, and I.~Echizen,
  ``Efficient (nonrandom) construction and decoding for non-adaptive group
  testing,'' \emph{Journal of Information Processing}, vol.~27, pp. 245--256,
  2019.

\bibitem{cohen2020multi}
A.~Cohen, N.~Shlezinger, A.~Solomon, Y.~C. Eldar, and M.~M{\'e}dard,
  ``Multi-level group testing with application to one-shot pooled covid-19
  tests,'' \emph{arXiv preprint arXiv:2010.06072}, 2020.

\bibitem{dantzig2003max}
G.~Dantzig and D.~R. Fulkerson, ``On the max flow min cut theorem of
  networks,'' \emph{Linear inequalities and related systems}, vol.~38, pp.
  225--231, 2003.

\bibitem{jacques2013quantized}
L.~Jacques, K.~Degraux, and C.~De~Vleeschouwer, ``Quantized iterative hard
  thresholding: Bridging 1-bit and high-resolution quantized compressed
  sensing,'' \emph{arXiv preprint arXiv:1305.1786}, 2013.

\bibitem{beck2009fast}
A.~Beck and M.~Teboulle, ``A fast iterative shrinkage-thresholding algorithm
  for linear inverse problems,'' \emph{SIAM journal on imaging sciences},
  vol.~2, no.~1, pp. 183--202, 2009.

\bibitem{sejdinovic2010note}
D.~Sejdinovic and O.~Johnson, ``Note on noisy group testing: asymptotic bounds
  and belief propagation reconstruction,'' in \emph{Allerton Conference on
  Communication, Control, and Computing}, 2010, pp. 998--1003.

\bibitem{chan2011non}
C.~L. Chan, P.~H. Che, S.~Jaggi, and V.~Saligrama, ``Non-adaptive probabilistic
  group testing with noisy measurements: Near-optimal bounds with efficient
  algorithms,'' in \emph{Allerton Conference on Communication, Control, and
  Computing}, 2011, pp. 1832--1839.

\end{thebibliography}
	\off{\newpage
	\begin{figure}
	    \centering
	    {\includegraphics[trim=0cm -1.0cm 0cm 0.0cm, clip, width = 1 \columnwidth]{CoMa10_n50_k_2_2_d.eps}}
	    {\includegraphics[trim=0cm 0.0cm 0cm 0.0cm, clip, width = 1 \columnwidth]{CoMa10_n50_k_10_1_d.eps}}
	    \caption{\rev{Quantization systems comparison with CoMa decoder for $T = 50$. In the top panel $k=2$ and in the bottom panel $k = 10$.}}
	    \label{fig:Coma_per_d}
    \end{figure}
    \begin{figure}
	    \centering
	    {\includegraphics[trim=0cm 0.0cm 0cm 0cm, clip, width = 1        \columnwidth]{non_zero_samples_CoMa_2_d.eps}}
	    \caption{\rev{Quantization systems comparison for low sparsity rate, $T = 50$ and $k = 30$.}}
	    \label{fig:non_zero_samples_d}
    \end{figure}}

	\off{\begin{algorithm}[t!]
		\SetKwInOut{Input}{Input}
		\caption{CoMa Decoding.\label{CoMAalgo}}
		\small
			\Input{ $ \myY_\lenX  = (y_{1,\lenX}, \ldots, y_{\bits,\lenT})$, codebook $\{\myCodeword_{j,i}\}$.}
			\KwData{ $\mySet{C} \leftarrow \{(j,i) :  j \in\{1,\ldots,\lenL\}, i \in \mySet{\lenT}\}$.}
			\For{$i_b = 1$ to $\bits$}
			{
				\If{$\myY_{i_b,\lenT} = 0$}
				{
					$\mySet{C} \leftarrow \mySet{C} \backslash \{(j,i): (\myCodeword_{j,i})_{i_b} = 1\}$\;
				}
			}
			\For{$i=1$ to $\lenT$}
			{
				\uIf{$\exists j_i$ such that $(j_i,i) \in \mySet{C}$}
				{
					$\hat{s}[i] \leftarrow \ScaQuant_{j_i}$\;
				}
				\Else
				{
					$\hat{s}[i] \leftarrow \ScaQuant_{0}$\;
				}
			}
			\KwOut{Recovered time sequence  $\{\hat{s}[i] \} $.}
	\end{algorithm}}
	
	
	\off{\begin{prop}
		\label{pro:ComaRate}
		The \ac{mse} $D_{\lenT}(\lenL) $ is achievable by \ac{sqrss} with the CoMa decoder in the limit $\lenT \rightarrow \infty$ when for some $\varepsilon>0$, the quantization rate $\Rate$ satisfies:
		\begin{equation}
		\label{eqn:ComaRate}
		\Rate \ge \bar{\Rate}_{\varepsilon}(\lenL) \triangleq  \frac{(1+\varepsilon) e}{\lenT} \SpaSize \log \left( \lenT\cdot\lenL\right),
		\end{equation}
		where $e$ is the base of the natural logarithm. For finite and large $\lenT$, the probability of the \ac{mse} being larger than  $D_{\lenT}(\lenL) $ is at most $\lenT^{-\varepsilon}$.
	\end{prop}}
	\off{\begin{proof}
		The proof directly follows using similar arguments as in \cite{chan2014non}, where instead of $\lenT$ possible codewords, in the \ac{sqrss} system there are $\lenT\cdot \lenL$ possible codewords.
	\end{proof}}
	
	
	\off{\begin{corollary}
		\label{cor:ComaComp}
		\ac{sqrss} with the CoMa decoder achieves the \ac{mse} $D_{\lenT}(\lenL) $ in the limit $\lenT \rightarrow \infty$ with complexity on the order of $\mathcal{O}( \lenT\cdot \lenL \cdot \SpaSize \log (\lenT\cdot \lenL))$ operations.
	\end{corollary}}
	
	\off{\begin{IEEEproof}
		Algorithm~\ref{CoMAalgo} essentially scans over all the $\lenT \cdot \lenL$ codewords, comparing each to the $\bits$-bits binary $\myY_{\lenT}$. Consequently, its number of operations is on the order of $\mathcal{O}(\lenT \lenL\bits)$. Combining this with the observation that for $\bits=\mathcal{O}(\SpaSize \log \lenT + \SpaSize \log \lenL)$, \ac{sqrss} with the CoMa decoder achieves the \ac{mse}  $D_{\lenT}(\lenL) $  in the limit $\lenT \rightarrow \infty$ proves the corollary.
	\end{IEEEproof}}
	
	\off{\begin{algorithm}[t!]
		\SetKwInOut{Input}{Input}
		\SetKwFunction{CoMa}{CoMa}
		\SetKwFunction{ML}{ML}
		\SetKwProg{myproc}{Procedure}{}{}
		\caption{\rev{Multi-level Group Testing Decoding.\label{algo:MLGT}}}
		\small
		\rev{
			\Input{ $ \myY_\lenT  = (y_{1,\lenX}, \ldots, y_{\bits,\lenT})$, codebook $\{\myCodeword_{j,i}\}$.}
			\KwData{ $\cPNZ \leftarrow \{(j,i) :  j \in\{1,\ldots,\lenL\}, i \in \mySet{\lenT}\}$.}
			\BlankLine
			$\cPNZ\gets\text{DND}(\myY_\lenT,\{\myCodeword_{j,i}\})$ \Comment{$\cPNZ$ contains PNZ codewords}\;
            $\cNZ \gets\text{ML}(\myY_\lenT,\{\myCodeword_{j,i}\},\cPNZ)$ \Comment{$\cNZ$ contains NZ codewords}\;
            \BlankLine
            \For{$i=1$ to $\lenT$}
			{
				\uIf{$\exists j_i$ such that $(j_i,i) \in \cNZ$}
				{
					$\hat{s}[i] \leftarrow \ScaQuant_{j_i}$\;
				}
				\Else
				{
					$\hat{s}[i] \leftarrow \ScaQuant_{0}$\;
				}
			}
			\KwOut{Recovered time sequence  $\{\hat{s}[i] \} $.}
            \KwDataB{\hrulefill}
            \KwDataB{\CoMa \cite{chan2014non} but for $\bits = \lenT\Rate_{\varepsilon}(\lenL)$ (not for $\lenT\bar{\Rate}_{\varepsilon}(\lenL)$ as in Algo~\ref{CoMAalgo})} \vspace{-0.1cm}
            \KwDataB{\hrulefill}
            \myproc{\CoMa{$\myY_\lenT,\{\myCodeword_{j,i}\}$}}{
			\For{$i_b = 1$ to $\bits$}
			{
				\If{$\myY_{i_b,\lenT} = 0$}
				{
					$\cPNZ \leftarrow \cPNZ \backslash \{(j,i): (\myCodeword_{j,i})_{i_b} = 1\}$\;
				}
			}
			\KwRet{$\cPNZ$}}\vspace{0.1cm}
            \KwDataB{\hrulefill}
            \KwDataB{\ac{ml} \cite{atia2012boolean} only for $|\cPNZ|$ (not for $\lenT$ as in Subsection~\ref{subsec:decoder})}\vspace{-0.1cm}
            \KwDataB{\hrulefill}
            \myproc{\ML{$\myY_\lenT,\{\myCodeword_{j,i}\},\cPNZ$}}{
            $\arg\max_{w \in \big\{1,\ldots, {\cPNZ \choose \SpaSize}\big\}} \Pr\left(\myY_{\lenT}  \big|  \hat{\myMat{C}}_{\mySet{X}_w} \right)$\;
            \KwRet{$\cNZ \leftarrow \hat{\myMat{C}}_{\mySet{X}_w}$}
            }
	}		
	\end{algorithm}}

	\off{\begin{prop}\label{direct MLGT}
		The \ac{sqrss} system with multi-level group testing decoder applied to a sparse signal $\{s[i]\}_{i\in\mySet{\lenT}} $ with support size $\SpaSize=\mathcal{O}(1)$ achieves the average \ac{mse} $D_{\lenT}(\lenL)$ given in \eqref{eqn:ScaMSE2} in the limit $\lenT \rightarrow \infty$ when the quantization rate $\Rate$ satisfies \eqref{eqn:BoundRate}.
	\end{prop}
	\begin{proof}
	    In \cite{chan2014non} is demonstrated that CoMa finds the codewords that correspond to zero inputs without possible errors. Now, as the \ac{ml} decoder given in Subsection~\ref{subsec:decoder} is used in the second stage, the proof directly follows using similar arguments as in Theorem \ref{direct theorem} and Corollary~\ref{cor:BoundRate}.
	\end{proof}}
	\off{\begin{corollary}
		\label{cor:MLGTComp}
		\ac{sqrss} with multi-level group testing decoder achieves the \ac{mse} $D_{\lenT}(\lenL) $ in the limit $\lenT \rightarrow \infty$ with complexity on the order of $\mathcal{O}\left( \left(\lenT \cdot \lenL \cdot \SpaSize + \binom{|\cPNZ|}{\SpaSize}\lenL^\SpaSize \SpaSize^2\right)\log (\lenT\cdot \lenL)\right)$ operations.
	\end{corollary}
	\begin{proof}
	Algorithm~\ref{algo:MLGT} use in the first stage CoMa decoder over all the $\lenT \cdot \lenL$ codewords. Hence, the number of operations for this stage is as given in Corollary~\ref{cor:ComaComp}. In the second stage, \ac{ml} decoder is used but only for the set of $\cPNZ$ codewords. Hence, unlike Corollary~\ref{cor:complexity}, the number of operations needed in this stage is
	$\mathcal{O}\left(\binom{\cPNZ}{\SpaSize}\lenL^\SpaSize \SpaSize^2 \log (\lenT \cdot \lenL)\right)$. Consequently, in total the number of operations needed using multi-level group testing decoder is given by the sum of operations in both of the stages.
	\end{proof}
	}
\end{document}